\newtheorem{theorem}{Theorem}
\newtheorem{proposition}[theorem]{Proposition}
\newtheorem{lemma}[theorem]{Lemma}
\newtheorem{corollary}[theorem]{Corollary}
\newtheorem{definition}[theorem]{Definition}
\newtheorem{observation}[theorem]{Observation}
\newtheorem{remark}[theorem]{Remark}
\newtheorem{example}[theorem]{Example}
\def\cB{\mathcal{B}}
\def\cC{\mathcal{C}}
\def\cD{\mathcal{D}}
\def\cG{\mathcal{G}}
\def\cH{\mathcal{H}}
\def\cK{\mathcal{K}}
\def\cL{\mathcal{L}}
\def\cQ{\mathcal{C}}
\def\cS{\mathcal{S}}
\def\cT{\mathcal{T}}
\def\cW{\mathcal{W}}
\def\cX{\mathcal{X}}
\def\KG{\mathrm{KG}}
\def\KGG{\mathcal{KG}}
\def\HC{\mathrm{HC}}
\def\HCC{\mathcal{HC}}
\def\hom{\mathsf{hom}}
\def\HOM{\mathsf{HOM}}
\def\map{\mathsf{map}}
\def\MAP{\mathsf{MAP}}
\def\HCC{\mathcal{HC}}
\def\zG{\mathbb{G}}
\def\zH{\mathbb{H}}
\def\zS{\mathbb{S}}
\def\zT{\mathbb{T}}
\def\cW{\mathbb{W}}
\def\zX{\mathbb{X}}
\let\sse=\subseteq
\def\vc#1#2{{#1}_1\zd{#1}_{#2}}
\def\zd{,\dots,}
\let\vec=\overrightarrow
\let\vf=\varphi
\let\gm=\gamma
\let\sg=\sigma
\def\All{\mathsf{All}}
\def\size{\mathsf{size}}
\def\parr{\mathsf{par}}
\def\bC{\mathbf{C}}
\def\poly{\mathrm{poly}}
\begin{document}

\title{Counting homomorphisms in plain exponential time}

\author{Amineh Dadsetan\thanks{amineh.dadsetan@gmail.com}\ \  
and Andrei A. Bulatov\thanks{abulatov@sfu.ca}\thanks{This work 
was supported by an NSERC Discovery grant}}


\maketitle

\begin{abstract}
In the counting Graph Homomorphism problem ({\sc \#GraphHom})
the question is: Given graphs $G,H$, find the number of homomorphisms
from $G$ to $H$. This problem is generally \#P-complete, moreover, 
Cygan et al.\ proved that unless the ETH is false there is no algorithm
that solves this problem in time $O(|V(H)|^{o(|V(G)|)}$. This, however,
does not rule out the possibility that faster algorithms exist for restricted
problems of this kind. Wahlstr\"om proved that {\sc \#GraphHom} can be
solved in plain exponential time, that is, in time 
$k^{|V(G)|}\poly(|V(H)|,|V(G)|)$ provided $H$ has clique 
width $k$. We generalize this result to a larger class of graphs, and 
also identify several other graph classes that admit a plain exponential 
algorithm for {\sc \#GraphHom}.
\end{abstract}

\section{Introduction}

The Exponential Time Hypothesis (ETH) 
\cite{Impagliazzo99:complexity} essentially suggests that the 
{\sc Satisfiability} problem does not admit an algorithm that is 
significantly faster than the straightforward brute force algorithm.
The ETH has been widely used to obtain (conditional) lower bounds on 
the complexity of various problems, see 
\cite{Lokshtanov11:lower} for a fairly recent survey. It however
does not forbid nontrivial algorithms for many other hard problems.

One of such problems is the {\sc Graph Homomorphism} problem 
({\sc GraphHOM} for short). A homomorphism from a graph $G$
to a graph $H$ is a mapping $\vf\colon V(G)\to V(H)$ such that 
for any edge $ab\in E(G)$ the pair $\vf(a)\vf(b)$ is an edge of $H$.
{\sc GraphHOM} asks, given graphs $G$ and $H$, whether or not 
there exists a homomorphism from $G$ to $H$ 
\cite{Hell04:homomorphism}. In the counting version 
of this problem, denoted {\sc \#GraphHOM}, the goal is to 
find the number 
of homomorphisms from $G$ to $H$. These two problems can be
solved just by checking all possible mappings from a given graph $G$
to a given graph $H$, which takes time $O^*(|V(H)|^{|V(G)|})$, 
where $O^*$ denotes asymptotics up to a polynomial factor. 
Assuming the ETH Cygan et al.\ \cite{Cygan16:tight} proved that 
the general {\sc GraphHom} and therefore {\sc \#GraphHom} 
cannot be solved in time $|V(H)|^{o(|V(G)|)}$. Related hardness 
results have also been obtained in \cite{Traxler08:time,Chen06:strong}.

In spite of this result, there are several ways to restrict 
{\sc GraphHom} which
sometimes result in a problem admitting a faster algorithm. For 
graph classes $\cG,\cH$, {\sc GraphHom}$(\cG,\cH)$ denotes
the problem {\sc GraphHom} in which the input graphs $G,H$ 
belong to $\cG,\cH$, respectively. {\sc \#GraphHOM} can be restricted
in the same way.  Both problems have received much attention 
in their own rights and as a special case of a more general
Constraint Satisfaction Problem, and much is known about their 
computational complexity. In particular, it is known when 
{\sc GraphHom}$(-,\cH)$ \cite{Hell90:h-coloring} and 
{\sc \#GraphHom}$(-,\cH)$ \cite{Dyer00:complexity} are solvable 
in polynomial time and when they are NP- or \#P-complete. 
Symbol $-$ here means that an input graph is not restricted. Similarly,
it is known when {\sc GraphHom}$(\cG,-)$ \cite{Grohe07:other} and 
{\sc \#GraphHom}$(\cG,-)$ are solvable in polynomial time.

Here we are interested in such restrictions that give rise to 
problems solvable still in exponential time but much faster than 
brute force. Specifically, {\sc GraphHom}$(\cG,\cH)$ or 
{\sc \#GraphHom}$(\cG,\cH)$ is said to be solvable in 
\emph{plain exponential} time if there is a solution algorithm 
running in time $O^*(c^{|V(G)|+|V(H)|})$, where $c$ is a 
constant. If the problem {\sc \#GraphHom}$(-,\cH)$ is
solvable in plain exponential time, the class $\cH$ is said to 
be a \emph{plain exponential class}.

The most well known plain exponential class of graphs is $\cK$, 
the class of all cliques. Note that {\sc \#GraphHom}$(-,\cK)$ 
is equivalent to the {\sc \#Graph Colouring} problem, in which the 
problem is, given a graph $G$ and a number $k$, to find the 
number of $k$-colourings of $G$. A fairly straightforward dynamic 
programming algorithm solves this problem in time 
$O^*(3^{|V(G)|})$; we outline this algorithm in 
Example~\ref{exa:plain-expo-cw}. A more sophisticated algorithm
\cite{Koivisto06:algorithm} solves this problem in time 
$O^*(2^{|V(G)|})$. If $\cH$ is a class of graphs of tree width $k$ 
then {\sc \#GraphHom}$(-,\cH)$ is solvable in time 
$O^*((k+3)^{|V(G)|})$, see, \cite{Fomin07:exact}. For the class 
$\cD_c$ of graphs of degree at most $c$ the problems 
{\sc \#GraphHom}$(\cD_c)$ and {\sc \#GraphHom}$(\cD_c)$ 
can be solved in time $O^*(c^{|V(G)|})$ by a minor modification of 
the brute force enumeration algorithm. Finally, Wahlstr\"om 
\cite{Wahlstrom09:new}
obtained probably the most general result so far on plain exponential 
graph classes, proving that if $\cH$ only contains graphs of clique 
width $k$ then {\sc \#GraphHom}$(-,\cH)$ can be solved in time
$O^*((2k+1)^{|V(G)|+|V(H)|})$. 
The algorithm from \cite{Wahlstrom09:new}
is also dynamic programming and uses the representation of
(labeled) graphs of bounded clique width through a sequence of 
operations such as disjoint union, connecting vertices with
certain labels, and relabelling vertices (\emph{$k$-expressions}). 

In this paper we further expand the class of graphs for which
plain exponential counting algorithms are possible by adding 
one more operation to the construction of graphs of bounded
clique width. The new class of graphs includes families of 
graphs of unbounded clique width, for  instance, hypercubes,
and therefore is strictly larger than the class of graphs of 
bounded clique width. By means of this new set of operations 
one can define a new graph `width' measure that we call
\emph{extended clique width}. Graphs of extended clique 
width at most $k$ can also be represented by 
\emph{extended $k$-expressions}. Let $\cX_k$ denote the
class of graph of extended clique width at most $k$. 

We then show that given an arbitrary graph $G$, a graph 
$H$ of extended clique width $k$, and an extended 
$k$-expression $\Phi$ representing $H$, the number $\hom(G,H)$
of homomorphisms from $G$ to $H$ can be found in time
$O^*((2k)^{|G|})$. Similar to \cite{Wahlstrom09:new}, the 
algorithm is dynamic programming and iteratively computes 
numbers $\hom(G',H')$, where $G'$ is an induced subgraph
of $G$ and $H'$ is a graph represented by a subexpression of 
$\Phi$. Clearly, as one cannot assume that an extended 
$k$-expression representing $H$ is known in advance, 
this algorithm alone does not guarantee that $\cX_k$ is
plain exponential. However, we also show that given a graph
$H$ of extended width at most $k$, an extended $k$-expression
representing $H$ can be found in time $O^*((2k)^{|H|})$. 
Combined with the previous result we thus obtain the following

\begin{theorem}\label{the:main-intro}
For any fixed $k$ the class $\cX_k$ is plain exponential.
\end{theorem}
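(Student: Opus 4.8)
The plan is to assemble Theorem~\ref{the:main-intro} from the two algorithmic statements announced just above it, treating both as given. Fix $k$. To show that $\cX_k$ is plain exponential we must exhibit an algorithm that, on input an arbitrary graph $G$ and a graph $H\in\cX_k$ --- these being the only inputs to {\sc \#GraphHom}$(-,\cX_k)$, no decomposition of $H$ being supplied --- computes $\hom(G,H)$ in time $O^*(c^{|V(G)|+|V(H)|})$ for a constant $c$.

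The algorithm has two phases. In the first phase I would run the decomposition procedure on $H$ alone to produce an extended $k$-expression $\Phi$ representing $H$; by the second stated result this takes time $O^*((2k)^{|V(H)|})$, and since $H\in\cX_k$ is promised the procedure succeeds (in a non-promise variant, its failure would certify $H\notin\cX_k$). In the second phase I would feed $G$, $H$ and $\Phi$ into the counting algorithm of the first stated result, which returns $\hom(G,H)$ in time $O^*((2k)^{|V(G)|})$. The total running time is
\[
O^*\bigl((2k)^{|V(H)|}\bigr)+O^*\bigl((2k)^{|V(G)|}\bigr)\ \le\ O^*\bigl((2k)^{|V(G)|+|V(H)|}\bigr),
\]
where the inequality uses $(2k)^{|V(H)|}+(2k)^{|V(G)|}\le 2\,(2k)^{|V(G)|+|V(H)|}$ and absorbs the factor $2$ into $O^*$. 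Hence $\cX_k$ is plain exponential with constant $c=2k$, which is indeed a constant because $k$ is fixed, and the theorem follows.

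Once the two ingredients are in hand there is essentially no obstacle left; the content lies entirely in them. Within the paper as a whole I expect the genuinely difficult step to be the decomposition procedure underlying the second ingredient: unlike for ordinary clique width there is no off-the-shelf parsing or approximation algorithm to invoke, so one must argue directly that the new extended operation can be recognised and stripped off, and that the search over the exponentially many ways of doing so can be organised to stay within the $O^*((2k)^{|V(H)|})$ budget. The only point to watch when composing the two phases is that the expression $\Phi$ output in the first phase has size polynomial in $|V(H)|$; otherwise it would inflate the cost of the counting phase beyond the claimed bound. This is exactly where the polynomial slack in the definition of plain exponential is used, and it should be read off from the construction rather than needing a separate argument.
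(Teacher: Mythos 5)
Your proposal is correct and follows exactly the paper's route: the paper derives Theorem~\ref{the:main-intro} precisely by composing Theorem~\ref{the:k-expr-gen} (finding an extended $k$-expression for $H$ in time $O^*((4k+4)^{|V(H)|})$) with Theorem~\ref{main-theorem} (counting $\hom(G,H)$ given the expression in time $O^*(\size(\Phi)\,(2(k+1))^{2|V(G)|})$), the exact constants being irrelevant for plain exponentiality once $k$ is fixed. Your remark about the expression size inflating the second phase is the same implicit bookkeeping the paper relies on via the $\size(\Phi)$ factor, so nothing further is needed.
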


Apart from graphs of bounded extended clique width we identify
two less general plain exponential classes of graphs. The first one 
consists of subdivisions of 
cliques: Let $\cS$ be a class of graphs, then $\cK(\cS)$ denotes
the class of graphs $H$ obtained as follows. Take $H'\in\cS$, a 
clique on vertices $\{\vc vn\}$, and for any edge $v_iv_j$ of
the clique, $i\ne j$, replace this edge with a copy of $H'$, that is,
connect $v_i,v_j$ to all vertices of $H'$ and include all the edges 
of $H'$.

\begin{theorem}\label{the:subdivision}
For any plain exponential class $\cS$ of graphs, the class $\cK(\cS)$
is also plain exponential.
\end{theorem}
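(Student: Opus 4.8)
The plan is to show how to compute $\hom(G,H)$ for an arbitrary graph $G$ and a graph $H\in\cK(\cS)$ by reducing to polynomially many homomorphism-counting instances over graphs in $\cS$, each of which can be solved in plain exponential time by assumption. Write $H$ as the blow-up of a clique $K_n$ on vertices $\{\vc vn\}$ in which every edge $v_iv_j$ is replaced by a private copy $H'_{ij}$ of some fixed $H'\in\cS$, with $v_i,v_j$ joined to all of $V(H'_{ij})$. A homomorphism $\vf\colon G\to H$ is determined by first deciding, for each vertex $a\in V(G)$, to which ``part'' of $H$ it is sent: either to one of the clique vertices $v_i$, or into the interior of one of the copies $H'_{ij}$. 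I would encode this as a colouring $\lambda\colon V(G)\to \{1,\dots,n\}\cup\binom{[n]}{2}$, and observe that the number of $\vf$ refining a given $\lambda$ factors as a product over the connected pieces of the corresponding decomposition of $G$. The key combinatorial point is that the adjacency structure of $H$ forces $\lambda$ to be highly constrained, and in particular the preimages of the interior parts are ``insulated'' from each other.

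The central step is the structural observation about which $\lambda$ are feasible. Since any two interior copies $H'_{ij}$ and $H'_{i'j'}$ with $\{i,j\}\cap\{i',j'\}=\emptyset$ have no edges between them, and even when they share a clique vertex the only $H$-edges between $V(H'_{ij})$ and $V(H'_{i'j'})$ pass through that shared $v_i$, the set of edges of $G$ whose endpoints receive $\lambda$-colours in two distinct ``interior'' classes must be empty. Consequently, if we let $C_1,\dots,C_r$ be the connected components of $G$ after deleting all vertices $a$ with $\lambda(a)$ a clique label, each $C_t$ lies entirely inside a single copy $H'_{i_tj_t}$; moreover its neighbourhood in $G$ among the clique-labelled vertices must be mapped into $\{v_{i_t},v_{j_t}\}$. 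So a feasible $\lambda$ amounts to (i) a partial $n$-colouring of a subset $S\subseteq V(G)$ (the clique-mapped vertices) that is a homomorphism $G[S]\to K_n$, i.e.\ a proper colouring respecting edges, together with (ii) for each component $C_t$ of $G-S$, a choice of an unordered pair $\{i_t,j_t\}$ with $N_G(C_t)\cap S$ two-coloured by $\{i_t,j_t\}$ under $\lambda$.

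Given such a $\lambda$, the number of homomorphisms $G\to H$ extending it factors: each clique-labelled vertex is pinned to a single vertex $v_i$ of $H$ (one choice), and each component $C_t$ contributes independently the number of homomorphisms from the graph $C_t^{+}$ — the graph $C_t$ together with one or two extra apex vertices representing $v_{i_t}$ and/or $v_{j_t}$ adjacent to the appropriate boundary vertices of $C_t$ and to each other — into the graph $H'_{i_tj_t}$ with the same one or two apexes added. Both $C_t^{+}$ and the target are obtained from $G$-pieces and from $H'$ by adding at most two universal vertices and one edge; adding a bounded number of apex vertices to a graph changes $|V|$ by $O(1)$ and, since $\cS$ is plain exponential, one checks that the augmented targets still lie in a plain exponential class (a universal vertex is added by the clique-width/disjoint-union machinery, or one simply notes that $\hom$ into $H'$ with pinned apexes is a polynomial-size sum of $\hom$-values into $H'$ itself). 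So each component's contribution is computed in time $O^*(c^{|V(C_t)|})$, and since the $C_t$ are disjoint, the product over all $t$ costs $O^*(c^{|V(G)|})$ per $\lambda$.

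Finally I would sum over all feasible $\lambda$ by a single pass that is itself plain exponential: rather than enumerate all $(n+\binom n2)^{|V(G)|}$ colourings, enumerate the set $S\subseteq V(G)$ of clique-mapped vertices ($2^{|V(G)|}$ choices), for each $S$ run the $O^*(3^{|V(G)|})$ (or $O^*(2^{|V(G)|})$) dynamic program of Example~\ref{exa:plain-expo-cw} to range over proper $n$-colourings of $G[S]$ — carrying along, for each component $C_t$ of $G-S$, enough information to know the colour pair on its boundary — and for each resulting configuration multiply in the precomputed component contributions. The total running time is $O^*(c'^{|V(G)|})$ for a constant $c'$ depending only on $c$, and the number of calls made to the assumed plain exponential algorithm for $\cS$, as well as the sizes of the graphs passed to it, are polynomially bounded in $|V(H)|$; hence $\cK(\cS)$ is plain exponential. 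The main obstacle I anticipate is the bookkeeping in this last step: one must merge the enumeration over $S$, the proper-colouring dynamic program on $G[S]$, and the component contributions without ever materialising a colouring list of superexponential size, and one must handle degenerate cases (components with empty boundary, or boundary meeting only one clique vertex, or $G$ having isolated vertices) so that the pair $\{i_t,j_t\}$ is correctly summed over — these contribute extra polynomial factors but no extra exponential cost.
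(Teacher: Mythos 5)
Your decomposition is the same one the paper uses: guess the set $S$ of vertices mapped to the clique vertices, observe that each connected component $\bC$ of $G-S$ must land inside a single copy of $H'$ with its $G$-neighbourhood in $S$ sent to the (at most two) clique vertices attached to that copy, and note that the contribution of each component is $\hom(\bC,H')$ (times a factor of $n-1$ when the boundary is monochromatic), computable by the assumed algorithm for $\cS$. The genuine gap is in the last step, which you yourself flag as the ``main obstacle'': summing over the colour assignment to $S$ together with a consistent choice of pair $\{i_t,j_t\}$ for every component. Your plan is to run the colouring dynamic program on $G[S]$ while ``carrying along, for each component $C_t$ of $G-S$, enough information to know the colour pair on its boundary.'' Taken literally, that information is a pair from $\binom{[n]}{2}$ for each of up to $|V(G)|$ components, so the state space is of size $n^{\Theta(|V(G)|)}=|V(H)|^{\Theta(|V(G)|)}$ --- exactly the brute force you are trying to beat, since $n$ is not a constant. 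What makes the summation feasible is that, because all copies of $H'$ are isomorphic, the contribution of a component depends only on \emph{whether} its boundary sees one colour or two, never on \emph{which} colours; so a constant amount of per-component state suffices. The paper implements this by adjoining two tokens $\bC^0,\bC^1$ per component to the ground set, encoding all feasibility constraints into a set function $f$, and invoking Koivisto's $O^*(2^{|M|})$ algorithm for sums of $\prod_i f(P_i)$ over ordered $n$-partitions (Theorem~\ref{sum-over-partitions}); this is the missing device in your argument (an equivalent subset DP with constant per-component state would also do, but it has to be specified --- it is the heart of the proof, not bookkeeping).

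A secondary error: under the paper's definition of subdivision the original clique edges $v_iv_j$ are \emph{removed} (the subdivision of $uv$ has edge set $E(H')\cup\bigcup_t\{ut,vt\}$ only), so two adjacent vertices of $G$ can never both be mapped to clique vertices. Hence the correct feasibility condition on $S$ is that $S$ is an independent set, not that $\lambda$ restricted to $S$ is a proper colouring $G[S]\to K_n$; as written, your count would include maps sending an edge of $G$ to the non-adjacent pair $v_i,v_{i'}$, which are not homomorphisms. The same misreading appears in making your two apex vertices adjacent ``to each other.'' The apex construction is in any case unnecessary: since each clique vertex attached to a copy of $H'$ is adjacent to all of that copy, the constraints from boundary vertices are vacuous once the boundary colours lie in the allowed pair, and the component contribution is just $\hom(\bC,H')$, which is how the paper avoids your worry about whether augmented targets remain in a plain exponential class.
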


The second class consists of well studied Kneser graphs: $\KGG_k$ 
is the class of graphs, whose vertices are the $k$-element subsets 
of a certain set, and two vertices are connected if and only if the 
corresponding subsets are disjoint. 

\begin{theorem}\label{the:Kneser}
For every $k$ the class $\KGG_k$ is plain exponential.
\end{theorem}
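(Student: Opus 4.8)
The plan is to reduce counting homomorphisms into a Kneser graph $\KG_{n,k}$ (the $k$-subsets of an $n$-set, adjacent iff disjoint) to a bounded-size family of counting problems that can be solved by inclusion--exclusion over the ground set. Recall that $|V(\KG_{n,k})| = \binom{n}{k}$, so $|H|$ is polynomial in $n$ only for fixed $k$; the key point is that we must get a bound of the form $O^*(c^{|V(G)|+|V(H)|})$ with $c$ absolute, and since $|V(H)| = \binom{n}{k} = \Theta(n^k)$ for fixed $k$, it actually suffices to obtain a running time $O^*(c^{|V(G)|}\cdot 2^{O(n)})$ — anything exponential in $n$ (rather than in $\binom nk$) is absorbed into the plain exponential bound because $2^{O(n)} = 2^{o(|V(H)|^{1/k}\cdot k)}$... more carefully, $2^{O(n)}$ is polynomial in $2^n$, and we need it to be $O^*(c^{|V(H)|})$; since $|V(H)|=\binom nk\ge n$ for $n\ge k$, indeed $2^{O(n)}\le 2^{O(|V(H)|)}$. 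So the real target is an algorithm running in time $O^*(c^{|V(G)|}2^{O(n)})$.

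First I would set up the inclusion--exclusion. A homomorphism $\vf\colon V(G)\to V(\KG_{n,k})$ assigns to each $u\in V(G)$ a $k$-set $S_u\subseteq [n]$, subject to: $|S_u|=k$ for all $u$, and $S_u\cap S_v=\emptyset$ for all $uv\in E(G)$. The standard trick is to drop the size constraint: for a fixed subset assignment to be counted, use Möbius inversion over the lattice of which ground-set elements are ``forbidden.'' Concretely, for each $u$ introduce a variable $T_u\subseteq[n]$ and count assignments where $S_u\subseteq T_u$ for chosen $T_u$'s, then invert. The cleaner route is the classical one for counting colorings/independent-set-like constraints: define $f(A_1,\dots)$ via summing over all maps $V(G)\to 2^{[n]}$ with the disjointness-on-edges constraint but without the cardinality constraint, which factors nicely, and then recover the exact-size count by inclusion--exclusion on $\sum_{u}$ (``at most'' vs.\ ``exactly'' $k$, element by element). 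The payoff: the relaxed count, ``number of maps $u\mapsto S_u\subseteq[n]$ with $S_u\cap S_v=\emptyset$ for $uv\in E(G)$ and $S_u\subseteq R$ for all $u$ for a prescribed $R\subseteq[n]$,'' can be computed because, processing the $n$ ground-set elements one at a time, each element $i\in[n]$ independently gets assigned an independent set of $G$ (the set of $u$ with $i\in S_u$ must form an independent set in $G$), so this relaxed quantity equals $i(G)^{|R|}$ where $i(G)$ is the number of independent sets of $G$ — wait, the subset constraint interacts, so more precisely it is a product over the $n$ coordinates, each contributing the number of independent sets of $G$ (if $i\in R$) or $1$ (if $i\notin R$), i.e.\ $i(G)^{|R|}$.

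The second step is to handle the exact-cardinality requirement $|S_u|=k$. Since $k$ is fixed, I would track, for each vertex $u$, how many of the ``chosen'' coordinates it uses; but naively that is $(k{+}1)^{|V(G)|}$ states — too much. Instead, use a generating-function / finite-difference version of inclusion--exclusion: the number of homomorphisms equals
\[
\hom(G,\KG_{n,k}) \;=\; \sum_{\emptyset\ne \text{(choice over coords)}} (\pm1)\cdot(\text{relaxed count}),
\]
and by introducing a formal variable per ground-set element one can extract the coefficient of $\prod x_i^{\le k}$-type terms. The honest approach: compute, by processing ground elements one at a time with a dynamic program over the ``partial degree profile'' but compressed — since what matters for correctness is only the multiset of per-vertex partial sizes, and we only care about final profile $(k,k,\dots,k)$, use the transfer-matrix/polynomial method: associate to each independent set $I$ of $G$ the monomial $\prod_{u\in I} x_u$, sum over independent sets to get a polynomial $P(x_1,\dots,x_{|V(G)|})$, and then $\hom(G,\KG_{n,k})$ is the coefficient of $\prod_u x_u^k$ in $P(x_1,\dots)^n$. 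This coefficient extraction is done by evaluating $P^n$ at $O(k)^{|V(G)|} = O^*((k{+}1)^{|V(G)|})$ many points (roots of unity or integer interpolation points, one per variable, each ranging over $k+1$ values) and inverting a product of Vandermonde-type systems; each evaluation of $P$ at a point costs $O^*(i(G)) = O^*(2^{|V(G)|})$, or better, $P$ at a numeric point is itself computable in time $O^*(2^{|V(G)|})$ by summing over independent sets, and $P^n$ at that point is then one $n$-th power. Total: $O^*((k{+}1)^{|V(G)|}\cdot 2^{|V(G)|}\cdot n) = O^*((2k{+}2)^{|V(G)|})$, with only polynomial dependence on $n$, hence polynomial in $|V(H)|$ — well within plain exponential.

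The main obstacle I expect is making the coefficient-extraction step genuinely run in the claimed time: one must (i) choose an evaluation scheme (e.g.\ over $\mathbb{Z}$ with $x_u\in\{0,1,\dots,k\}$, which suffices since $P^n$ has degree $\le n$ in each variable... actually degree in $x_u$ is at most $n$, not $k$, so one needs $n+1$ points per variable, giving $(n{+}1)^{|V(G)|}$ — too many!). So the real fix is to observe $S_u\subseteq[n]$ and $|S_u|=k$ means we should instead extract via a bounded number of evaluations per variable by working modulo the relation that we only want the ``$=k$'' slice: use the substitution that kills degrees $>k$, namely evaluate at $(k{+}1)$-th roots of unity scaled appropriately, or — cleanest — replace the variable $x_u$ by a ``truncated'' indeterminate and use that the exact-size-$k$ constraint per vertex is an inclusion--exclusion over $j=0,\dots,k$ of ``size $\le j$'' counts, each of which does factor over coordinates. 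I would therefore present the final algorithm as: for each vector $(j_u)_{u\in V(G)}\in\{0,\dots,k\}^{V(G)}$, compute by the coordinate-product method the number $N(j)$ of relaxed assignments with $|S_u|\le j_u$ (this equals a clean product formula involving counts of independent sets with bounded intersections), then combine the $N(j)$ with binomial coefficients to isolate $|S_u|=k$ for all $u$; the number of vectors is $(k{+}1)^{|V(G)|}$ and each $N(j)$ costs $O^*(2^{|V(G)|}\cdot\poly(n))$, yielding $O^*((2k{+}2)^{|V(G)|})$ overall, polynomial in $|V(H)|$, proving that $\KGG_k$ is plain exponential. Verifying the product formula for $N(j)$ — i.e.\ that after fixing the ``budget'' $j_u$ the relaxed count really does factor over the $n$ ground-set elements — is the one place where care is needed, since the budgets couple the coordinates; the resolution is that one instead does one more layer of inclusion--exclusion (or a per-coordinate DP over the residual budgets), and here the fixed value of $k$ is what keeps everything under control.
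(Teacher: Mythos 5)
Your starting identity is correct and is genuinely different from the paper's route: writing a homomorphism $G\to\KG_{n,k}$ as a choice, for each ground element $i\in[n]$, of the independent set $I_i=\{u: i\in S_u\}$, so that $\hom(G,\KG_{n,k})$ is the coefficient of $\prod_u x_u^k$ in $P(x)^n$, where $P$ is the multivariate independent-set polynomial of $G$. (The paper instead blows each vertex of $G$ up into a $k$-clique, observes that $n$-colourings of the blow-up $G^{(k)}$ correspond $(k!)^{|V(G)|}$-to-one to homomorphisms $G\to\KG_{n,k}$, and invokes Koivisto's $O^*(2^{|V|})$ colouring counter, getting $O^*(2^{k|V(G)|})$.) The problem is that you never actually extract that coefficient. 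Interpolation needs $n+1$ points per variable, as you note; $(k+1)$-th roots of unity only isolate exponents congruent to $k$ modulo $k+1$, not exponent exactly $k$; and the algorithm you finally ``present'' rests on a ``clean product formula'' for $N(j)$, the number of relaxed assignments with $|S_u|\le j_u$, which does not exist --- the per-vertex budgets couple the $n$ coordinates, exactly as you concede in your last sentence, and ``one more layer of inclusion--exclusion'' is not specified and does not obviously restore a product structure. As written, the proof has a genuine hole at its computational core.

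The irony is that the fix is the step you discarded at the outset as ``too much.'' A state space of $(k+1)^{|V(G)|}$ usage profiles is precisely a plain-exponential budget for fixed $k$: process the ground elements $1,\dots,n$ one at a time, keep an array indexed by $\{0,\dots,k\}^{V(G)}$ recording in how many ways the elements seen so far can be distributed with the given per-vertex counts, and for each element update by summing over the (at most $2^{|V(G)|}$) independent sets of $G$; the answer is the entry $(k,\dots,k)$. This runs in $O^*\bigl(n\,(2k+2)^{|V(G)|}\bigr)$, and since $n\le\binom{n}{k}=|V(\KG_{n,k})|$ this is plain exponential. (Equivalently, compute $P^n$ by repeated squaring in the truncated ring $\mathbb{Z}[x_u]/(x_u^{k+1})$, which is sound because exponents only increase under multiplication, so killing powers above $k$ never affects the target coefficient.) With either of these substituted for the $N(j)$ step, your argument becomes a correct alternative proof, with a bound of roughly $(2k+2)^{|V(G)|}$ versus the paper's $2^{k|V(G)|}$; without it, the proposal does not yet prove the theorem.
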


\section{(Extended) Clique width}\label{sec:cliquewidth}

\subsection{Homomorphisms, plain exponential time}

As always we denote the vertex set of a graph $G$ by $V(G)$, 
and its edge set by $E(G)$. A \emph{homomorphism} of a graph 
$G$ to a graph $H$ is a mapping $\vf\colon V(G)\to V(H)$ 
such that $\vf(u)\vf(v)\in E(H)$ for any $uv\in E(G)$.
The \emph{Counting Graph Homomorphism} problem 
{\sc \#GraphHom} is defined as follows: given graphs $G,H$, 
find the number of homomorphisms from $G$ to $H$. Its decision 
version --- does there exist a homomorphism from $G$ to $H$ --- is 
denoted by {\sc GraphHom}. Graph homomorphisms and the
related combinatorial problems have been extensively studied 
\cite{Hell04:homomorphism}.
If $H$ is allowed only from a class $\cH$ of graphs, the resulting
counting and decision problems are denoted 
{\sc \#GraphHom$(-,\cH)$} 
and {\sc GraphHom$(-,\cH)$}, respectively. 

We will be concerned with the complexity and the best running time 
of algorithms for {\sc \#GraphHom$(-,\cH)$}. In particular, we say 
that a class $\cH$ of graphs is \emph{plain exponential} if there is 
an algorithm that solves the problem {\sc \#GraphHom$(-,\cH)$} in 
\emph{plain exponential time}: there exists a constant $c$ such 
that on input $G,H$, $H\in\cH$, the algorithm runs in time 
$O^*(c^{|V(G)|+|V(H)|})$, where $O^*$ means asymptotics
up to a factor polynomial in $|V(G)|,|V(H)|$. Note that we will 
always assume that $G$ and $H$ are connected, since otherwise the 
existence or the number of homomorphisms from $G$ to $H$ can 
be deduced from those of their connected components.

\begin{example}\label{exa:plain-expo-h-col}\rm
({\sc $H$-Colouring}.)
If $\cH$ consists of just one graph, $H$, the problems \linebreak
{\sc \#GraphHom$(-,\cH)$}, {\sc GraphHom$(-,\cH)$} are known 
as {\sc \#$H$-Colouring} and {\sc $H$-Colouring},
respectively. The {\sc \#$H$-Colouring} problem is solvable in 
polynomial time if $H$ is a complete graph with all loops present, 
or is a complete bipartite graph \cite{Dyer00:complexity}.
The {\sc $H$-Colouring} problem is solvable in polynomial time if $H$ 
contains a loop or is bipartite \cite{Hell90:h-coloring}. Otherwise these problems 
are \#P- and NP-complete, respectively. Since the brute force 
algorithm for this problems runs in $O(|V(H)|^{|V(G)|})$ time, 
{\sc \#$H$-Colouring} and {\sc $H$-Colouring} are always solvable 
in plain exponential time. Also, by inspecting the solution algorithms
from \cite{Dyer00:complexity,Hell90:h-coloring} these results can be slightly generalized: 
{\sc \#GraphHom$(-,\cH)$} is solvable in polynomial time whenever 
every graph from $\cH$ is a complete 
graph with all loops, or a complete bipartite graph. Similarly 
{\sc GraphHom$(-,\cH)$} is polynomial time solvable if every graph 
from $H$ contains a loop or is bipartite.
\end{example}

\begin{example}\label{exa:plain-expo-bound-degree}\rm
(Graphs of bounded degree.)
As is mentioned in the introduction, if the degrees of graphs from 
$\cH$ are bounded by a number $c$, the (improved) brute force 
algorithm solves {\sc \#GraphHom$(-,\cH)$},
{\sc GraphHom$(-,\cH)$}. Let $G,H$ be input graphs, 
$H\in\cH$. We assume 
$G$ is connected; otherwise the procedure below has to be 
performed for each connected component, and the results 
multiplied. Order the vertices $v_1,\dots,v_n$ of $G$ in such a 
way that each vertex except for the first one is adjacent to one of the 
preceding vertices. Then the brute force algorithm is organized as 
follows: Assign images to $v_1,\dots,v_n$ in turn. There are $|H|$ 
possibilities to map $v_1$, but then if $v_i$ is adjacent to $v_j$, 
$j<i$, the image of $v_j$ is fixed, and therefore there are at most 
$c$ possibilities for the image of $v_i$. Thus, the algorithm
runs in $O^*(c^n)$. This approach also allows $H$ to have 
bounded number of vertices of high degree.
\end{example}

\begin{example}\label{exa:plain-expo-cw}\rm
(Graphs of bounded clique width.)
Let $\cQ_k$ denote the class of all graphs of clique width at most 
$k$ (to be defined in Section~\ref{sec:clique-width}). Then 
{\sc \#GraphHom$(-,\cQ_k)$}, {\sc GraphHom$(-,\cQ_k)$} can be 
solved in time $O^*((2k+1)^{|V(G)|+|V(H)|})$, implying that 
$\cQ_k$ is plain exponential \cite{Wahlstrom09:new}. 

Here we briefly describe the simple algorithm solving {\sc 
\#GraphHom$(-,\cK)$}, where $\cK$ is the class of cliques. 
Given a graph $G$ and a number $s$ (or, equivalently, the clique 
$K_s$) 
the solution algorithm maintains an array $N(S,\ell)$ for $S\sse V$ and 
$\ell\le s$, which contains the number of homomorphisms from
the subgraph of $G$ induced by $S$ to an $\ell$-element clique.
To compute each $N(S,\ell)$ we go over all subsets $S'\sse S$,
consider the vertices from $S'$ to be mapped to the $\ell$-th
vertex of the $\ell$-clique. Then there are $N(S-S',\ell-1)$ ways
to map the remaining vertices, and $N(S,\ell)$ is the sum of all
numbers like this. It is not hard to see that the running time of
this algorithm is $O^*(3^{|V(G)|}$. It can be improved to run 
in time $O^*(2^{|G|})$ \cite{Koivisto06:algorithm}, and some 
further improvements are possible in certain cases 
\cite{Fomin07:improved}.
\end{example}

We will often deal with vertex labeled graphs. It will be convenient to 
represent labels on vertices of a graph $G$ as a \emph{label function} 
$\pi\colon V(G)\to[k]$ ($[k]=\{1\zd k\}$), in which case we say that 
$G$ is \emph{$k$-labeled}. Graph $G=(V,E)$ equipped with a label 
function $\pi$ will be denoted by $\zG=(V,E,\pi)$. The $k$-labeled 
graph $\zG$ is then called a \emph{$k$-labelling} of $G$. Let 
$\zG_1=(V_1,E_1,\pi_1)$ and $\zG_2=(V_2,E_2,\pi_2)$  are 
$k$-labeled graph. A mapping $\vf\colon V_1\to V_2$ is a 
\emph{homomorphism} of $k$-labeled graph $\zG_1$ to 
$k$-labeled graph 
$\zG_2$ if it is a homomorphism of graph $G_1=(V_1,E_1)$ to
$G_2=(V_2,E_2)$ respecting the labelling, that is, 
$\pi_2(\vf(v))=\pi_1(v)$ for every $v\in V_1$. 

The following 
notation will also be useful. Let again $\zG_1,\zG_2$ be
$k$-labeled graphs, such that $V_1,V_2$ are disjoint. Then 
$\zG_1\bigoplus\zG_2=(V_1\uplus V_2,E_1\uplus E_2,
\pi_1\uplus\pi_2)$, where 
\[
\pi_1\uplus\pi_2(v) = 
\begin{cases}
\pi_1(v), &\textit{if $v \in V_1$},\\
\pi_2(v), &\textit{if $v \in V_2$.}
\end{cases}
\]

Finally, the subgraph of a graph $G=(V,E)$ induced by a set $S\sse V$ 
is denoted by $G[S]$. For a $k$-labeled graph $\zG=(V,E,\pi)$, by
$\zG[S]$ we denote the $k$-labeled subgraph induced by $S\sse V$.
Note that the labelling function of $\zG[S]$ is $\pi\vert_S$, i.e., 
the restriction of $\pi$ on the set~$S$.

\subsection{Clique width and $k$-expressions}\label{sec:clique-width}

The simplest way to introduce clique width of a graph is through 
$k$-expressions.

\begin{definition}
The following operators are defined on $k$-labeled graphs.
\begin{itemize}
\item 
$\cdot_i$: Construct a graph with one vertex, which is labeled $i\in[k]$. 
\item 
$\rho_{i \rightarrow j}(\zG)$: Relabel all vertices with label $i\in[k]$ 
of a $k$-labeled graph $\zG$ to label $j\in[k]$. 
\item 
$\eta_{ij}(\zG)$, for $i \neq j$: Add edges from every vertex labeled 
$i$ to every vertex labeled $j$ in $\zG$, i.e. add edges $uv$ for any 
vertices $u, v$ where $u$ has label $i$ and $v$ has label $j$. 
\item 
$\zG_1 \bigoplus \zG_2$: The disjoint union of $k$-labeled graphs 
$\zG_1$ and $\zG_2$. 
\end{itemize}
A \emph{$k$-expression} is any (properly formed) formula using the 
above operators. 

Every $k$-expression represents a $k$-labeled graph. We say that a
graph $G=(V,E)$ is represented by $k$-expression $\Phi$, if 
there exists a $k$-labelling $\pi$ of the vertices of $G$ such that 
$\Phi$ represents $\zG=(V,E,\pi)$.
A graph 
has \emph{clique width} $k$ if it is represented by a $k$-expression.
The class of all graphs of clique width $k$ is denoted by $\cQ_k$.
\end{definition}

Wahlstr{\"{o}}m in \cite{Wahlstrom09:new} used $k$-expressions 
of graphs to show that $\cQ_k$ is plain exponential. However, 
$k$-expressions suitable for his plain exponential algorithm must
satisfy an extra condition. Let $\Phi$ be a $k$-expression representing 
a $k$-labeled graph $\zG$. Note that any subformula of $\Phi$ 
represents a subgraph of $\zG$.
We say that $k$-expression $\Phi$ is \emph{safe} if for every its
subexpression $\Phi_1 \bigoplus \Phi_2$ such that $\Phi_1,\Phi_2$ 
represent graphs $\zG_1,\zG_2$, respectively, the graph 
$\zG_1$ equals $\zG[V(\zG_i)]$ for $i=1,2$. In other words all 
edges of $\zG$ between vertices of 
$\zG_i$, $i=1,2$, are already edges of $\zG_i$. 

\begin{lemma}[\cite{Wahlstrom09:new}]%
\label{lem:safe-expression}
(1) Every graph of clique width $k$ can be represented by a 
safe $k$-expression.\\[2mm]
(2) A safe $k$-expression for a graph of clique width $k$
can be found in plain exponential time.
\end{lemma}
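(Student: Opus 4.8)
The plan is to prove both parts of Lemma~\ref{lem:safe-expression} by analysing the structure of an arbitrary $k$-expression and showing that it can be converted into a safe one without increasing the number of labels, and then using this conversion together with a known algorithm for computing clique width to obtain the algorithmic statement.

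For part (1), I would start with an arbitrary $k$-expression $\Phi$ representing $\zG=(V,E,\pi)$ and proceed by induction on the structure of $\Phi$. The only problematic operator is $\bigoplus$: after a disjoint union $\Phi_1\bigoplus\Phi_2$ in the course of building $\zG$, further $\eta_{ij}$ operators may add edges between vertices originating in $\Phi_1$ and vertices originating in $\Phi_2$, which is exactly what ``safe'' forbids. The key observation is that at the moment the union is formed, the two vertex sets $V(\zG_1)$ and $V(\zG_2)$ each carry at most $k$ labels; the set of edges of $\zG$ that eventually run between $V(\zG_1)$ and $V(\zG_2)$ is therefore a union of complete bipartite graphs between label classes. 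The idea is to push these ``external'' edges back inside by observing that one can always reorganise $\Phi$ so that each $\bigoplus$ is immediately followed by the $\eta_{ij}$'s that connect its two parts, provided we are willing to first relabel so that the relevant label classes are identified. Concretely, I would argue that one can transform $\Phi$ into an equivalent expression of the same width in which, reading top-down, every $\bigoplus$ node is the child of a chain of $\eta$-operators realising all cross edges, and whose own children already contain all internal edges --- this is precisely safety. The cleanest way to present this is to take a $k$-expression and its parse tree, and for each $\bigoplus$-node repeatedly ``commute'' $\eta$ and $\rho$ operators downward past it, splitting an $\eta_{ij}$ applied above a union into the part acting within each side plus the genuinely bipartite part, the latter being attached right above the union. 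One must check that the commutations preserve the represented labeled graph and never need more than $k$ labels, which holds because no new labels are introduced --- we only reorder and duplicate existing operators.

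For part (2), the strategy is: given $G$ of clique width $k$, first compute some $k'$-expression for $G$ (with $k'$ bounded in terms of $k$, or exactly $k$ if we appeal to an exact algorithm) in plain exponential time, then apply the transformation from part (1) to make it safe. For the first step I would invoke a known exact exponential algorithm that, given $G$ and $k$, decides whether $\mathrm{cw}(G)\le k$ and if so outputs a witnessing expression; such an algorithm runs in time $O^*(c^{|V(G)|})$ for a constant $c$ depending on $k$ (for instance via dynamic programming over subsets of $V(G)$, tracking for each subset the possible label partitions achievable by a subexpression, of which there are at most $k^{|V(G)|}$). The transformation in part (1) is purely syntactic on the parse tree and runs in time polynomial in the size of the expression, hence in $O^*(1)$ relative to the exponential factor, so the overall bound remains plain exponential. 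I would note that the constant here need not match the $2k+1$ appearing elsewhere; Lemma~\ref{lem:safe-expression} only asserts existence of \emph{some} plain exponential bound.

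The main obstacle I expect is part (1): making precise the claim that an arbitrary $k$-expression can be ``safened'' without increasing the label count. The subtlety is that an $\eta_{ij}$ sitting above a $\bigoplus$ may also connect vertices to others introduced much later or relabelled in complicated ways, so the bookkeeping of which edges are ``internal to $\zG_i$'' versus ``genuinely cross'' requires care; one has to track label classes through chains of $\rho$ operators. I would handle this by a careful induction in which the inductive hypothesis records not just that the subexpression is safe but also the partition of its vertex set into label classes, so that when an $\eta$ or $\rho$ is applied above a union the effect on each side can be computed explicitly. Everything else --- the disjoint-union and single-vertex base cases, and the algorithmic packaging in part (2) --- is routine once this normalisation is in hand.
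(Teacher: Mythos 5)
The paper does not actually prove Lemma~\ref{lem:safe-expression}: it is imported from Wahlstr\"om \cite{Wahlstrom09:new} and used as a black box, so there is no in-paper proof to compare yours against. Judged on its own, your sketch follows what is essentially the standard route (and, in spirit, the one behind the cited result). For part (1), the one observation that carries the whole argument --- and which you do state --- is that vertices sharing a label at a union node evolve identically under all later $\rho$ operators; hence if a later $\eta_{ij}$ connects one vertex that had label $i''$ at the union to one vertex that had label $j''$ there, it connects \emph{all} label-$i''$ vertices to \emph{all} label-$j''$ vertices, so the edges of the final graph $\zG$ created above a union form a union of complete bipartite graphs between whole label classes as they exist at that union. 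Appending the corresponding $\eta_{i''j''}$ operators immediately above the lowest union containing both endpoints therefore adds only edges of $\zG$, introduces no new labels, and yields safety at every union node; the bookkeeping through chains of $\rho$'s that you flag is exactly the part that must be written carefully, and your plan of tracking label classes is the right way to do it. For part (2), two cautions: invoking a generic ``known exact algorithm for clique width'' is slightly circular, since producing a $k$-expression in plain exponential time is itself the substance of the cited statement (and exact clique-width computation is hard); moreover an algorithm returning a $k'$-expression with $k'>k$ would not give what the lemma asserts, namely a safe $k$-expression with the same $k$. What does work is precisely the dynamic program you sketch --- over induced subgraphs together with their $k$-labellings, at most $(k+1)^{|V(G)|}$ states --- searching directly for ($k$-)expressions and then applying the syntactic transformation of part (1), which is polynomial in the expression size; this is the same scheme the present paper implements for extended $k$-expressions in Theorem~\ref{the:k-expr-gen}, so your overall plan is sound, with the acknowledged normalisation details being the only substantive thing left to write out.
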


\subsection{Extended $k$-expressions}

In this section we introduce a more general version of $k$-expressions,
and accordingly a more general version of clique width.

Fix a natural $k$. By $\vec{n}$ we denote a vector 
$(\vc nk) \in ([k] \cup \{0\})^k$. For such a vector 
$\vec{n}$, let
\[
\mathcal{L}(\vec{n})=\{(i_1, j_1, i_2, j_2)\mid  
i_1, i_2 \in [k], 1 \leq j_1 \leq n_{i_1}, 1 \leq j_2 \leq n_{i_2}\}.
\]

New $k$-expressions require two more operators on $k$-labeled 
graphs. The first one does not have analogues in $k$-expressions.
\begin{definition}\label{def:operator-B}
Let  $\vec{n} =(\vc nk) \in ([k] \cup \{0\})^k$, 
$\sigma: [k] \to [k]$, and $\cS \subseteq \cL(\vec{n})$. Also, let 
$\cS$ be a symmetric set, that is, if $(i_1, j_1, i_2, j_2) \in \cS$ 
then $(i_2, j_2, i_1, j_1) \in \cS$. Operator 
$\beta_{\vec{n}, \sigma, \cS}$ transforms 
$k$-labeled graph $\zG_1=(V_1, E_1, \pi_1)$ to a $k$-labeled graph 
$\zG_2=(V_2, E_2,\pi_2)$ as follows:
\begin{itemize}
\item 
$V_2 = \bigcup_{i = 1}^{k} C_i$, where 
$C_i = \{a_{j}| j \in \{0, ..., n_i\},\text{ $a\in V_1$ 
and } \pi_1(a) = i\}$. The vertices of $a_0$, $a\in V_1$, are called 
original vertices of $\zG_2=\beta_{\vec{n}, \sigma, \cS}(\zG_1)$ 
and are identified with their corresponding vertices from $V_1$;
\item 
$(a_{j}, b_{j'})\in E_2$ if and only if
$(a, b)\in E_1$, and $(\pi_1(a), j,\pi_1(b) ,j') \in\cS$ or  
$j = j' = 0$;
\item 
$\pi_2(a_{j}) = \begin{cases}
\pi_1(a), & \textit{if $j = 0$,}\\
\sigma(\pi_1(a)), &\text{otherwise.}
\end{cases}$
\end{itemize}
We also refer to this operator as the \emph{beta operator}. 
\end{definition}

The second operator combines disjoint union with a sequence of 
adding edges operators. 

\begin{definition}\label{def:eta-operator}
Let $\cT \subseteq [k] \times [k]$. Operator $\eta_{\cT}$ takes two 
$k$-labeled graphs as input and produces a $k$-labeled graph 
as output. For $k$-labeled graphs $\zG_1=(V_1,E_1,\pi_1)$, and 
$\zG_2=(V_2,E_2,\pi_2)$, $V_1,V_2$ disjoint the $k$-labeled graph  
$\eta_{\cT}(\zG_1,\zG_2) = (V, E, \pi)$, is defined as follows:
\begin{align*}
V &= V_1 \cup V_2\\
E &=E_1\cup E_2\cup\{(a, b)\mid a\in V_1, b\in V_2, 
\pi_1(a)=i, \pi_2(b)=j,\ (i,j)\in\cT \}\\
\pi &= 
\begin{cases}
\pi_1(a),&   \text{if $a \in V_1$,}\\
\pi_2(a), &   \text{if $a \in V_2$.}
\end{cases}
\end{align*}
We also refer to this operator as the \emph{connect operator}. 
\end{definition}

An \emph{extended $k$-expression} is a (properly formed) 
expression that involves operators $\cdot_i$ ($i\in[k]$), 
$\rho_{i\to j}$ ($i,j\in[k]$), $\beta_{\vec{n}, \sigma, \cS}$, and 
$\eta_{\cT}$, where 
$\vec n,\sigma,\cS,\cT$ are as in Definitions~\ref{def:operator-B},
\ref{def:eta-operator}. Similar to $k$-expressions, extended
$k$-expressions represent $k$ labeled graphs, as well as usual 
graphs. Next we explore what kind of graphs and $k$-labeled 
graphs can be represented by extended $k$-expressions.

Note that if $\zG_1$ and $\zG_2$ are two isomorphic 
$k$-labeled graphs, and $\zG_1$ has an extended $k$-expression 
$\Phi$, then $\Phi$ is an extended $k$-expression for $\zG_2$ 
as well.

As is easily seen, the connect operator can be expressed through 
disjoint union and adding edges. However, we will need properties
similar to the safety of $k$-expressions. Unfortunately, the 
beta operator does not allow an equally clean and easy definition
of safety, as in the case or $k$ expressions, and we use the 
connect operator instead.

Let $\zG = \eta_{\zT}(\zG_1, \zG_2)$. It is straightforward from the 
definition that  
$\zG[V(\zG_1)]$ is equal to $\zG_1$ and $\zG[V(\zG_2)]$ is equal to 
$\zG_2$, that is, $\eta_\cT$ does not add edges inside $\zG_1,\zG_2$. 
Similarly, if $\zG= \beta_{\vec{n}, \zS, \sigma}(\zG_1)$, 
then again $\zG[V(\zG_1)]$ is equal to $\zG_1$. Similar to 
$k$-expressions we say that an extended $k$-expression $\Phi$ is
\emph{safe} if for every its subexpressions $\eta_\cT(\Phi_1,\Phi_2)$ 
and $\beta_{\vec n,\sigma,\cS}(\zG_1)$ such that $\Phi_1,\Phi_2$ 
represent graphs $\zG_1,\zG_2$, 
respectively, it holds $\zG_1=\zG[V(\zG_i)]$ for $i=1,2$. 
The following property is straightforward.

\begin{lemma}\label{lem:extended-safe}
Any extended $k$ expression is safe.
\end{lemma}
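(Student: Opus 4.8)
The plan is to verify the safety condition directly from the definitions of the two binary/unary operators involved, observing that in both cases the operator by construction only adds edges between vertices that come from \emph{different} parts (for $\eta_\cT$) or between an original vertex and a fresh copy (for $\beta_{\vec n,\sigma,\cS}$), so that the induced subgraph on any constituent $\zG_i$ is exactly $\zG_i$ with no new edges inside it. I would argue by structural induction on the extended $k$-expression $\Phi$, the base case $\cdot_i$ being trivial since a single-vertex graph has no subexpressions of the relevant form.

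First I would handle the $\eta_\cT$ case. Let $\Phi=\eta_\cT(\Phi_1,\Phi_2)$ with $\Phi_i$ representing $\zG_i=(V_i,E_i,\pi_i)$, and let $\zG=(V,E,\pi)=\eta_\cT(\zG_1,\zG_2)$. By Definition~\ref{def:eta-operator}, $E=E_1\cup E_2\cup E'$ where every pair in $E'$ has one endpoint in $V_1$ and one in $V_2$; hence for $u,v\in V_i$ we have $uv\in E$ iff $uv\in E_i$, so $\zG[V_i]=\zG_i$. Since $\pi$ restricted to $V_i$ equals $\pi_i$, the labelling also matches, giving $\zG_i=\zG[V(\zG_i)]$ for $i=1,2$. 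This is exactly the remark already stated in the text immediately before Lemma~\ref{lem:extended-safe}, so I would just spell it out.

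Next the $\beta$ case. For a subexpression $\beta_{\vec n,\sigma,\cS}(\Phi_1)$ with $\Phi_1$ representing $\zG_1$, write $\zG=\beta_{\vec n,\sigma,\cS}(\zG_1)$. Here the relevant constituent is $\zG_1$ itself, which is identified inside $\zG$ with the set of original vertices $\{a_0\mid a\in V_1\}$. By the edge rule in Definition~\ref{def:operator-B}, an edge $(a_j,b_{j'})$ with $a,b$ both original (i.e.\ $j=j'=0$) is present iff $(a,b)\in E_1$; and an edge between two original vertices can never be created by the $\cS$-condition since that requires $j_1,j_2\ge 1$. Thus $\zG[\{a_0:a\in V_1\}]$ has exactly the edge set $E_1$, and by the labelling rule ($\pi_2(a_0)=\pi_1(a)$) the labels agree, so this induced subgraph is $\zG_1$. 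Finally I would note that for a composite expression, the safety conditions imposed on $\eta_\cT$ and $\beta$ subexpressions of $\Phi$ are precisely the conjunction of those imposed on such subexpressions of $\Phi_1,\Phi_2$ (handled by the induction hypothesis) together with the topmost one (handled by the two cases above), and that unary operators $\rho_{i\to j}$ introduce no new subexpressions of the constrained form; this completes the induction.

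There is essentially no obstacle here: the lemma is true almost by fiat, since we \emph{defined} both the connect and beta operators so as not to add edges within their arguments, and the remark preceding the lemma already records this. The only thing to be a little careful about is bookkeeping in the $\beta$ case — distinguishing the $j=0$ copies (which form $\zG_1$) from the higher copies and checking that no spurious edge among the $j=0$ vertices can arise from the $\cS$-clause — but this is immediate from the constraint $1\le j_1\le n_{i_1}$, $1\le j_2\le n_{i_2}$ in the definition of $\cL(\vec n)$, which forces $j_1,j_2\ge 1$ for any tuple in $\cS$.
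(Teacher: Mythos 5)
Your argument is correct and matches the paper's reasoning: the paper offers no explicit proof, simply declaring the lemma straightforward from the remarks immediately preceding it (that $\eta_\cT$ adds edges only between its two arguments and $\beta_{\vec n,\sigma,\cS}$ adds edges only incident to the new copy vertices), which is exactly what you verify, packaged as a structural induction. Your extra care in the $\beta$ case (the constraint $j_1,j_2\ge 1$ in $\cL(\vec n)$) and the explicit handling of $\rho_{i\to j}$ only make the ``straightforward'' claim more precise.
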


For an extended $k$-expression $\Phi$, 
$\size(\Phi)$ denotes the total number of operands, connect operators, 
beta operators, and the maximal subsequences of relabelling 
operators of $\Phi$. 

A graph $G=(V,E)$ is said to have \emph{extended clique width $k$} 
if there is a $k$-labelling $\pi$ of $G$ and an extended 
$k$-expression $\Phi$ that represents $\zG=(V,E,\pi)$. If such a
$\pi $ exists we also say that $\Phi$ represents $G$. The class of
all graphs of extended clique width is denoted by $\cX_k$.

We complete this section showing that $\cQ_k$ is a subset
of $\cX_k$. 

\begin{proposition}\label{equivalent}
Any graph $G$ that can be represented by a $k$-expression, can also 
be represented by an extended $k$-expression.
\end{proposition}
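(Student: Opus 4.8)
The plan is to show that each of the four $k$-expression operators ($\cdot_i$, $\rho_{i\to j}$, $\eta_{ij}$, and $\bigoplus$) can be simulated by an extended $k$-expression, and then argue by structural induction on the $k$-expression $\Phi$. The base case is immediate: $\cdot_i$ is itself an extended $k$-expression operator. For the relabelling operator $\rho_{i\to j}$, it is again already among the allowed operators of an extended $k$-expression, so nothing is needed. Thus the only two operators requiring genuine work are the edge-adding operator $\eta_{ij}$ and the disjoint union $\bigoplus$.

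First I would handle disjoint union. The issue is that $\bigoplus$ is \emph{not} an operator of extended $k$-expressions; instead we have the connect operator $\eta_{\cT}$, which combines a disjoint union with the addition of edges governed by $\cT\subseteq[k]\times[k]$. Taking $\cT=\emptyset$, the operator $\eta_{\emptyset}(\zG_1,\zG_2)$ produces exactly the disjoint union of $\zG_1$ and $\zG_2$ with no new edges. So $\zG_1\bigoplus\zG_2$ is represented by $\eta_{\emptyset}(\Phi_1,\Phi_2)$ whenever $\Phi_1,\Phi_2$ represent $\zG_1,\zG_2$.

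Next I would handle the edge-adding operator $\eta_{ij}$, which is unary and takes a single $k$-labeled graph $\zG$ and adds all edges between label-$i$ and label-$j$ vertices. The cleanest route is to express this using the beta operator with a trivial $\vec n$. Concretely, take $\vec{n}=(0,\dots,0)$, so that $\mathcal{L}(\vec n)=\emptyset$ and hence the only admissible $\cS$ is $\cS=\emptyset$; then for any $\sigma$, the operator $\beta_{\vec 0,\sigma,\emptyset}$ leaves the vertex set unchanged (only the $a_0$ copies appear), keeps every original edge (since $j=j'=0$ for all pairs), and leaves labels unchanged (since $j=0$ forces $\pi_2(a_0)=\pi_1(a)$). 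So the beta operator with trivial parameters acts as the identity, which does not yet add edges. The actual addition of edges between two label classes is instead most naturally obtained from $\eta_{\cT}$: split $\zG$ — but $\eta_{\cT}$ is binary. The real fix is to observe that $\eta_{ij}$ can be simulated only when we have access to a disjoint-union-like decomposition; however, inside a $k$-expression the edge-adding operator is applied to an already-built graph, so we instead simulate it directly. Here the key realization is that $\eta_{ij}$ can be written using a connect operator after temporarily separating the $i$-labeled and $j$-labeled vertices — but since an arbitrary $k$-expression need not build $\zG$ by first isolating those classes, the honest approach is different: I would argue that $\eta_{ij}$ is subsumed by allowing, in extended $k$-expressions, the connect operator at the point where the two sides are first unioned. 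More carefully, in a $k$-expression every $\eta_{ij}$ occurs above some $\bigoplus$; by pushing each $\eta_{ij}$ down to the nearest $\bigoplus$ below it that actually creates a relevant cross pair, and collecting all such edge-additions into a single set $\cT$, one replaces the pattern ``$\bigoplus$ followed by a block of $\eta$'s'' by a single $\eta_{\cT}$. The leftover $\eta_{ij}$'s that act within one side get pushed further down recursively, and any $\eta_{ij}$ acting on a single vertex adds a loop or nothing, which can be absorbed. This rewriting terminates because it strictly decreases the number of $\eta$-operators above $\bigoplus$-operators.

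The main obstacle is precisely this last point: cleanly pushing the $\eta_{ij}$ operators down to become connect operators, because an $\eta_{ij}$ applied high in the expression may need to add edges that straddle several disjoint-union boundaries simultaneously, and a single $\eta_{\cT}$ only spans one such boundary. The resolution is that safety of the target extended $k$-expression is not required here (we only need \emph{some} extended $k$-expression, and Lemma~\ref{lem:extended-safe} will not be violated by the construction since we build the expression syntactically), so I can afford to apply $\eta_{ij}$-type edge additions at multiple levels: each $\bigoplus$ in the original expression becomes an $\eta_{\cT}$ whose $\cT$ records exactly which label pairs $(i,j)$ have an $\eta_{ij}$ somewhere above this $\bigoplus$ in $\Phi$ with both label classes still nonempty and `split' by this particular union. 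Formalizing ``split by this union'' via the induction hypothesis — carrying along, for each subexpression, the set of label pairs whose edges are still owed — is the bookkeeping heart of the argument, and I expect it to be the one genuinely delicate step; everything else ($\cdot_i$ and $\rho_{i\to j}$ verbatim, $\bigoplus$ as $\eta_{\emptyset}$) is routine.
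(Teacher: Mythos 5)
Your route is viable and can be completed, but it is genuinely different from the paper's. The paper first invokes Lemma~\ref{lem:safe-expression} to assume the given $k$-expression is \emph{safe}; safety guarantees that the block $\Psi$ of $\rho$'s and $\eta$'s sitting above the topmost $\bigoplus$ never adds edges inside either side, so all of its edge-additions are cross edges and collapse into a single connect operator, giving $\Psi(\zG_1\bigoplus\zG_2)=\hat\Psi\,\eta_{\cT}(\zG_1,\zG_2)$ for a set $\cT$ of \emph{union-time} label pairs, after which a one-step structural induction finishes the proof. You never use safety of the source expression; instead you re-derive its effect by pushing edge-additions downward, i.e.\ by the ``owed label pairs'' induction you sketch at the end. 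That costs the bookkeeping you defer, but buys a self-contained argument (no appeal to Wahlstr\"om's safety lemma) and an explicit syntactic translation of arbitrary $k$-expressions. Two points need care when executing it. First, as literally stated your rule for $\cT$ (``the pairs $(i,j)$ for which some $\eta_{ij}$ occurs above this $\bigoplus$'') is wrong once relabellings intervene: for $\eta_{32}(\rho_{1\to 3}(\cdot_1\bigoplus\cdot_2))$ the connect operator must use $(1,2)$, not $(3,2)$. The owed set must be expressed in the labels current at the union, i.e.\ pulled back through the intervening $\rho$'s: carry for each subexpression a symmetric $F\sse[k]\times[k]$, add $(i,j)$ when passing an $\eta_{ij}$, replace $F$ by its preimage when passing a $\rho_{i\to j}$, and at a union emit $\eta_F$ and hand $F$ to both sides, maintaining the invariant that the final graph induced on the subexpression's vertices is that subexpression's graph plus all edges between label classes paired by $F$. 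Second, the reason the scheme works at all is the observation (also the engine of the paper's proof) that equally labelled vertices keep equal labels forever, so an edge is never owed between two equally labelled vertices already coexisting on one side; this is essential because no extended-expression operator can add an edge between two vertices once both are present. In this formulation termination is automatic, since it is a single structural induction, so your separate rewriting-termination measure is not needed.
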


\begin{proof}
We start with a piece of terminology. For a sequence $\Psi$ of 
operators of the form $\rho_{i\to j}$ and $\eta_{ij}$, consider the
$k$-labeled graph $\zG=\Psi(\zG_1 \bigoplus \zG_2)$ for some 
$\zG_1,\zG_2$. Expression $\Psi$ is said to \emph{connect} a 
vertex $a$ to a vertex $b$, if $ab$ is an edge of $\zG$, but not 
of $\zG_1\bigoplus\zG_2$, that is, if there is an operator 
$\eta_{ij}$ in $\Psi$ that connects $a$ to $b$. Expression 
$\Psi$ is said to relabel a vertex $a$ with label $i$ to label $j$, 
if $a$ has label $i$ in $\zG_1\bigoplus\zG_2$ and label $j$ in $\zG$. 
Also, $\hat{\Psi}$ denotes the sequence of operators that is 
obtained from $\Psi$ by removing all the $\eta_{ij}$ operators..

Let $\Phi$ be a $k$-expression representing graph $G$. By 
Lemma~\ref{lem:safe-expression} $\Phi$ can be assumed safe. 
We proceed by induction on the structure of $\Phi$. If $G$ is a 
graph with one vertex, there is nothing to prove.
If $G$ has more than one vertex, we can write $\Phi$ as  
$\Psi(\zG_1 \bigoplus \zG_2)$ where,  $\zG_1$ and $\zG_2$ 
are represented by some subexpressions of $\Phi$ and $\Psi$ 
is a sequence of operators of the form $\rho_{i\to j}$ and $\eta_{ij}$. 
Let $\zG_1=(V_1, E_1, \pi_1)$ and $\zG_2=(V_2, E_2, \pi_2)$. 

As is easily seen, for any $a \in V_1$ and $b \in V_2$, if 
$ab\in E(G)$, then there is operator $\eta_{st}$ in $\Psi$ that 
connects $a$ to $b$. Conversely, if some operator $\eta_{st}$ 
in $\Psi$ connects vertex $a\in V_1$  
to a vertex $b\in V_2$ where, $\pi_1(a) = i$ and $\pi_2(b) = j$, then, 
$\Psi$ connects every vertex $x \in V_1$ to every vertex 
$y \in V_2$, with $\pi_1(x) = i$ and $\pi_2(y) = j$, because 
the vertices from $\zG_1\bigoplus\zG_2$ with the same label, 
remain with the same label, after applying any operator. 
So, there is a set $\cT$ of pairs $(i, j)$ 
such that $\Psi$ connects  every vertex $x \in V_1$ to 
every vertex $y \in V_2$ with $\pi_1(x) = i$ and $\pi_2(y) = j$.

Therefore, the set of edges between $\zG_1$ and $\zG_2$ which are 
added by $\Psi$ is the same as those which are added by 
$\eta_{\cT}$. Also, $\zG[V_1] = \zG_1$ and $\zG[V_2] = \zG_2$, as $\Phi$
is safe. Thus, $\eta_{\cT}(\zG_1, \zG_2)$ 
and $\Psi(\zG_1 \bigoplus \zG_2)$ have the same set of 
vertices and same set of edges. Moreover, $\Psi$ relabels any 
vertices  same way as $\hat{\Psi}$ does. Therefore 
$\Psi(\zG_1 \bigoplus \zG_2)$ and 
$\hat{\Psi}\eta_{\cT}(\zG_1, \zG_2)$ represent the same 
$k$-labeled graph. By the induction hypothesis the result follows.
\end{proof}

\begin{corollary}\label{cor:clique-extended-clique}
Every graph that has clique width $k$ also has extended clique 
width $k$.
\end{corollary}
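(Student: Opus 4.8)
The plan is to observe that Corollary~\ref{cor:clique-extended-clique} is an immediate consequence of Proposition~\ref{equivalent} once the two width notions are unwound to their definitions. First I would recall that, by definition, a graph $G=(V,E)$ has clique width $k$ exactly when there is a $k$-labelling $\pi$ of $V$ such that the $k$-labeled graph $\zG=(V,E,\pi)$ is represented by some $k$-expression $\Phi$; equivalently, $G\in\cQ_k$ iff $G$ is represented by a $k$-expression. Likewise, $G$ has extended clique width $k$, i.e.\ $G\in\cX_k$, exactly when there is a $k$-labelling $\pi'$ and an extended $k$-expression $\Phi'$ that represents $\zG'=(V,E,\pi')$.

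The key step is then to apply Proposition~\ref{equivalent} directly: given that $G$ has clique width $k$, it is represented by some $k$-expression, and the proposition supplies an extended $k$-expression that also represents $G$ (with, possibly, a different witnessing $k$-labelling of the vertices of $G$, which is allowed by the definition of $\cX_k$). This extended $k$-expression is precisely the certificate needed to conclude $G\in\cX_k$, so $G$ has extended clique width at most $k$. Phrased set-theoretically, this is just the inclusion $\cQ_k\subseteq\cX_k$.

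I do not expect any genuine obstacle: all the real work — rewriting a disjoint union followed by a block of relabelling and edge-adding operators of a \emph{safe} $k$-expression as an application of the connect operator $\eta_{\cT}$ followed by the surviving relabellings, so that the represented $k$-labeled graph is unchanged — has already been done in the proof of Proposition~\ref{equivalent}, where safety of the input $k$-expression is secured by Lemma~\ref{lem:safe-expression}. The only thing left for the corollary is the trivial bookkeeping that ``represented by a $k$-expression'' is the definition of clique width $k$ and ``represented by an extended $k$-expression'' is the definition of extended clique width $k$; the induction on the structure of the expression is entirely inside Proposition~\ref{equivalent}, so the corollary's proof is a one-line appeal.
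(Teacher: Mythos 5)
Your proposal is correct and matches the paper, which likewise treats the corollary as an immediate consequence of Proposition~\ref{equivalent}: since clique width $k$ means representability by a $k$-expression and extended clique width $k$ means representability by an extended $k$-expression, the proposition directly yields the inclusion $\cQ_k\subseteq\cX_k$. All the substantive work indeed resides in Proposition~\ref{equivalent}, so the one-line appeal you describe is exactly what is intended.
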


Next we show that not all graphs of extended clique width $k$ 
also have clique width $k$. More precisely, we present a class of 
graph of extended clique width 2 that does not have bounded 
clique width.

\subsection{Graph class of bounded extended but not regular clique width}

An \emph{$n$-dimensional hypercube}, denoted $\HC_n$ is a graph 
whose vertices are $n$-bit binary vectors, and two vertices are adjacent 
if the Hamming distance between them is exactly~1. Let 
$\HCC=\{\HC_n\mid n\in\mathbb N\}$. We first show that each 
hypercube is represented by an extended 2-expression, and therefore 
has extended width 2. 

Extended 2-expressions for hypercubes are constructed by induction 
on the dimensionality of the hypercube. The base cases of induction are $\HC_0$ and 
$\HC_1$. An extended 2-expression for $\HC_0$ is $\cdot_1$, and 
an extended 2-expression for $\HC_1$ is 
$\eta_{\{(1, 2)\}}(\cdot_1, \cdot_2)$.

Suppose that for $m<n$ the graph $\HC_m$ has an extended 
2-expression. Let $\Phi$ be an extended 2-expression for $\HC_n$. 
Let $\vec{n} = (1, 1)$, let $\sigma:[2] \to 2$ be 
the function given by $\sigma(1) \mapsto 2, \sigma(2) \mapsto 1$, 
and let $\cS$ be
$
\{(1, 1, 1, 1), (2, 1, 2, 1),(1, 1, 2, 1)$, $(2, 1, 1, 1),
(1, 0, 2, 1), (2, 1, 1, 0), ( 2, 0, 1, 1), (1, 1, 2, 0)
\}.
$
Then it is not hard to see that $\beta_{\vec{n}, \sigma, \cS}\Phi$ 
is an extended 2-expression for $\HC_{n + 1}$.

\begin{lemma}\label{lem:hcubes}
$\HCC$ does not have bounded clique width.
\end{lemma}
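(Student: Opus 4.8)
The plan is to show that the hypercubes $\HC_n$ have clique width growing with $n$, which immediately implies that $\HCC$ cannot have bounded clique width. The cleanest route is to exploit a known lower-bound tool: clique width is bounded below in terms of the size of the largest induced subgraph on which the graph is ``densely structured'', and hypercubes contain such structure at every scale. Concretely, I would recall that $\HC_n$ contains $\HC_{n-1}\times K_2$, i.e.\ two copies of $\HC_{n-1}$ joined by a perfect matching, and more usefully that for any bipartition of the coordinates into two halves of size $n/2$, the hypercube $\HC_n$ is the ``product'' of two copies of $\HC_{n/2}$ in the sense that $V(\HC_n)=\{0,1\}^{n/2}\times\{0,1\}^{n/2}$ and $(x,y)\sim(x',y')$ iff ($x=x'$ and $y\sim y'$) or ($y=y'$ and $x\sim x'$). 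This grid-like structure is exactly the kind of configuration that forces clique width to grow.

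First I would make precise the connection between clique width and a combinatorial obstruction. The standard fact I would invoke is that if a graph $G$ has clique width at most $k$, then in any $k$-expression building $G$, at the last disjoint-union step $G=\Psi(\zG_1\bigoplus\zG_2)$ with $|V_1|,|V_2|\ge |V(G)|/3$ (such a ``balanced'' split can always be arranged by taking an appropriate subexpression), the bipartite graph between $V_1$ and $V_2$ has at most $2^k$ distinct ``neighbourhood types'' on each side — because two vertices of $V_1$ with the same label at that point have identical neighbourhoods in $V_2$. So it suffices to exhibit, in $\HC_n$, a vertex subset $S$ with a balanced partition $S=A\uplus B$ such that the bipartite graph between $A$ and $B$ has at least $f(n)\to\infty$ distinct neighbourhood types on the $A$-side. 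Using the coordinate-halving description above with $A$ and $B$ suitably chosen (e.g.\ $A$ = vectors whose first half lies in some large antichain-free set and whose second half is $\mathbf 0$, $B$ = vectors whose first half is $\mathbf 0$), one gets that distinct elements of $A$ induce distinct neighbourhoods in $B$, giving exponentially many types and hence clique width $\Omega(n)$.

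The main obstacle I expect is getting the neighbourhood-type count to actually grow while keeping the partition balanced: a naive choice of $A,B$ either makes the bipartite graph between them empty (so all types coincide) or makes one side tiny. The fix is to choose $A$ and $B$ to be two large sets of vectors supported on disjoint coordinate blocks but such that $A$-vectors and $B$-vectors can still be at Hamming distance $1$ — which forces the supports to ``interlock'' appropriately; concretely one uses $A=\{e_i: i\in I\}$-style perturbations of a common base point and $B$ a translate, so that $a\sim b$ in $\HC_n$ encodes a single-coordinate relationship that different $a$'s realize with different subsets of $B$. Once a subset of size $\Theta(2^n)$ with $\Theta(n)$-many (in fact $2^{\Theta(n)}$-many) neighbourhood types is produced and a balanced split is verified, the lower bound $\mathrm{cw}(\HC_n)\ge \log_2(\text{number of types})=\Omega(n)$ follows, and since this is unbounded over $n$, $\HCC$ does not have bounded clique width.

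Alternatively — and this may be cleaner to write — I would cite the known result that clique width is unbounded on the class of $n\times n$ grids (equivalently, that planar graphs, or just grids, have unbounded clique width), together with the observation that $\HC_{2m}$ contains the $2^m\times 2^m$ rook's-graph-like structure above and, via a short argument, an induced subgraph that is itself a large grid subdivision or contains a large grid as a subgraph relevant to clique width; then unboundedness transfers. If such a citation is acceptable in this paper, the whole lemma reduces to: (i) quote ``grids have unbounded clique width'', (ii) show $\HC_n$ contains arbitrarily large grids as (induced) subgraphs, and (iii) note that clique width of a graph containing $G$ as an induced subgraph is at least that of $G$ — so the bound propagates. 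The only care needed in (iii) is that clique width is monotone under induced subgraphs, which is standard.
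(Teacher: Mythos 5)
Your primary route has a genuine gap in the quantifiers. The lower-bound tool you invoke says: \emph{given} a $k$-expression, there exists some subexpression whose vertex set $S$ (which the expression, not you, determines) has the property that vertices of $S$ with equal labels have identical neighbourhoods in $V(G)\setminus S$. To conclude that the clique width is at least $t$ you therefore must show that \emph{every} admissible cut $(S,V(G)\setminus S)$ of $\HC_n$ produces at least $t$ distinct neighbourhood types; exhibiting one cleverly chosen partition $A\uplus B$ with many types proves nothing, since the hypothetical $k$-expression may split the cube along a completely different cut (and your suggested $A,B$, built from single-coordinate perturbations of a base point, are also far too small to be a balanced split of $\HC_n$). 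Repairing this would require a statement about all balanced cuts of the hypercube (an isoperimetric-type argument combined with $K_{3,2}$-freeness), which you have not supplied and which is substantially heavier than what the lemma needs. Note that the paper avoids balancedness entirely: it descends in the expression only until the represented subgraph has between $n/2$ and $n$ vertices (where $n$ is the dimension), pigeonholes to get three vertices $a,b,c$ with the same label, uses that $a$ has degree $n$ in $\HC_n$ while the current subgraph has fewer than $n$ vertices to find two outside neighbours $d,e$, and observes that the later connect operators then force a $K_{3,2}$ on $\{a,b,c\},\{d,e\}$ --- impossible in a hypercube, since three vertices with a common neighbour agree outside three distinct coordinates, forcing $d=e$.

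Your fallback route is sound in outline and genuinely different from the paper's: clique width is monotone under induced subgraphs, grids have unbounded clique width (Golumbic--Rotics), and $\HC_{2m}$ contains arbitrarily large \emph{induced} grids. Two caveats, though. First, the variants you also float would fail: a grid occurring as a mere subgraph gives no bound (clique width is not subgraph-monotone; cliques have clique width $2$ yet contain everything), and grid \emph{subdivisions} do not transfer clique-width lower bounds either, so only the induced-grid version is usable. Second, the containment itself still needs an argument; a short one is to take the induced path $0^m,10^{m-1},\dots,1^m$ in $\HC_m$ and note that, writing $\HC_{2m}$ as the Cartesian product of two copies of $\HC_m$, the subgraph induced on pairs of path vertices is exactly an $(m+1)\times(m+1)$ grid. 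With those repairs your second route works, at the price of citing external results, whereas the paper's proof is elementary and self-contained, using only $K_{3,2}$-freeness and the fact that hypercube degrees grow with the dimension.
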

 
\begin{proof}
Let $k$ be a constant, let $n$ be a sufficiently large number, 
let $V$ be a set of cardinality $n$, and, for the sake of contradiction, 
let  $\Phi_0$ be a $k$-expression for $\HC_n$. We 
define a finite sequence of $k$-expressions $\{\Phi_i\}$  
which starts with $k$-expression $\Phi_0$.

For $i \ge 0$, the $( i + 1)$-th element of the sequence is defined 
from the $i$-th element of the sequence as follows: if $\Phi_i$ 
represents a graph with more than 1 vertices, it has the form 
$\Psi_i(\Phi_{i,l} \bigoplus \Phi_{i,r})$, where 
$\Psi_i$ is a sequence of recolouring and renaming operators 
and $\Phi_{i,l}$ and $\Phi_{i,r}$ are two $k$-expressions. Then, 
let $\Phi_{i + 1}$ be the $k$-expression from $\{\Phi_{i,l},\Phi_{i,r}\}$
that represents the graph with the greatest number of vertices.
Thus, if $G_i$ is the graph represented by $\Phi_i$ and $G_{i+1}$
the graph represented by $\Phi_{i+1}$, then 
$|V(G_{i+1})| \geq |V(G_i)|/2$. 

Let $G_i$ be the first graph in the sequence such that 
$|V(G_i)|$ is less than $n$. As $G_i$ is the
first such graph, $|V(G_{i - 1})| \geq n$. By the observation above, 
$|V(G_i)|\ge n/2$.
By the Pigeonhole principle, for some  label in $[k]$, say $1$, 
there are more than $n/2k$ vertices of $G_i$ labeled with it. 
Since $n$ is sufficiently large, there are at least three vertices 
that are labeled $1$. Let us denote these vertices $a,b,c$. 
Since $G_i$ contains fewer than $n$ vertices and the degree of 
$a$ in $G$ equals $n$, there are at least two vertices outside 
$G_i$ adjacent to $a$. Denote them $d$ and $e$. Now, at 
some point in $\Phi$ vertices $d,e$ are connected to 
$a$. However, as $b,c$ have the same label as $a$, the 
operator $\eta_{j\ell}$ that connects $a$ to $d$ and $a$ to $e$, 
also adds edges $bd,be,cd,ce$. So, 
the subgraph of $\HC_n$ induced by $\{a, b, c, d,e\}$ contains a 
complete bipartite graph $K_{3,2}$. 

Let us view $a,b,c,d,e$ as $n$-bit vectors. Then $d$ differs from 
from each of $a,b,c$ in one bit, but the position of that bit is 
different, say, it is 1,2 and 3, respectively. This means that $a,b,c$
are equal in all the remaining positions. Since $e$ also differs 
from each of $a,b,c$ in one position, these must be the same 
positions, and $d=e$.
\end{proof}

\subsection{Finding an extended $k$-expression for a given graph}

Next we show how to find an extended $k$-expression for a given 
graph $G$ if it has one, in time $O^*(k^{|V(G)|})$. 

One of the ingredients of our algorithm is the problem of deciding
whether two $k$-labeled graphs are isomorphic. $k$-labeled 
graphs $\zG=(V_1,E_1,\pi_1),\zH=(V_2,E_2,\pi_2)$ are isomorphic 
if there exists an isomorphism $\vf$ from the graph $G=(V_1,E_1)$
to $H=(V_2,E_2)$ such that $\pi_1(a)=\pi_2(\vf(a))$ for $a\in V_1$. 
We show that 
this problem can be reduced to the regular {\sc Graph Isomorphism}
problem and use the celebrated result by Babai \cite{Babai16:graph} 
that there is an algorithm that, given graphs $G$ and $H$, decides 
whether there exists an isomorphism between $G$ and $H$ in time 
$O(2^{log(|V(G)|)^{O(1)}})$.  

\begin{proposition}\label{part-iso-time}
There is an algorithm that decides if $k$-labeled graphs $\zG,\zH$ are 
isomorphic and runs in time $O^*(2^{log((k + 2)n )^{O(1)}})$. 
\end{proposition}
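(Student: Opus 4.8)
The plan is to reduce the $k$-labeled graph isomorphism problem to ordinary {\sc Graph Isomorphism} by encoding the label of each vertex as a small rigid gadget attached to it, so that any isomorphism of the resulting uncolored graphs is forced to respect labels. Concretely, given $k$-labeled graphs $\zG=(V_1,E_1,\pi_1)$ and $\zH=(V_2,E_2,\pi_2)$ on $n$ vertices, I would first check that the two label-class size vectors agree (if not, output ``not isomorphic''). Then build auxiliary graphs $G^*$ and $H^*$ as follows: keep all original vertices and edges, and for each label $i\in[k]$ introduce a path (or more robustly, a small tree) of $i+1$ fresh vertices whose endpoint is joined to every vertex carrying label $i$. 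Using paths of pairwise distinct lengths $2,3,\dots,k+1$ (offsetting so no length collides with a degree already present) guarantees that the gadget for label $i$ is distinguishable from the gadget for label $j$ and from original vertices, while adding only $O(k^2)$ vertices; hence $G^*,H^*$ have at most $n+O(k^2)\le (k+2)n$ vertices for $n$ sufficiently large (and the small cases are handled by brute force within the same time bound).

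The key step is the correctness equivalence: $\zG\cong\zH$ as $k$-labeled graphs if and only if $G^*\cong H^*$ as graphs. One direction is immediate — a label-respecting isomorphism extends to the gadgets vertex-by-vertex. For the converse I would argue that any isomorphism $\psi\colon G^*\to H^*$ must map the gadget for label $i$ in $G^*$ onto the gadget for label $i$ in $H^*$: the gadget vertices are the unique vertices of their specific small degrees (degree $1$ and $2$ along the attached path) that are not adjacent to any high-degree original vertex, so their local structure pins them down; once the attachment vertex of the label-$i$ gadget is identified, it and all its original-graph neighbors are exactly the label-$i$ class, forcing $\psi$ to carry label-$i$ vertices of $G$ to label-$i$ vertices of $H$. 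This needs a little care when $n$ is small (an original vertex could accidentally have low degree), which is why I'd reserve an explicit small-instance threshold and handle $n=O(1)$ by exhaustive search, still within $O^*(2^{\log((k+2)n)^{O(1)}})$.

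Finally, for the running time: constructing $G^*$ and $H^*$ takes polynomial time, they have $N\le (k+2)n$ vertices, and by Babai's theorem \cite{Babai16:graph} deciding $G^*\cong H^*$ takes time $2^{\log(N)^{O(1)}}=2^{\log((k+2)n)^{O(1)}}$; folding the polynomial construction cost into $O^*$ gives the claimed bound $O^*(2^{\log((k+2)n)^{O(1)}})$.

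I expect the main obstacle to be making the gadget genuinely rigid against ``accidental'' automorphisms — ensuring no original vertex can be confused with a gadget vertex and no two gadgets of different labels can be swapped. The cleanest fix is to make the gadgets have a distinctive high-degree feature unavailable elsewhere: e.g., attach to label $i$ a fresh vertex of degree $n+i$ (or a clique of size depending on $i$), which is larger than any degree occurring in $\HC$-type or general bounded-size-label graphs only after checking degree bounds; absent such a global degree bound one instead uses a rooted tree whose isomorphism type itself encodes $i$, which is automatically rigid. I would present the path-length version with the explicit offset and the small-$n$ carve-out, since it is the least technical and the degree/structure bookkeeping is entirely routine.
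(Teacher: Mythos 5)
Your overall strategy is the same as the paper's: reduce $k$-labeled isomorphism to plain {\sc Graph Isomorphism} by attaching label-encoding gadgets so that the vertex count grows only to about $(k+2)n$, then invoke Babai's quasipolynomial algorithm. The running-time part of your argument is fine. The gap is in the one step that actually carries the proof: forcing every isomorphism $\psi\colon G^*\to H^*$ to respect the gadgets and hence the labels. Your justification --- that the gadget path vertices are ``the unique vertices of their specific small degrees that are not adjacent to any high-degree original vertex'' --- is false in general. If a label class is small, its attachment vertex has small degree, and if $G$ itself has small maximum degree there need not be any high-degree vertex to anchor the identification at all. For instance, with $k=2$ and $G$ a single edge labeled $1,2$, your $G^*$ is a path on $7$ vertices: every vertex has degree at most $2$, and gadget vertices are locally indistinguishable from original ones, so the claimed local pinning argument simply does not apply. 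Whether your path-gadget construction is nonetheless globally correct is not established by anything you wrote (and the ``offset so no length collides with a degree'' remark does not address the real failure mode, which is confusion between gadget vertices and low-degree original vertices, not degree/length collisions); the small-$n$ brute-force carve-out does not help, since small label classes and low-degree graphs occur for arbitrarily large $n$.

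You half-identify the right fix yourself --- attaching structures too large to occur inside an $n$-vertex graph --- but you leave it as an unproven aside, and the claim that a rooted tree encoding $i$ is ``automatically rigid'' is again unsubstantiated in the presence of the rest of the graph. This is exactly where the paper does real work: it attaches a single auxiliary structure $T_{q,n}$, a chain of copies of a rigid gadget from Fomin et al.\ in which the special vertices $z_i$ are replaced by cliques $K_{n+3}$ (too large to be confused with any subgraph of an $n$-vertex graph), and imports from \cite{Fomin15:lower} the property that any isomorphism fixes each $z_i$; the label-class vertices $a_i$ hang off the $z_i$, so $a_i\mapsto a_i'$ is forced, which in turn forces label preservation (Lemma~\ref{lem:claims}). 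To make your proof complete you would have to either prove rigidity of your path gadgets globally (which your local-degree argument cannot do) or switch to the size-$(n+\Theta(1))$ clique device and prove the corresponding forcing lemma, i.e., essentially reconstruct the paper's argument.
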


Before proving Proposition \ref{part-iso-time}, we introduce a gadget 
construction, see, Fig~\ref{D}, which we borrow from 
\cite{Fomin15:lower} with minor modifications. 
Let us denote the  gadget in Fig~\ref{D} by $D$. It is proved in 
\cite{Fomin15:lower} that for each homomorphism 
$\vf: D \rightarrow D$ 
and $i \in [5]$, $\phi(z) = z$ and $\vf(z) \neq \vf(x_i)$.

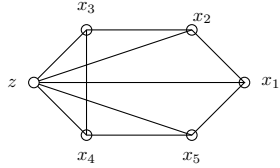
\begin{figure}[h]
\centering
\begin{tikzpicture}[scale=0.7, transform shape]
\draw (0, 1) --(1, 2) --(3, 2)-- (4, 1) --(0, 1)-- 
               (1, 0) --(3, 0) -- (4, 1);
\draw (0, 1) --(3, 2);
\draw (0, 1) --(3, 0);
\draw (1, 2) --(1, 0);
\draw [fill=white] (0, 1) circle [radius=0.1];
\node [left] at (-0.2, 1) {$z$};

\draw (1, 2) circle [radius=0.1];
\node [above] at (1, 2.2) {$x_3$};

\draw (3, 2) circle [radius=0.1];
\node [above] at (3.2, 2) {$x_2$};

\draw (3, 0) circle [radius=0.1];
\node [below] at (3, -0.2) {$x_5$};

\draw (1, 0) circle [radius=0.1];
\node [below] at (1, -0.2) {$x_4$};

\draw (4, 1) circle [radius=0.1];
\node [right] at (4.2, 1) {$x_1$};

\end{tikzpicture}
\caption{The graph $D$}
\label{D}
\end{figure}

We join $q $ more such gadgets $D$ in a row to construct a larger 
gadget $T_{q }$ that has total $q + 1$ copy of gadget $D$. 
We label the left-most vertex of the $i$-th block in the chain by 
$z_{i - 1}$ except for the first block, see, Fig~\ref{T-l}. Each   
automorphism of $T_{q }$ preserves the order on $z$'s which means 
for each $i \in [q]$ and isomorphism 
$\vf: T_{q } \rightarrow T_{q}$, $\vf(z_i) = z_i$. This property 
is also proved in \cite{Fomin15:lower}. 

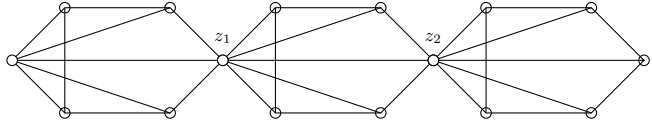
\begin{figure}[h]
\centering
\begin{tikzpicture}[scale=0.7, transform shape]
\draw (0, 1) --(1, 2) --(3, 2)-- (4, 1) --(0, 1)-- 
               (1, 0) --(3, 0) -- (4, 1);
\draw (0, 1) --(3, 2);
\draw (0, 1) --(3, 0);
\draw (1, 2) --(1, 0);
\draw [fill=white] (0, 1) circle [radius=0.1];

\draw (1, 2) circle [radius=0.1];
\draw (3, 2) circle [radius=0.1];

\draw (3, 0) circle [radius=0.1];
\draw (1, 0) circle [radius=0.1];
---------------
\draw (4, 1) --(5, 2) --(7, 2) --(8, 1) --(4, 1)-- 
               (5, 0) --(7, 0) --(8, 1);
\draw (4, 1) --(7, 2);
\draw (4, 1) --(7, 0);
\draw (5, 2) --(5, 0);
\draw (4, 1) circle [radius=0.1];

\draw (5, 2) circle [radius=0.1];
\draw (7, 2) circle [radius=0.1];
\draw (8, 1) circle [radius=0.1];
\draw (7, 0) circle [radius=0.1];
\draw (5, 0) circle [radius=0.1];

\draw [fill=white] (4, 1) circle [radius=0.1];
\node [above] at (4, 1.2) {$z_1$};

-------------------------
\draw (8, 1) --(9, 2) --(11, 2) --(12, 1) --(8, 1)-- 
               (9, 0) --(11, 0) --(12, 1);
\draw (8, 1) --(11, 2);
\draw (8, 1) --(11, 0);
\draw (9, 2) --(9, 0);
\draw (8, 1) circle [radius=0.1];

\draw (9, 2) circle [radius=0.1];
\draw (11, 2) circle [radius=0.1];
\draw (12, 1) circle [radius=0.1];
\draw (11, 0) circle [radius=0.1];
\draw (9, 0) circle [radius=0.1];

\draw [fill=white] (8, 1) circle [radius=0.1];
\node [above] at (8, 1.2) {$z_2$};
\end{tikzpicture}
\caption{Gadget $T_q$}
\label{T-l}

\end{figure}

For each $i \in [q]$, we replace  $z_i$ in $T_q$ with a copy of 
$K_{n + 3}$, a clique on $n + 3$ vertices, and connect every 
vertex of $K_{n + 3}$ to all the neighbours of $z_i$, its special vertex, 
in the next subsequent block that is $(i + 1)$th block. Note that 
only one special   vertex of the $(i + 1)$th copy of $K_{n + 3}$, 
that is called $z_i$,  is connected to some vertices of previous block. 
Denote the new graph by $T_{q, n }$. see Fig~\ref{injectclique} 

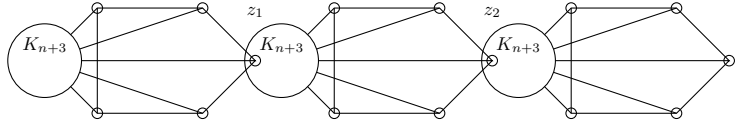
\begin{figure}[h]
\centering
\begin{tikzpicture}[scale=0.7, transform shape]
\draw (9, 1) --(10, 2) --(12, 2) --(13, 1) --(9, 1)-- 
               (10, 0) --(12, 0) --(13, 1);
\draw (9, 1) --(12, 2);
\draw (9, 1) --(12, 0);
\draw (10, 2) --(10, 0);
\draw (9, 1) circle [radius=0.1];

\draw (10, 2) circle [radius=0.1];
\draw (12, 2) circle [radius=0.1];
\draw (13, 1) circle [radius=0.1];
\draw (12, 0) circle [radius=0.1];
\draw (10, 0) circle [radius=0.1];
---------------
\draw [fill=white] (9, 1) circle [radius=0.7];
\node [above] at (9, 1) {$K_{n + 3}$};
\node [above] at (8.5, 1.7) {$z_2$};
----------------------------
\draw (4.5, 1) --(5.5, 2) --(7.5, 2) --(8.5, 1) --(4.5, 1)-- 
               (5.5, 0) --(7.5, 0) --(8.5, 1);
\draw (4.5, 1) --(7.5, 2);
\draw (4.5, 1) --(7.5, 0);
\draw (5.5, 2) --(5.5, 0);
\draw (4.5, 1) circle [radius=0.1];

\draw (5.5, 2) circle [radius=0.1];
\draw (7.5, 2) circle [radius=0.1];
\draw (8.5, 1) circle [radius=0.1];
\draw (7.5, 0) circle [radius=0.1];
\draw (5.5, 0) circle [radius=0.1];
---------------
\draw [fill=white] (4.5, 1) circle [radius=0.7];
\node [above] at (4.5, 1) {$K_{n + 3}$};
\node [above] at (4, 1.7) {$z_1$};

------------------------
\draw (0, 1) --(1, 2) --(3, 2)-- (4, 1) --(0, 1)-- 
               (1, 0) --(3, 0) -- (4, 1);
\draw (0, 1) --(3, 2);
\draw (0, 1) --(3, 0);
\draw (1, 2) --(1, 0);
\draw [fill=white] (0, 1) circle [radius=0.7];
\node [above] at (0, 1) {$K_{n + 3}$};

\draw (1, 2) circle [radius=0.1];
\draw (3, 2) circle [radius=0.1];

\draw (3, 0) circle [radius=0.1];
\draw (1, 0) circle [radius=0.1];
\draw (4, 1) circle [radius=0.1];
\end{tikzpicture}
\caption{Gadget $T_{q, n}$}
\label{injectclique}
\end{figure}

Let $\zG=(V_1,E_1,\pi_1),\zH=(V_2,E_2,\pi_2)$ be $k$-labeled 
graphs, and let $G=(V_1,E_1),H=(V_2,E_2)$.
Now we construct an instance $G',H'$ of
{\sc Graph Isomorphism}. Let $A_q=\{a_1, ..., a_q\}$. Then the graph $G'$ 
consists of a copy of $G$, a copy of $T_{q, n}$, and a copy 
of vertices from $A_q$ with the following additional edges: for each 
$i \in [q]$ the vertex $z_i$ from the $(i + 1)$th block of 
$T_{q, n}$ is adjacent to the vertex $a_i$. Also, we add edges 
from $G$ to vertices from $A_q$ : for a vertex $b\in V_1$ with 
$\pi_1(b)=i$, we add an edge $\{b, a_i\}$, see, Fig~\ref{G'}.

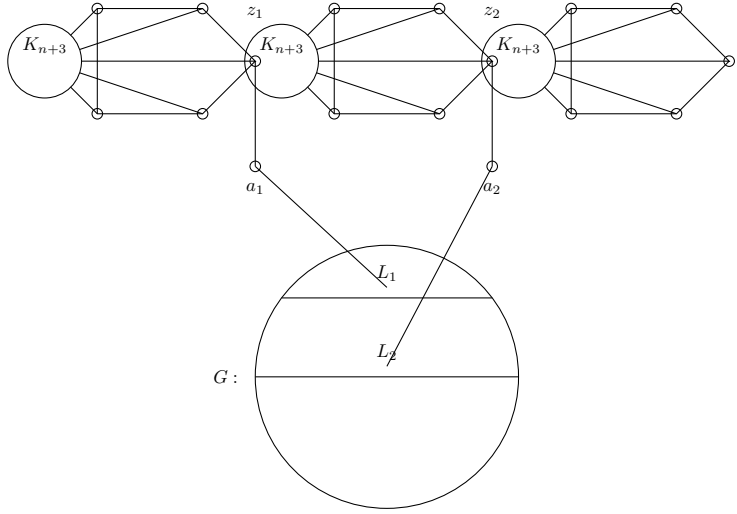
\begin{figure}[h]
\centering
\begin{tikzpicture}[scale=0.7, transform shape]
\draw (9, 1) --(10, 2) --(12, 2) --(13, 1) --(9, 1)-- 
               (10, 0) --(12, 0) --(13, 1);
\draw (9, 1) --(12, 2);
\draw (9, 1) --(12, 0);
\draw (10, 2) --(10, 0);
\draw (9, 1) circle [radius=0.1];

\draw (10, 2) circle [radius=0.1];
\draw (12, 2) circle [radius=0.1];
\draw (13, 1) circle [radius=0.1];
\draw (12, 0) circle [radius=0.1];
\draw (10, 0) circle [radius=0.1];
---------------
\draw [fill=white] (9, 1) circle [radius=0.7];
\node [above] at (9, 1) {$K_{n + 3}$};
\node [above] at (8.5, 1.7) {$z_2$};
----------------------------
\draw (4.5, 1) --(5.5, 2) --(7.5, 2) --(8.5, 1) --(4.5, 1)-- 
               (5.5, 0) --(7.5, 0) --(8.5, 1);
\draw (4.5, 1) --(7.5, 2);
\draw (4.5, 1) --(7.5, 0);
\draw (5.5, 2) --(5.5, 0);
\draw (4.5, 1) circle [radius=0.1];

\draw (5.5, 2) circle [radius=0.1];
\draw (7.5, 2) circle [radius=0.1];
\draw (8.5, 1) circle [radius=0.1];
\draw (7.5, 0) circle [radius=0.1];
\draw (5.5, 0) circle [radius=0.1];

\draw (8.5, -1) circle [radius=0.1];
\node [below] at (8.5, -1.2) {$a_2$};
\draw (8.5, 1) -- (8.5, -1);
---------------
\draw [fill=white] (4.5, 1) circle [radius=0.7];
\node [above] at (4.5, 1) {$K_{n + 3}$};
\node [above] at (4, 1.7) {$z_1$};

------------------------
\draw (0, 1) --(1, 2) --(3, 2)-- (4, 1) --(0, 1)-- 
               (1, 0) --(3, 0) -- (4, 1);
\draw (0, 1) --(3, 2);
\draw (0, 1) --(3, 0);
\draw (1, 2) --(1, 0);
\draw [fill=white] (0, 1) circle [radius=0.7];
\node [above] at (0, 1) {$K_{n + 3}$};

\draw (1, 2) circle [radius=0.1];
\draw (3, 2) circle [radius=0.1];

\draw (3, 0) circle [radius=0.1];
\draw (1, 0) circle [radius=0.1];
\draw (4, 1) circle [radius=0.1];

\draw (4, -1) circle [radius=0.1];
\node [below] at (4, -1.2) {$a_1$};
\draw (4, 1) -- (4, -1);

\draw (6.5, -5) circle [radius=2.5];
\draw (4, -5) --(9, -5);
\draw (4.5, -3.5) --(8.5, -3.5);
\node [above]  at (6.5, -3.3) {$L_1$};
\node [above]  at (6.5, -4.8) {$L_2$};
\node [left] at (3.8, -5) {$G:$};


\draw (4, -1) --(6.5, -3.3);
\draw (8.5, -1) --(6.5, -4.8);
\end{tikzpicture}
\caption{Constructing $G'$}

\label{G'}
\end{figure}

Graph $H'$ is constructed in a similar way. Let 
$T'_{q, n +3}$ be a copy of $T_{q, n}$ such that for each 
$i \in [q]$, $z_i$ is renamed to $z'_i$ in $T'_{q, n}$. Also, 
let  $A'_q$ be the set of vertices $\{a'_1, ..., a'_q\}$. Graph $H'$ 
consists 
of a copy of $H$, a copy of $T'_{q, n}$, and a copy of vertices 
from $A'_q$ with the following additional edges: for each 
$i \in [q]$, the vertex $z'_i$ from the $(i + 1)$th block of 
$T'_{q, n}$ is adjacent to the vertex $a'_i$. Also, we add 
edges from $H$ to vertices from $A'_q$: for a vertex $c \in V_2$ 
with $\pi_2(b)=i$, we add an edge $\{c, a'_i\}$. 

We need several properties of this construction. Some 
of them are proved in  \cite{Fomin15:lower} and the 
rest are proved here. 

\begin{lemma}\label{lem:claims}
\begin{itemize}
\item[(1)]
Any isomorphism $\vf$ from $G'$ to $H'$ maps $T_{q, n}$ to 
$T'_{q, n}$. 
\item[(2)]
Any isomorphism $\vf$ from $G'$ to $H'$ bijectively maps 
$T_{q, n}$ to $T'_{q, n}$ so that the for each 
$i \in [q]$, $z_i$ is mapped to $z'_i$.
\item[(3)]
Any isomorphism $\vf$ from $G'$ to $H'$ maps vertices from 
$A_q$ to vertices from $A'_q$ so that $a_i$ is mapped to 
$a'_i$ for each $i \in [q]$.  
\item[(4)]
Any isomorphism $\vf$ from $G'$ to $H'$ maps $G$ to $H$. 
\label{maps-G-to_H}
\end{itemize}
\end{lemma}

\begin{proof}
(1),(2) are proved in \cite{Fomin15:lower}.

\smallskip
(3) By items (1),(2)
the restriction 
of $\vf$ on $T_{q, n}$ is a bijective and surjective mapping 
from $T_{q, n}$ to $T'_{q, n}$, such that $\vf(z_i) = z'_i$ 
for each $i \in [q]$.  This also means that  each vertex of 
$T'_{q, n}$  is the image of some vertex from $T_{q, n}$ 
under $\vf$. Thus,  for each $i \in [q]$, $\vf(a_i)$ cannot be from 
$T'_{q, n}$ as otherwise $\vf$ would not be a bijection. On the 
other hand,  for each $i \in [q]$, since $\vf$ is an isomorphism and  
$(z_i, a_i)$ is an edge of $G'$, $\vf(a_i)$ is adjacent to 
$\vf(z_i) = z'_i$. Now, for each $i \in [q]$, the adjacent vertices 
of  $z'_i$ are either in $T'_{q, n}$  or it is $a'_i$. As $\vf(a_i)$ 
cannot be in $T'_{q, n}$ and has to be adjacent to $z'_i$, thus  
$\vf(a_i) = a'_i$. 

\smallskip

(4) By items (1)--(3) 
the restriction of $\vf$ on 
$T_{q, n}$ and $A_q$ is a bijective and surjective mapping 
from $T_{q, n}$ to $T'_{q, n}$, and from  $A_q$ 
to from $A'_q$. Thus each vertex from  
$T'_{q, n}$ or $A'_q$ is the image of some vertex in $G'$ 
that is also in  $T_{q, n}$ or $A_q$, under the $\vf$. Hence 
the image of each vertex of $G$ is a vertex from $H$, under $\vf$. 
\end{proof}

Now we can prove Proposition~\ref{part-iso-time}.

\begin{proof}[Proof of Proposition~\ref{part-iso-time}]
Let $G'$ and $H'$ be two graphs constructed from $G$ and $H$ 
as described above. First we show that there is a  
isomorphism from $\zG$ to $\zH$ if and only if there is an 
isomorphism from $G'$ to $H'$.

Let $\vf$ be an isomorphism from $\zG$ to $\zH$. We show that 
its natural extension $\vf'$ mapping $T_{q, n}$ to $T'_{q, n}$ 
and vertices from $A_q$ to vertices from $A'_q$ is an isomorphism 
from $G'$ to $H'$. The fact that it maps edges of $G'$ to edges of 
$H'$ is non-trivial only for edges of $G'$ from $G$ to vertices of 
$A_q$. Consider an edge from a vertex $b\in V_1$ with $\pi_1(b)=i$ 
to the vertex $a_i$. We have $\pi_2(\vf'(b))=i$ as the $\vf'$ is an 
extension of an isomorphism from $\zG$ to $\zH$. Also 
$\vf'(a_i)=a'_i$ means that $\vf'$ maps $(a_i, b)$ to an edge of $H'$. 

For the reverse direction, let $\vf'$ be an isomorphism from $G'$ 
to $H'$. We show that its restriction on $V(G)$ is an 
isomorphism from $\zG$ to $\zH$. Since by 
Lemma~\ref{lem:claims}(4) 
$\vf'$ maps $G$ to $H$, it is enough to check that $\vf'$ preserves
the labelling. In order to do this consider a vertex $b \in V_1$ with
$\pi_1(b)=i$. Then $\pi_2(\vf(b))=i$ as well, as otherwise 
$(\vf'(b), a'_i)$ would not be an edge of $H'$. 

Therefore, there is a polynomial time reduction the Isomorphism
problem for $k$-labeled graphs to {\sc Graph Isomorphism}
such that the number of vertices only grows by a constant 
factor $(k + 2)$. Now by \cite{Babai16:graph}
there is an algorithm that solves the instance $(G', H')$ in time 
$O(2^{log((k + 2)n )^{O(1)}})$.
The result follows.
\end{proof}

The following theorem is the main result of this section.

\begin{theorem}\label{the:k-expr-gen}
If graph $G$ has extended clique width $k$, then an extended 
$k$-expression for $G$ can be found in time 
$O^*((4k + 4)^{|V(G)|})$.
\end{theorem}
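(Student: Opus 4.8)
The plan is to search for an extended $k$-expression for $G$ by a dynamic-programming/exhaustive procedure over induced subgraphs of $G$, mirroring the standard way one recovers a clique-width expression from the fact that a graph has bounded clique width, but adapted to the new beta and connect operators. The key observation is that by Lemma~\ref{lem:extended-safe} every extended $k$-expression is safe, so if $\Phi = \eta_\cT(\Phi_1,\Phi_2)$ represents $\zG = (V,E,\pi)$, then $V$ is partitioned as $V(\zG_1)\uplus V(\zG_2)$ with $\zG_i = \zG[V(\zG_i)]$; and if $\Phi = \beta_{\vec n,\sigma,\cS}(\Phi_1)$ then $\zG_1$ is an induced subgraph of $\zG$ on the set of original vertices, while the remaining vertices are determined by $\vec n$, $\sigma$, $\cS$ once the labelling of $\zG_1$ is known. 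So the subexpressions of a (safe) extended $k$-expression for $G$ correspond to $k$-labelled induced subgraphs of $G$ (up to isomorphism), and the top-level operator applied at each stage has only polynomially (in $|V(G)|$, for fixed $k$) many choices of parameters.

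First I would set up the state space: a \emph{labelled piece} is a pair $(S,\pi)$ with $S\sse V(G)$ and $\pi\colon S\to[k]$, and we want to mark a piece ``reachable'' if $\zG[S]$ equipped with $\pi$ is represented by some extended $k$-expression all of whose represented graphs are induced subgraphs of $G$. The number of pieces is $O^*(k^{|V(G)|})$ since choosing $S$ and then $\pi$ on $S$ costs $\sum_{S} k^{|S|} = (k+1)^{|V(G)|}$, which is where a constant like $k+1$ per vertex enters; the factor $4k+4 = 4(k+1)$ in the statement presumably comes from combining this with the branching over which subset of $S$ plays the role of $V(\zG_1)$ in a connect step (another factor $2$ per vertex) and a further factor $2$ absorbed from handling the beta operator and the relabelling sequences. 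Concretely: iterate in order of increasing $|S|$; for each piece mark it reachable if (i) $|S|=1$ and $\pi$ is anything (the $\cdot_i$ base case); or (ii) there is a nonempty proper $S_1\subsetneq S$, a set $\cT\sse[k]\times[k]$, and labellings $\pi_1,\pi_2$ on $S_1,S\setminus S_1$ with $(S_1,\pi_1),(S\setminus S_1,\pi_2)$ already reachable, such that $\eta_\cT$ applied to $\zG[S_1]$ (with $\pi_1$) and $\zG[S\setminus S_1]$ (with $\pi_2$) yields exactly $\zG[S]$ with labelling $\pi$ — this is a direct check of the edges between $S_1$ and $S\setminus S_1$ against $\cT$ and of the label compatibility; or (iii) there is a set $S_0\sse S$, a labelling $\pi_0$ on $S_0$ with $(S_0,\pi_0)$ reachable, and parameters $\vec n,\sigma,\cS$ such that $\beta_{\vec n,\sigma,\cS}$ applied to $(\zG[S_0],\pi_0)$ produces a $k$-labelled graph isomorphic to $(\zG[S],\pi)$; or (iv) $(S,\pi')$ is reachable for some $\pi'$ that relabels into $\pi$ via a sequence of $\rho_{i\to j}$'s (only $O(1)$ distinct maps $[k]\to[k]$ to consider). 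We record for each reachable piece a witnessing operator and children, so that at the end, if $(V(G),\pi)$ is reachable for some $\pi$, we can read off the expression by backtracking; its size is polynomial in $|V(G)|$ because each step strictly decreases $\sum|S|$ or is a bounded relabelling block, as reflected in $\size(\Phi)$.

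The correctness direction ``$G$ has extended clique width $k$ $\Rightarrow$ the search succeeds'' follows by structural induction on a safe extended $k$-expression $\Phi$ for $G$ (which exists by definition, and is safe by Lemma~\ref{lem:extended-safe}): each subexpression represents a $k$-labelled graph which, by safety, is the induced sub-$k$-labelled-graph of $\zG$ on its vertex set, hence is a labelled piece that the induction hypothesis shows is marked reachable, and the top operator of $\Phi$ is one of the four cases above. The converse direction is immediate because every reachability certificate \emph{is} an extended $k$-expression. For case (iii) the verification requires testing whether two $k$-labelled graphs are isomorphic, and this is exactly where Proposition~\ref{part-iso-time} is invoked: there are only polynomially many $(\vec n,\sigma,\cS)$ for fixed $k$, for each we build the candidate $k$-labelled graph $\beta_{\vec n,\sigma,\cS}(\zG[S_0],\pi_0)$ and test it for isomorphism with $(\zG[S],\pi)$ in quasipolynomial time, which is absorbed into the $O^*$.

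\textbf{Main obstacle.} The delicate point is handling the beta operator, since it is the one operator with no analogue in ordinary clique-width and, unlike $\eta_\cT$, it genuinely creates \emph{new} vertices, so the ``child'' piece $S_0$ is not simply a subset realizing an induced subgraph in an obvious partition — we must argue that the original vertices of $\beta_{\vec n,\sigma,\cS}(\zG_1)$ are exactly a subset of $V(G)$ inducing a copy of $\zG_1$, and that the non-original vertices and all their incident edges are forced by $(\vec n,\sigma,\cS)$ together with $\zG_1$'s labelling. Isolating the original vertices inside $\zG[S]$ may not be unique (different choices of $S_0$ and parameters could yield isomorphic results), so the search must branch over all candidate subsets $S_0$, and one must check that this branching, combined with the labelling choices, stays within the claimed $O^*((4k+4)^{|V(G)|})$ budget; bounding this sum carefully (essentially $\sum_{S_0\sse S\sse V} k^{|S_0|}\cdot k^{|S\setminus S_0|}\cdot(\text{poly})$, reorganized so each vertex contributes a constant times $k$) is the crux of the running-time analysis and is what I would expect to spend the most care on.
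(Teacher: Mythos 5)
Your proposal is essentially the paper's own algorithm: a dynamic program over the $(k+1)^{|V(G)|}$ $k$-labelled induced subgraphs of $G$, with a case analysis on the last operator (single vertex; connect via a bipartition and a set $\cT$; beta via a candidate subset plus parameters, verified by the labelled-graph isomorphism test of Proposition~\ref{part-iso-time}; and a relabelling-closure phase), with completeness argued by structural induction using safety — this is exactly the paper's two-phase procedure. The one point to tighten is precisely the "main obstacle" you flag: in the beta case you should not branch over an arbitrary labelling $\pi_0$ of $S_0$, since that gives on the order of $(k^2+k+1)^{|V(G)|}$ state--child pairs, which exceeds the claimed $(4k+4)^{|V(G)|}$ already for $k\ge 4$. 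The paper's resolution is that the beta operator leaves the labels of the original vertices unchanged, so by safety (together with the fact that an extended $k$-expression for a graph is also one for any isomorphic copy) the child may be taken to be $\zG[S_0]$ with the inherited labelling $\pi\vert_{S_0}$; then each state branches only over the $2^{|S|}$ choices of $S_0$ (respectively the $2^{|S|}$ bipartitions in the connect case), and with the quasipolynomial isomorphism test generously absorbed into another $2^{|V(G)|}$ factor the total becomes $(k+1)^{|V(G)|}\cdot 4^{|V(G)|}=(4k+4)^{|V(G)|}$, matching the paper's accounting.
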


\begin{proof}
We create an array $N$ of size $(k + 1)^n$ whose entries $N(\zG')$ 
are labeled with a $k$-labelling $\zG'$ of a subgraph $G'$ of $G$. 
For any entry $N(\zG')$ the $k$-labeled graph $\zG'$ either has an 
extended $k$-expression or it does not. The goal is to set the value of 
each entry $N(\zG')$ to some extended 
$k$-expression for $\zG'$ if it has one and to "no" otherwise.   

Now we consider more detailed possibilities for each $\zG'$. 
There are four cases. Case~1 takes place if $\zG'$ has an 
extended $k$-expression that ends with a Beta operator; 
Case~2 takes place if $\zG'$ it 
has an extended $k$-expression that ends with a connect operator; 
Case~3 takes place if $\zG'$ has an extended 
$k$-expression that ends with a sequence of relabelling operators; 
and, finally, Case~4 takes place if $\zG'$ does not have an  
extended $k$-expression. 

All one-element $k$-labeled graphs are obviously represented by an 
extended $k$-expression. Let us suppose the values of each entry 
$N(\zG')$, where $\zG'$ contains at most $n-1$ vertices is  set 
correctly. Then, we want to set the correct values for entries of the 
array whose associated $k$-labeled graph has exactly $n$ vertices. 
We use the dynamic programming approach that consists of two 
phases. In Phase 1, for each entry $N(\zG')$ such that $\zG'$ has 
$n$ vertices, we check if $\zG'$ satisfies the conditions of Case~1. 
Then for each $k$-labeled graph like this that does not satisfy the
conditions of Case~1 we check if it falls in
Case~2. In Phase~2,  by relabelling $\zG'$ for which 
$N(\zG')$ is assigned a value, we find 
a new extended $k$-expression for $\zG'$ that do not satisfy 
the conditions of Cases~1 and~2, but satisfy the conditions of 
Case~3. In the end, for each 
$\zG'$ that belongs to none of Cases~1, 2, or~3, we 
set the value $N(\zG')$ to "no" because it does not have an extended 
$k$-expression. In the rest of this proof, for a $k$-labeled graph 
$\zG'$, we show how to check if it satisfies the conditions each of 
Cases~1 and~2.

Let $\zG'= (V', E', \pi')$ be a $k$-labeled graph with $|V'|=n$
and it has an extended $k$-expression that ends 
with a beta operator. Then there is an induced subgraph $\zG'_1$ 
of $\zG'$ such that the result of application of a beta operator to 
$\zG'_1$ is isomorphic to $\zG'$ and $\zG'_1$ has an extended 
$k$-expression. Thus, there exist  $\sigma: [k]\to[k]$, 
$\vec{n}\in (\{0\} \cup [k])^k$, $\cS \subseteq\cL(\vec{n})$, and 
a set $V'_1 \subset V'$, such that 
\begin{itemize}
\item[(A)] 
$\zG'_2=(V'_2,E'_2,\pi'_2)=\beta_{\vec{n},\zS,\sigma}(\zG'[V'_1])$ 
is isomorphic to $\zG$, and 
\item[(B)] 
$\zG'[V'_1]$ has an extended $k$-expression. 
\end{itemize}

Conversely, if there exist $V'_1\subset V'$, $\sigma: [k]\to[k]$, 
$\vec{n} \in (\{0\} \cup [k])^k$, $\cS\sse\cL(\vec{n})$ satisfying
conditions (A),(B), then $\zG'_2$ has an extended $k$-expression 
that ends with a beta  operator. As $\zG'_2$ and $\zG'$ are isomorphic, 
$\zG'$ has an extended $k$-expression that ends with a 
beta  operator as well.
Thus, the sufficient and necessary conditions for  $\zG'$ to have an 
extended $k$-expression  that ends with a Beta operator, is that 
there exist $V'_1\subset V', \sigma: [k]\to[k],
\vec{n}\in (\{0\}\cup [k])^k,\cS\in\cL({\vec{n}})$ satisfying 
(A),(B). 

The algorithm now searches through all possible
selection of $V'_1,\sigma,\cS$, to check if conditions (A),(B)
satisfied for any of them. Let us evaluate the running time of this 
procedure. Checking condition (A) takes time 
$O(2^{log((k + 2)n )^{O(1)}})$ by Proposition~\ref{part-iso-time}, 
while condition (B) can be verified by looking up the existing entry 
$N(\zG'[V'_1])$ in $O(1)$ time. There are $2^n$ choices for $V'_1$ 
and $k^k$ choices for $\sigma$. Vector $\vec{n}$ can be chosen in
$(k + 1)^k$ ways, and so $\cL(\vec{n})$ has at most 
$(k + 1)^2(k )^2$ elements. Thus, $\cS$ can be chosen in 
 at most $2^{(k + 1)^2(k )^2}$ ways. Thus, the total running time 
of filling up $N(\zG')$ in this case is upper bounded by 
\begin{align*}
2^{|V(G)|} \times k^k \times (k + 1)^2(k )^2 \times 
2^{(k + 1)^2(k )^2}\times O(2^{log((k + 2)n )^{O(1)}}) = 
O^*(2^{2|V(G)|}).
\end{align*}

Now let us suppose that $\zG' = (V', E', \pi')$ has an extended 
$k$-expression that ends with a connect operator. 
Then due to the safety of extended $k$-expressions there exist 
two induced subgraphs $\zG'_1$ and $\zG'_2$ of $\zG'$ such 
that, first, they both are represented by extended $k$-expressions, 
and, second, there is 
$\cT \subseteq [k]^2$, such that $\eta_{\cT}(\zG'_1, \zG'_2)$ is 
identical to $\zG'$. Thus to find an extended $k$-expression for 
$\zG'$ it suffices to go through all partitions of $V'$ into sets 
$V'_1$ and $V'_2$ and for each partition check the following two 
conditions. First, check if $\zG'_1 = \zG'[V'_1]$ and 
$\zG'_2 = \zG'[V'_2]$ have an extended $k$-expression by 
looking up the entries $N(\zG'_1),N(\zG'_2)$. Second, check if 
there is $\cT \subseteq [k]^2$ such that $\eta_{\cT}(\zG'_1, \zG'_2)$ 
is identical to $\zG$. Since there are at most $2^{|V(G)|}$ 
ways to partition $V'$ into $V'_1$ and $V'_2$, takes time
$O(2^{|V(G)|})$ to check if $\zG'$ falls into Case~2.

So far we have registered an extended $k$-expression for every $\zG'$ 
that satisfies the conditions of Case~1 or Case~2. Now, start 
Phase~2 and check 
whether any of the remaining $k$-labeled graphs $\zG'$ satisfies 
the conditions of Case~3. In order to do that we go through all 
$k$-labeled graphs $\zG'$ with $n$ vertices and such that 
$N(\zG')$ contains an extended $k$-expression $\Phi$, that is 
initially for all $\zG'$ that fall into Cases~1,2. Then we 
consider every possible single relabelling $\rho_{ij}$ in turn. 
If $\rho_{ij}(\zG')$ is a $k$-labeled graph such that $N(\zG')$ 
does not have an extended $k$-expression, then we set 
$N(\rho_{ij}(\zG'))=\rho_{ij}(\Phi)$.  
We repeat this process for each $k$-labeled graph $\zG'$, 
until no new entries can be filled. The time required for 
Phase 2 in total, for all $\zG'$, not only those with $n$ vertices is 
bounded by number of all $k$-labellings of all subgraphs of $G$ 
times the number of possible operators $\rho_{ij}$. As is easily seen, 
the time required for Phase~2 in total is 
\begin{align*}
(k + 1)^{|V(G)|} \times k^2 = O^*((k + 1)^{|V(G)|})
\end{align*}

\textit{Time complexity}: The array we construct has 
$(k + 1)^{|V(G)|}$ entries. The time required to 
complete Phase~1 for all the entries is bounded 
by $O^*(4^{|V(G)|} \times (k + 1)^{|V(G)|} )$. The time to 
complete Phase 2 for all entries is bounded by 
$O^*((k + 1)^{|V(G)|})$. Thus the total running time is 
$O^*((4k + 4)^{|V(G)|})$.
\end{proof}

\section{Counting homomorphism to labeled graphs given an
extended $k$-expression}\label{sec:count}

In this section we prove our main result. Let $\hom(G,H)$
denote the number of homomorphisms from a graph $G$ to 
a graph $H$.

\begin{theorem}\label{main-theorem}
Let $G$ and $H$ be two graphs, and let $k$-labeled graph 
$\zH$ be a \textit{$k$-labelling} of graph $H$. Given an extended 
$k$-expression $\Phi$ for $\zH$, $\hom(G, H)$ can be found in time 
$O^*((2k + 1)^{|V(G)|})$
\end{theorem}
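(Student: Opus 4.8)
The plan is to mimic Wahlström's dynamic programming over the $k$-expression, but augmented to deal with the two new operators. The state we maintain is as follows: for every subexpression $\Phi'$ of $\Phi$, representing a $k$-labeled graph $\zH'=(V',E',\pi')$, and for every induced subgraph $G'=G[S]$ of $G$ together with a function $f\colon S\to[k]$ that records, for each vertex of $G'$, which label-class of $\zH'$ it is mapped into, we store the number $M_{\Phi'}(G',f)$ of homomorphisms $\psi\colon G'\to H'$ such that $\pi'(\psi(a))=f(a)$ for all $a\in S$. Since $G'$ ranges over $2^{|V(G)|}$ induced subgraphs and $f$ over $k^{|S|}\le k^{|V(G)|}$ label patterns, and since $\sum_S k^{|S|} = (k+1)^{|V(G)|}$, the table has $O^*((k+1)^{|V(G)|})$ entries; the extra factor to reach $(2k+1)^{|V(G)|}$ will come from the cost of filling an entry at a disjoint-union / connect node. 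The final answer is $\sum_f M_\Phi(G,f)$ at the root. Because $\Phi$ is safe (Lemma~\ref{lem:extended-safe}), no edges between the two parts of a union are added later, so the recurrences below are exact.

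The recurrences for the classical operators are the standard ones. For a single-vertex expression $\cdot_i$, $M_{\cdot_i}(G',f)$ is $1$ if $G'$ is a single vertex with $f$-value $i$ and no edge (loop), else $0$. For $\rho_{i\to j}(\Phi')$ we only need to merge label classes: $M_{\rho_{i\to j}(\Phi')}(G',f) = \sum_{g} M_{\Phi'}(G',g)$ where $g$ ranges over functions with $g(a)\in\{i,j\}$ whenever $f(a)=j$ and $g(a)=f(a)$ otherwise — but the cleaner bookkeeping (used by Wahlström) is to sum over the preimages, and this costs only a polynomial blow-up per entry. For $\eta_{ij}(\Phi')$, an edge $ab\in E(G')$ now forces $\{f(a),f(b)\}\ne\{i,j\}$-incompatibility to vanish, i.e. $M_{\eta_{ij}(\Phi')}(G',f)=M_{\Phi'}(G',f)$ if $f$ is consistent with the new edges of $H'$ and $0$ otherwise — nothing to sum. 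The interesting union-type node is $\eta_\cT(\Phi_1,\Phi_2)$: here we must split $S=S_1\uplus S_2$ and split the label pattern, so
\[
M_{\eta_\cT(\Phi_1,\Phi_2)}(G',f)=\sum_{S=S_1\uplus S_2} M_{\Phi_1}(G[S_1],f|_{S_1})\cdot M_{\Phi_2}(G[S_2],f|_{S_2}),
\]
valid only when $f$ is consistent with $\cT$ on every edge of $G'$ crossing the cut, and only when $G'$ has no edges crossing the cut that are not accounted for — but by safety and the form of $\eta_\cT$ every crossing edge is added by $\cT$, so the sole constraint is the $\cT$-consistency check. Summing over all $(S_1,S_2)$ and all compatible $f$ gives, via the binomial identity $\sum_{S}\sum_{S=S_1\uplus S_2}k^{|S|}=(2k+1)^{|V(G)|}$ when one also accounts for the $k$ label choices on each side, the claimed running time.

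The main new work — and the step I expect to be the real obstacle — is the beta node $\zH'=\beta_{\vec n,\sigma,\cS}(\zH_1)$. A homomorphism $\psi\colon G'\to H'$ sends each $a\in S$ to some vertex $b_j$ of $H'$, where $b\in V(H_1)$ and $j\in\{0,\dots,n_{\pi_1(b)}\}$; thus $\psi$ decomposes as a homomorphism $\psi_1\colon G'\to H_1$ (take the "$b$ part") together with, for each $a\in S$, a choice of "copy index" $j(a)\in\{0,\dots,n_{\pi_1(\psi_1(a))}\}$. The adjacency rule of the beta operator says $ab\in E(G')$ is mapped to an edge iff $\psi_1(a)\psi_1(b)\in E(H_1)$ and either $j(a)=j(b)=0$ or $(\pi_1(\psi_1(a)),j(a),\pi_1(\psi_1(b)),j(b))\in\cS$. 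The difficulty is that the copy-index choices are coupled along edges of $G'$, so we cannot factor them out vertex-by-vertex — instead I would run an inner dynamic program that, given the label pattern $f_1$ for $\psi_1$ (so $f_1(a)=\pi_1(\psi_1(a))$), counts the number of index assignments $j\colon S\to\{0,\dots,k\}$ compatible with $\cS$ on every edge of $G'$. Crucially, for fixed $f_1$ the index-compatibility is a pure constraint-satisfaction count on the graph $G'$ with at most $k+1$ "colours" per vertex, and — this is the key structural point to verify — because the beta operator glues copies according to a \emph{fixed} set $\cS$ independent of which vertices of $G$ we chose, this inner count is itself expressible through the same table indexed by induced subgraphs of $G$ and label patterns, so it composes with the outer recursion without leaving the $(k+1)^{|V(G)|}$-sized state space. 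One must check that the case $j(a)=j(b)=0$ (the "original vertices", where the label stays $\pi_1$ rather than $\sigma\circ\pi_1$) is handled by treating index $0$ as a distinguished colour whose self-loop is always allowed, and that summing over $f_1$ at the end and over the preimages of $\sigma$ recovers $M_{\zH'}(G',f)$ in the correct label classes of $\zH'$. Getting this inner DP to run within the global $O^*((2k+1)^{|V(G)|})$ budget — rather than paying an extra $(k+1)^{|V(G)|}$ — is where the argument has to be most careful, and I would expect to need a subset-convolution / fast-zeta-transform style trick, exactly as in Wahlström's treatment of the disjoint-union node, to keep the per-node cost at $O^*(2^{|V(G)|})$ times the number of label patterns.
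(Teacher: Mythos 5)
Your overall framework is exactly the paper's: the table indexed by induced subgraphs $G[S]$ of $G$, label patterns $f\colon S\to[k]$, and subexpressions of $\Phi$, with the final answer obtained by summing over all patterns at the root (Observation~\ref{exhaustive-observation}), and your recurrences for the relabelling and connect nodes coincide with Lemmas~\ref{relabel_lemma} and~\ref{lemma-join-operation}. The gap is precisely where you flag it: the beta node. You correctly decompose a homomorphism into $\zH'=\beta_{\vec n,\sigma,\cS}(\zH_1)$ as a homomorphism $\psi_1$ into $\zH_1$ plus a copy-index assignment $j\colon S\to\{0,\dots,k\}$ coupled along edges of $G'$, but you stop short of the counting identity that makes the recursion go through, and the remedy you anticipate (an inner DP composed with the outer table, plus a subset-convolution / zeta-transform trick) is neither developed nor actually needed. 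The missing idea is a factorization: because the edge condition $(\chi(a),j(a),\chi(b),j(b))\in\cS$ depends only on the label pattern $\chi$ and the indices, not on the actual images $\psi_1(a),\psi_1(b)$, the set of admissible index assignments (the paper's $\cB(\chi',\chi'')$, where $(X',X'')$ records which vertices go to original versus copy vertices) is the \emph{same} for every homomorphism with that pattern. Hence
$|\HOM_{\chi',\chi''}(G,\zH')|=|\cB(\chi',\chi'')|\cdot\hom_{\chi}(G,\zH_1)$,
proved by the explicit bijection $\vf\mapsto(a\mapsto\vf(a)_{\omega(a)})$ whose inverse uses the retraction of $\zH'$ onto $\zH_1$ (Lemma~\ref{lemma-H'-H-sum}); the quantity $|\cB(\chi',\chi'')|$ is then obtained by plain brute force over at most $(k+1)^{|S|}$ assignments, and one sums over the at most $2^{|V(G)|}$ partitions $(X',X'')$ subject to $\chi|_{X'}=\chi'$ and $\chi|_{X''}=\sigma(\chi'')$, which is how the $\sigma$-relabelling of the copies is absorbed (Lemma~\ref{lemma-H''-H'-sum}). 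Without this factorization your ``inner constraint-satisfaction count'' does not obviously compose with the outer recursion, since a priori the index count could depend on $\psi_1$ and not just on its pattern; establishing that independence is the heart of the argument, not a routine verification.

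A secondary remark: your concern that a naive per-entry enumeration blows the $O^*((2k+1)^{|V(G)|})$ budget is understandable, but the resolution in the paper is not a convolution trick --- the proof simply pays for the brute force, and the bound actually delivered by Proposition~\ref{homo-consistent-with-partition} is of the form $\size(\Phi)\cdot(2(k+1))^{2|V(G)|}$, so your worry points at a looseness in the stated constant rather than at a need for additional algorithmic machinery; in particular nothing in the beta-node case has the partition structure that subset convolution exploits, and the paper's handling of the original-vertex case ($j(a)=j(b)=0$ always allowed, labels unchanged) is done through the partition $(X',X'')$ rather than through a distinguished colour.
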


The following notation and terminology will be used throughout 
this section.
The number of homomorphism from graph $G$ to graph $H$ is 
denoted by $\hom(G, H)$. Also, $\HOM(G, H)$ denotes the set of all 
homomorphisms from $G$ to $H$.
Let $X\subseteq V(G)$, and let $\chi:X\to[k]$ be a label function. 
A mapping $\vf$ from $X$ to $k$-labeled graph $\zH = (V, E, \pi)$ is said 
to be consistent with $\chi$ if for every $x \in X$ it holds 
$\pi(\vf(x)) = \chi(x)$. 
Let $\hom_{\chi}(G, \zH)$, $\HOM_{\chi}(G, \zH)$, 
$\map_{\chi}(G, \zH)$, and $\MAP_{\chi}(G, \zH)$, denote the 
number of homomorphisms from $G[X]$ to $\zH$ consistent with 
$\chi$, the set of all homomorphisms from $G[X]$ to $\zH$ consistent 
with $\chi$, the number of all mappings from $G[X]$ to $\zH$ 
consistent with $\chi$, and the set of all mappings from $G[X]$ to 
$\zH$ consistent with $\chi$, respectively.

Let $\Phi$ be an extended $k$-expression for a $k$-labelling 
$\zH$ of the graph $H$. We proceed by induction on the structure 
of $\Phi$. More precisely, our algorithm will compute entries
$\hom(\zG[X],\zH')$, where $X\sse V(G)$ and $\zG$ is a $k$-labelling
of $G$, and $\zH'$ is the $k$-labeled graph represented by a 
subformula of $\Phi$. Operator $\cdot_i$ creating a graph $\zH'$
with a single vertex labeled $i$ gives the base case of induction. 
In this case $\hom(\zG[X],\zH')=1$ if all vertices of $X$ are 
labeled $i$ and $\zG[X]$ has no edges; otherwise 
$\hom(\zG[X],\zH')=0$. Finally, after computing the numbers
$\hom(\zG,\zH)$ for all the $k$-labellings $\zG$ of $G$ we complete 
using the following observation.

\begin{observation}\label{exhaustive-observation}
Let $G$ and $H$ be graphs, and let $k$-labeled graph $\zH $ be a 
$k$-labelling of $H$.
Then 
\begin{align*}
\hom(G, H) = \sum_{\chi:V(G)\to[k]}\hom_{\chi}(G, \zH)
\end{align*}
\end{observation}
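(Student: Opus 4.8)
The plan is to prove the identity by exhibiting the sets $\HOM_{\chi}(G,\zH)$, as $\chi$ ranges over all label functions $V(G)\to[k]$, as a partition of the set $\HOM(G,H)$ of all homomorphisms from $G$ to $H$. Taking cardinalities then yields the claim at once, since $\hom(G,H)=|\HOM(G,H)|$ and $\hom_{\chi}(G,\zH)=|\HOM_{\chi}(G,\zH)|$.

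First I would record that, because the right-hand side places no label constraint on $G$ itself, a homomorphism from $G$ (that is, from $G[V(G)]$) to the $k$-labeled graph $\zH=(V,E,\pi)$ is, by definition, precisely an ordinary graph homomorphism $\vf\colon V(G)\to V=V(H)$ from $G$ to $H$; the labelling $\pi$ enters only through the consistency requirement. Hence $\HOM_{\chi}(G,\zH)$ is exactly the set of those $\vf\in\HOM(G,H)$ that satisfy $\pi(\vf(v))=\chi(v)$ for every $v\in V(G)$.

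Next I would introduce the map $\Lambda\colon\HOM(G,H)\to[k]^{V(G)}$ sending each homomorphism $\vf$ to the label function $\Lambda(\vf)=\pi\circ\vf$, so that $\Lambda(\vf)(v)=\pi(\vf(v))$. The argument then rests on two elementary facts. First, for a fixed $\chi$, a homomorphism $\vf$ lies in $\HOM_{\chi}(G,\zH)$ if and only if $\Lambda(\vf)=\chi$, so that $\HOM_{\chi}(G,\zH)=\Lambda^{-1}(\chi)$ is exactly the fiber of $\Lambda$ over $\chi$. Second, every $\vf\in\HOM(G,H)$ is consistent with exactly one label function, namely $\Lambda(\vf)$, because the consistency condition determines $\chi$ coordinatewise via $\chi(v)=\pi(\vf(v))$ and leaves no freedom. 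Consequently the fibers $\{\Lambda^{-1}(\chi)\}_{\chi\colon V(G)\to[k]}$ are pairwise disjoint and their union is all of $\HOM(G,H)$, so they form a partition of $\HOM(G,H)$.

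Summing over this partition gives
\[
\hom(G,H)=|\HOM(G,H)|=\sum_{\chi\colon V(G)\to[k]}|\Lambda^{-1}(\chi)|=\sum_{\chi\colon V(G)\to[k]}\hom_{\chi}(G,\zH),
\]
which is the desired identity. There is no genuine obstacle here: the statement is a routine partition-by-fiber count, and the only point demanding any care is the bookkeeping observation that consistency with $\chi$ forces $\chi=\pi\circ\vf$, so that each homomorphism is counted under one and only one $\chi$, with none omitted and none double-counted.
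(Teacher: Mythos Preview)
Your argument is correct and is exactly the natural fiber-partition count; the paper in fact offers no proof at all for this observation, treating it as self-evident, so your write-up simply spells out the obvious reasoning the paper leaves implicit.
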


It therefore suffices to show how to compute $\hom(\zG[X],\zH')$, 
where $\zG$ is an arbitrary $k$-labelling of $G$, $X\sse V(G)$, and
$\zH'$ is represented by a subformula $\Phi'$ of $\Phi$, provided 
$\hom(\zG[Y],\zH'')$ is known for all $Y\sse X$ and $\zH''$ 
represented by a subformula $\Phi''$ of $\Phi'$ with $\Phi''\ne\Phi'$.
We consider 3 cases depending on the last operator of $\Phi'$.

\subsection{Relabelling Operator}

Let $\Psi$ be a string of relabelling operators, applied on a 
$k$-labeled graph $\zG$. We say that $\Psi$ relabels $i$ to $j$, 
if the application of $\Psi$ on a vertex labeled with $i$ gives a 
vertex labeled with $j$. 

Let $\zH' = ( V', E', \pi')$ be a $k$-labeled graph, and let 
$\zH=\Psi(\zH')=(V,E,\pi)$ be the result of application of 
$\Psi$ to $\zH'$. Let $X \subseteq V(G)$ and let 
$\chi: X \to [k]$ be any $k$-label function. Also, let 
$\vf \in \HOM_{\chi}(G, \zH)$. Then there is a function 
$\chi': X \to [k]$ such that $\vf\in \HOM_{\chi'}(G, \zH')$. 
Let call $\chi'$ the \textit{consistent labelling} of $\vf$. To count
the elements of $\HOM_{\chi}(G, \zH)$, we partition this set 
into sets of homomorphisms that share the same consistent labelling, 
and find the number of elements of these smaller sets.

Let $\cD(\chi)$ denote the set of all of functions 
$\chi': X \rightarrow [k]$ that satisfy  $\chi = \sigma(\chi') $.
We show that $\cD(\chi)$  is the set of consistent labellings of all 
homomorphisms from $\HOM_{\chi}(G, \zH')$. 

\begin{lemma}\label{relabel_lemma}
Let $\chi$, $G$,  $\zH$, and $\zH'$  be as above. Then,
\begin{align}
    \hom_{\chi}(G, \zH) = \sum_{\chi' \in \cD(\chi)} 
|\HOM_{\chi'}(G, \zH')|.
\end{align}
\end{lemma}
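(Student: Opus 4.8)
The plan is to show that $\HOM_{\chi}(G,\zH)$ is the disjoint union of the sets $\HOM_{\chi'}(G,\zH')$ over $\chi'\in\cD(\chi)$, and then take cardinalities. The structural fact I would record first is that $\Psi$ only relabels vertices: each operator $\rho_{i\to j}$ leaves the vertex set and edge set untouched, so $\zH$ and $\zH'$ have the same underlying graph, and there is a single function $\sigma\colon[k]\to[k]$ — the one recording, for each $i$, which label $\Psi$ sends $i$ to — with $\pi=\sigma\circ\pi'$. One should check $\sigma$ is well defined, which is immediate since relabelling is deterministic and hence the label of the image of a vertex labeled $i$ depends only on $i$. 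Consequently, a map $\vf\colon X\to V=V'$ is a graph homomorphism from $G[X]$ into the underlying graph of $\zH$ if and only if it is one into the underlying graph of $\zH'$; the definitions of $\HOM_{\chi}(G,\zH)$ and $\HOM_{\chi'}(G,\zH')$ then differ only in their label-consistency constraints.

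Next I would prove the two inclusions. For $\HOM_{\chi}(G,\zH)\subseteq\bigcup_{\chi'\in\cD(\chi)}\HOM_{\chi'}(G,\zH')$: given $\vf\in\HOM_{\chi}(G,\zH)$, let $\chi'$ be its consistent labelling, $\chi'(x)=\pi'(\vf(x))$ for $x\in X$. By construction $\vf$ is consistent with $\chi'$ as a map into $\zH'$, and since $\vf$ is a graph homomorphism it lies in $\HOM_{\chi'}(G,\zH')$; moreover $\sigma(\chi'(x))=\sigma(\pi'(\vf(x)))=\pi(\vf(x))=\chi(x)$ for all $x$, so $\chi=\sigma\circ\chi'$ and thus $\chi'\in\cD(\chi)$. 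For the reverse inclusion, take $\chi'\in\cD(\chi)$ and $\vf\in\HOM_{\chi'}(G,\zH')$: then $\vf$ is a graph homomorphism, and for each $x\in X$ we have $\pi(\vf(x))=\sigma(\pi'(\vf(x)))=\sigma(\chi'(x))=\chi(x)$, so $\vf\in\HOM_{\chi}(G,\zH)$.

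Finally I would check that the union is disjoint: if $\vf\in\HOM_{\chi_1'}(G,\zH')\cap\HOM_{\chi_2'}(G,\zH')$ with $\chi_1',\chi_2'\in\cD(\chi)$, then $\chi_1'(x)=\pi'(\vf(x))=\chi_2'(x)$ for every $x\in X$, so $\chi_1'=\chi_2'$. Hence $\HOM_{\chi}(G,\zH)=\biguplus_{\chi'\in\cD(\chi)}\HOM_{\chi'}(G,\zH')$, and counting both sides gives $\hom_{\chi}(G,\zH)=\sum_{\chi'\in\cD(\chi)}|\HOM_{\chi'}(G,\zH')|$. I do not expect a genuine obstacle; the only points needing mild care are that $\Psi$ is a whole string of relabellings inducing one well-defined $\sigma$ with $\pi=\sigma\circ\pi'$, and that relabelling leaves edges untouched so the two homomorphism notions coincide apart from the labelling constraint.
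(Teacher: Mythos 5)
Your proof is correct and takes essentially the same route as the paper's: partition $\HOM_{\chi}(G,\zH)$ according to consistent labellings, observe that these labellings are exactly the elements of $\cD(\chi)$, note the resulting sets are disjoint, and count. You merely make explicit some details the paper leaves implicit (the well-definedness of $\sigma$ induced by $\Psi$ and the fact that relabelling leaves the underlying graph unchanged), which is fine but not a different argument.
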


\begin{proof}
Let $\vf\in \HOM_{\chi}(G, \zH)$. Also, let $\chi': X \rightarrow [k]$, 
be the consistent labelling of $\vf$. As is easily seen, $\chi'$ 
satisfies $\chi = \sigma(\chi')$. Now, let $\chi' \in \cD(\chi)$, then 
the set $\HOM_{\chi'}(G, \zH')$ is the set of all elements of 
$\HOM_{\chi}(G, \zH)$ that have $\chi'$ as their consistent labelling.  
Note that for two different $\chi', \chi'' \in \cD(\chi)$ the sets 
$\HOM_{\chi'}(G, \zH')$ and $\HOM_{\chi''}(G, \zH')$ are disjoint.  
Thus, 
\begin{align*}
   \HOM_{\chi}(G, \zH) = \bigcup_{\chi' \in \cD(\chi)} 
\HOM_{\chi'}(G, \zH') 
\end{align*}
The result follows.
\end{proof}

We now have an algorithm computing all the required numbers of 
homomorphisms.

\begin{corollary}
Let $Y\sse V(G)$ and $\gm\colon Y \to [k]$. There is an algorithm 
that given $\hom_{\zeta}(G, \zH')$  for all 
functions $\zeta$ from subsets of $V(G)$ to $[k]$ as input, 
finds $\hom_{\gm}(G, \zH)$ in time $O^*(2^{|V(G)|})$. 
\end{corollary}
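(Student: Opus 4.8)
The plan is to invoke Lemma~\ref{relabel_lemma}, after peeling $\Psi$ into single relabellings. First note that the overall effect of $\Psi$ on labels is exactly the map $\sigma\colon[k]\to[k]$ appearing in Lemma~\ref{relabel_lemma}, and that, since the number of distinct composites of relabelling operators on $[k]$ is bounded by a function of $k$ (and $k$ is fixed), we may assume $\Psi=\rho_{i_t\to j_t}\circ\cdots\circ\rho_{i_1\to j_1}$ with $t=O(1)$ and $i_s\ne j_s$: a longer sequence would repeat a partial composite, which can then be excised without changing $\Psi(\zH')$. Put $\zH=\zH_t,\zH_{t-1},\dots,\zH_0=\zH'$, where $\zH_s=\rho_{i_s\to j_s}(\zH_{s-1})$.

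Second, I treat one step $\zH_s=\rho_{i\to j}(\zH_{s-1})$ (abbreviating $i=i_s$, $j=j_s$). Here the label map is $\sigma(i)=j$ and $\sigma(\ell)=\ell$ for $\ell\ne i$, so for $\gm\colon Y\to[k]$ the set $\cD(\gm)$ of Lemma~\ref{relabel_lemma} is empty whenever $i$ occurs in the image of $\gm$ (the graph $\zH_s$ has no vertex labeled $i$, hence no homomorphism consistent with such a $\gm$), and otherwise consists exactly of those $\chi'\colon Y\to[k]$ that agree with $\gm$ outside $\gm^{-1}(j)$ and take a value in $\{i,j\}$ at each vertex of $\gm^{-1}(j)$. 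Thus $|\cD(\gm)|=2^{|\gm^{-1}(j)|}\le 2^{|Y|}\le 2^{|V(G)|}$, and Lemma~\ref{relabel_lemma} reads $\hom_\gm(G,\zH_s)=\sum_{\chi'\in\cD(\gm)}\hom_{\chi'}(G,\zH_{s-1})$, a sum of at most $2^{|V(G)|}$ numbers, each of which is part of the input (for $s=1$) or was computed at the previous level. Looking them up and adding them takes time $O^*(2^{|V(G)|})$, which already proves the corollary in the case that $\Psi$ is a single relabelling.

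Third, I chain the steps: for $s=1,\dots,t$ compute the entries needed for $\zH_s$ from those for $\zH_{s-1}$ by the single-step procedure, ending with $\hom_\gm(G,\zH)=\hom_\gm(G,\zH_t)$. Since $t=O(1)$ and every entry costs $O^*(2^{|V(G)|})$, the algorithm runs in plain exponential time, with the same base $2$ for a unit step, which is how the surrounding dynamic program invokes this corollary; if one wants the whole table for $\zH$ rather than a single value, the natural bound for one level, over all $Y$ and all relevant label functions on $Y$, is $\sum_{Y}\sum_{\gm\colon Y\to[k]\setminus\{i\}}2^{|\gm^{-1}(j)|}=(k+1)^{|V(G)|}$, still plain exponential.

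The step I expect to matter is the reduction to single relabellings: one must not feed the composite $\sigma$ directly into Lemma~\ref{relabel_lemma}, since a fibre $\sigma^{-1}(c)$ can be all of $[k]$ --- for instance when $\Psi$ makes $\sigma$ constant --- so that $\cD(\gm)$ swells to $k^{|V(G)|}$. It is precisely because every fibre of a single relabelling has size at most two that peeling one relabelling at a time keeps the expansion factor at $2^{|V(G)|}$; the only routine point is the (pigeonhole) argument that a bounded-length sequence of single relabellings reproduces the action of $\Psi$.
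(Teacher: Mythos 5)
Your proof is essentially sound, but it takes a genuinely different route from the paper, and it is worth being clear about what each buys. The paper applies Lemma~\ref{relabel_lemma} just once, with $\sigma$ taken to be the overall label map of the entire relabelling string $\Psi$: since every value $\hom_{\chi'}(G,\zH')$ with $\chi'\in\cD(\gm)$ is already supplied as input, it simply sums over $\cD(\gm)$ and bounds $|\cD(\gm)|\le k^{|V(G)|}$, giving a running time of $O^*(k^{|V(G)|})$. In particular the paper's own proof does not deliver the advertised base $2$ either --- your closing observation that a fibre $\sigma^{-1}(c)$ can be all of $[k]$ is exactly why no direct summation can do so for $k\ge 3$ --- so the constant in the corollary should really be read as $k$ (or ``plain exponential''), which is all that its use inside Proposition~\ref{homo-consistent-with-partition} requires, the per-level budget there being $(2(k+1))^{2|V(G)|}$. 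What you do differently is to peel $\Psi$ into a bounded number of single relabellings (your excision/pigeonhole bound on the length is fine, since the action of $\Psi$ depends only on the composite map on $[k]$), which caps each step's expansion at $2^{|V(G)|}$. But this buys the literal base $2$ only when $\Psi$ is a single relabelling: for longer strings the intermediate values $\hom_{\chi}(G,\zH_s)$ are not part of the input, so you must either recurse (cost $2^{t|V(G)|}$) or build full tables level by level at cost $(k+1)^{|V(G)|}$ per level, which for $k\ge 2$ is no better than the paper's one-shot $k^{|V(G)|}$ sum --- though, as you note, still plain exponential and compatible with the paper's accounting, where a maximal block of relabellings counts as one unit of $\size(\Phi)$. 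So: same key lemma, but the paper feeds it the composite map once, while you trade a constant number of extra levels for a smaller per-level blow-up; neither argument attains the stated $O^*(2^{|V(G)|})$ for a general $\Psi$, and both attain what the paper actually needs.
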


\begin{proof}
First for a fixed $X \in V(G)$ and a fixed function $\chi: X \to [k]$ 
we calculate the upper bound on the size of $\cD(\chi)$. For a 
function $\chi' \in \mathbb{D}$, we can count those functions 
by considering their possible values on each element of $X$. 
Since the codomain of $\chi'$ is $[k]$ and size of $X$ is bounded 
by $|V(G)|$, there are at most $k^{|V(G)|}$  possibilities for 
$\chi'$. By Lemma~\ref{lemma-join-operation} 
$\hom_{\gm}(G, \zH)$ can be found by summing over 
$O^{*}(k^{|V(G)|})$ known values. That can be done in 
time $O^{*}(k^{|V(G)|})$.
\end{proof}

\subsection{Connect Operators} \label{section-connect}

Let $\zH' = (V', E', \pi')$ and $\zH'' = (V'', E'', \pi'')$ be $k$-labeled 
graphs, let $\cT$ be a subset of $[k] \times [k]$, and let 
$\zH = \eta_{\cT}(\zH', \zH'')$ be a $k$-labeled graph. Let 
$X\sse V(G)$, and let $\chi:X\to[k]$ be any $k$-label function.

To count elements of $\HOM_{\chi}(G,\zH)$, one can partition them 
into smaller subsets of elements and count the elements in each 
subset. For any $\vf\in\HOM_{\chi}(G,\zH)$, let 
$\vf' = \vf\vert_{X'}$ and $\vf'' = \vf\vert_{X''}$, where $X'$ and $X''$ 
are preimages of $V(\zH')$ and $V(\zH'')$ under $\vf$, respectively. 
Then $\vf'$ and $\vf''$ are homomorphisms from $G[X']$ to $\zH'$ and 
from $G[X'']$ to $\zH''$, respectively. Thus, there are unique 
$k$-label functions 
$\chi':X' \to V(\zH')$ and $\chi'':X'' \to V(\zH'')$ with which $\vf'$ 
and $\vf''$ are consistent. Let us call the pair of $k$-label 
functions $(\chi', \chi'')$, the \textit{consistent pair} of $\vf$. The 
idea is to partition $\HOM_{\chi}(G, \zH)$ into smaller subsets 
such that for each of them all the homomorphisms $\vf$ in 
that subset share the same consistent pair. 

Let $\All(\chi)$ be the set of all pairs of $k$-label functions 
$(\chi',\chi'')$, $\chi':X' \to V(\zH'), \chi'':X'' \to V(\zH'')$, where 
$X'$ and $X''$ are disjoint subsets of $X$ and: 
\begin{itemize}\setlength{\itemindent}{.5cm}
\item[(a.1)] 
$X'\cup X''=X$; and also $\chi\vert_{X'}=\chi'$ and 
$\chi\vert_{X''}=\chi''$.  
\item[(a.2)] 
For any $x' \in X'$, such that $\chi'(x) = i $ and for any $x'' \in X''$ 
such that $\chi''(x'') = j$, if $x'x''\in E(G)$ then, $(i, j) \in \cT$.
\end{itemize}

We show that $\All(\chi)$ is the set of consistent pairs of 
homomorphisms from $\HOM_{\chi}(G, \zH)$. Also for every pair 
$(\chi', \chi'') \in\All(\chi)$ we count the number of 
$\vf\in\HOM_{\chi}(G, \zH)$, such that $(\chi', \chi'')$ is 
the consistent pair of $\vf$.  

\begin{lemma}\label{lemma-join-operation}
The number $\hom_{\chi}( G, \zH)$ satisfies the following equality 
\begin{align}
\hom_{\chi}(G, \zH) = \sum_{(\chi', \chi'') \in\All(\chi) }
\hom_{\chi'}( G, \zH')\hom_{\chi''}( G, \zH'').
\label{recursion}
\end{align}
\end{lemma}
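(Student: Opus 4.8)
The plan is to establish the identity by exhibiting a bijection-respecting partition of $\HOM_{\chi}(G,\zH)$, exactly as was done for the relabelling operator in Lemma~\ref{relabel_lemma}. First I would fix an arbitrary $\vf\in\HOM_{\chi}(G,\zH)$ and recall that, because $\zH=\eta_{\cT}(\zH',\zH'')$ has vertex set $V(\zH')\uplus V(\zH'')$ with the two parts inducing exactly $\zH'$ and $\zH''$ (safety of the connect operator, Lemma~\ref{lem:extended-safe}), the sets $X'=\vf^{-1}(V(\zH'))$ and $X''=\vf^{-1}(V(\zH''))$ partition $X$, and the restrictions $\vf'=\vf\vert_{X'}$, $\vf''=\vf\vert_{X''}$ are homomorphisms $G[X']\to\zH'$ and $G[X'']\to\zH''$. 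This yields a well-defined consistent pair $(\chi',\chi'')\in\All(\chi)$: condition (a.1) holds since $\vf$ is consistent with $\chi$ on all of $X$, and condition (a.2) holds because every $G$-edge $x'x''$ with $x'\in X'$, $x''\in X''$ is mapped by $\vf$ to an $\zH$-edge between the $\zH'$-part and the $\zH''$-part, and by definition of $\eta_{\cT}$ such cross edges exist only between labels $(i,j)\in\cT$.

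Next I would argue the converse decomposition: given any $(\chi',\chi'')\in\All(\chi)$ and any pair of homomorphisms $\vf'\in\HOM_{\chi'}(G,\zH')$, $\vf''\in\HOM_{\chi''}(G,\zH'')$, their common extension $\vf=\vf'\uplus\vf''$ (defined on $X=X'\cup X''$) lies in $\HOM_{\chi}(G,\zH)$ and has $(\chi',\chi'')$ as its consistent pair. The only nontrivial point is that $\vf$ maps every edge of $G[X]$ to an edge of $\zH$: edges inside $X'$ and inside $X''$ are handled by $\vf'$, $\vf''$ being homomorphisms into $\zH'$, $\zH''$ (whose edges are edges of $\zH$ by safety), and a cross edge $x'x''$ maps to a pair of vertices with labels $(\chi'(x'),\chi''(x''))\in\cT$ by (a.2), hence to an edge added by $\eta_{\cT}$. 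Combined with the first paragraph, this shows that the map $\vf\mapsto(\vf',\vf'')$ is a bijection between $\HOM_{\chi}(G,\zH)$ restricted to homomorphisms with consistent pair $(\chi',\chi'')$, and the product set $\HOM_{\chi'}(G,\zH')\times\HOM_{\chi''}(G,\zH'')$.

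Finally I would assemble the count. Distinct pairs $(\chi',\chi'')\in\All(\chi)$ give disjoint classes of homomorphisms (the consistent pair of $\vf$ is uniquely determined by $\vf$), and every $\vf\in\HOM_{\chi}(G,\zH)$ belongs to exactly one such class. Therefore
\begin{align*}
\HOM_{\chi}(G,\zH)=\biguplus_{(\chi',\chi'')\in\All(\chi)}\bigl\{\vf\ :\ (\vf',\vf'')\text{ is the consistent pair of }\vf\bigr\},
\end{align*}
and taking cardinalities together with the bijection above yields
\begin{align*}
\hom_{\chi}(G,\zH)=\sum_{(\chi',\chi'')\in\All(\chi)}\hom_{\chi'}(G,\zH')\,\hom_{\chi''}(G,\zH''),
\end{align*}
which is the claimed equation~\eqref{recursion}. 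I do not anticipate a serious obstacle here; the one place requiring care is making sure the safety hypothesis is invoked correctly so that no edges of $\zH$ inside $V(\zH')$ or inside $V(\zH'')$ are "new" — that is precisely what lets the decomposition into $\vf',\vf''$ be lossless, and it is exactly the role Lemma~\ref{lem:extended-safe} plays.
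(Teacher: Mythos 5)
Your proposal is correct and follows essentially the same route as the paper: restrict each $\vf\in\HOM_{\chi}(G,\zH)$ to the preimages of $V(\zH')$ and $V(\zH'')$, verify that the resulting consistent pair satisfies (a.1) and (a.2), show conversely that gluing any $\vf'\in\HOM_{\chi'}(G,\zH')$ and $\vf''\in\HOM_{\chi''}(G,\zH'')$ yields a homomorphism consistent with $\chi$, and conclude by the resulting bijection/partition indexed by $\All(\chi)$. The paper phrases this as a single bijection $\tau$ onto the disjoint union $R$ rather than class-by-class, but the argument is the same.
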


\begin{proof} 
First, observe that the right hand side of (\ref{recursion}) 
counts the elements of the set
\[
R = \bigcup_{(\chi', \chi'') \in\All(\chi)}\HOM_{\chi'}(G, \zH') 
\times\HOM_{\chi''}(G, \zH'').
\]
Also, the left hand side equals the cardinality of the set 
$\HOM_{\chi}(G, \zH)$ by definition. So, (\ref{recursion}) can be 
proved by constructing a bijection between 
$\HOM_{\chi}(G, \zH)$ and $R$.

Let $\vf\in\HOM_{\chi}(G, \zH)$, and let $X'$ and $X''$ be preimages 
of $V(\zH')$ and $V(\zH'')$ under $\vf$, respectively. Consider the 
mapping $\tau:\vf\mapsto(\vf \vert_{X'}, \vf \vert_{X''})$.

To show that $\tau$ is a bijection between $\HOM_{\chi}(G, \zH)$ 
and $R$, one needs to show that, first, for any 
$\vf\in\HOM_{\chi}(G, \zH)$, $\tau(\vf)$ is an element of $R$; and, 
second, that $\tau$ is surjective and injective.

Let $\vf' = \vf\vert_{X'}$ and $\vf'' = \vf\vert_{X''}$. As is easily seen, 
$\vf'$ and $\vf''$ are elements of $\HOM(G, \zH')$ and 
$\HOM(G, \zH'')$, respectively. Also, there is a unique pair of 
functions $\chi':X' \to [k]$ and $\chi'':X'' \to [k]$ such that $\vf'$ 
and $\vf''$ are consistent with $\chi'$ and $\chi''$, respectively, i.e., 
$\vf' \in\HOM_{\chi'}(G, \zH')$ and $\vf'' \in\HOM_{\chi''}(G, \zH'')$. 
Since $\vf' = \vf\vert_{X'}$ and $\vf'' = \vf\vert_{X''}$ are consistent 
with $\chi'$ and $\chi''$, respectively,  
$\chi \vert_{X'} = \chi'$ and $\chi \vert_{X''} = \chi''$. This means 
that  $(\chi', \chi'')$ satisfies Property~(a.1).

Let $x'\in X'$ and $x''\in X''$ be such that $\chi'(x') = i$ and 
$\chi''(x'') = j$ and also, $x'x''\in E(G)$. Then, 
$\chi(x') = \chi'(x') = i$ and $\chi(x'') = \chi(x'') = j$.  Since 
$\vf(x')\vf(x'')$ is an edge of $\zH = \eta_{\cT}(\zH', \zH'')$, 
then $(i, j) \in\cT$. Thus $(\chi', \chi'')$ satisfies Properties~(a.1) 
and~(a.2), therefore $(\chi', \chi'') \in\All(\chi)$. This means 
$(\vf', \vf'') \in\HOM_{\chi'}(G, \zH') \times\HOM_{\chi''}( G, \zH'') 
\subseteq R$. In other 
words, we have shown that for any $\vf\in\HOM_{\chi}(G, \zH)$, 
the consistent pair of $\vf$ belongs to $\All(\chi)$.

For two nonidentical elements $\vf$ and $\psi$ of 
$\HOM_{\chi}(G, \zH)$, $\tau(\vf)$ is clearly not equal to 
$\tau(\psi)$, so $\vf$ is injective.

Finally, we show that $\tau$ is surjective. Let $\chi':X' \to [k]$ and 
$\chi'':X'' \to [k]$ be any two functions such that 
$(\chi', \chi'')\in\All(\chi)$. Let $\vf'$ and $\vf''$ be two 
homomorphisms in $\HOM_{\chi'}(G, \zH')$ and 
$\HOM_{\chi''}(G, \zH'')$, respectively, i.e., $(\vf', \vf'') \in R$. 
Let
\[
\vf(x) = 
\begin{cases}
\vf'(x), &\text{$x \in X'$,}\\
\vf''(x), &\text{$x \in X''$}.
\end{cases}
\]
By Property~(a.1), $\chi \vert_{X'} = \chi'$ and 
$\chi \vert_{X''} = \chi''$. Then, $\vf$ is consistent with $\chi$, i.e. 
$\vf\in\MAP_{\chi}(G, \zH)$. 

Let $x' \in X'$ and $x''\in X''$ be such that $x'x''\in E(G)$. To 
show that $\vf$ is a homomorphism, observe that $\vf(x') = \vf'(x')$ 
and $\vf(x'') = \vf''(x'')$. 
Thus, $\pi'(\vf'(x') ) = \chi'(x') = i$ and 
$\pi''(\vf''(x'')) = \chi''(x'') = j$. 
Since $x'x''\in E(G)$ by Property~(a.2), $(i, j)\in\cT$ and thus 
$\vf'(x')\vf''(x'')) \in E(\zH)$. Therefore, $\vf$ is a homomorphism.

For an arbitrary element $(\chi', \chi'')$ of $\All(\chi)$ and an 
arbitrary element $(\vf', \vf'')$ of $\HOM_{\chi'}(G, \zH') \times 
\HOM_{\chi''}(G, \zH'') \subseteq R$, there is $\vf$ from 
$\HOM_{\chi}(G, \zH)$ such that $\tau(\vf) = (\vf',\vf'')$ proving 
$\tau$ is surjective. 

The bijection $\tau$ between $\HOM_{\chi}(G, \zH) $ and $R$, 
proves (\ref{recursion}). 
\end{proof}

\begin{lemma}
Let $X$ be a subset of $V(G)$ and let $\chi$ be a function 
$\chi\colon X \to [k]$. There is an algorithm that given 
$\hom_{\zeta}(G, \zH')$ and $\hom_{\zeta}(G, \zH'')$ for all 
functions $\zeta$ from a subset of $V(G)$ to $[k]$ as input, 
finds $\hom_{\chi}(G, \zH)$ in time $O^*(2^{|V(G)|})$. 
\end{lemma}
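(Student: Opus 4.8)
The plan is to convert the recursion of Lemma~\ref{lemma-join-operation} into a brute-force summation, exploiting the fact that the sum over $\All(\chi)$ is essentially a sum over ordered bipartitions of $X$. First I would observe that, by Property~(a.1), a pair $(\chi',\chi'')\in\All(\chi)$ is completely determined by the underlying ordered partition $X=X'\uplus X''$: once the set $X'$ is chosen, we must have $X''=X\setminus X'$, $\chi'=\chi\vert_{X'}$, and $\chi''=\chi\vert_{X''}$. Conversely, every ordered partition $(X',X'')$ of $X$ produces a candidate pair $(\chi\vert_{X'},\chi\vert_{X''})$, and this candidate belongs to $\All(\chi)$ precisely when Property~(a.2) holds, i.e.\ when every edge $x'x''\in E(G)$ with $x'\in X'$ and $x''\in X''$ satisfies $(\chi(x'),\chi(x''))\in\cT$. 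This condition is checkable in polynomial time by scanning the at most $|E(G)|$ edges of $G$ that cross the partition $(X',X'')$.

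The algorithm is then immediate. Iterate over all $2^{|X|}\le 2^{|V(G)|}$ subsets $X'\subseteq X$; for each, set $X''=X\setminus X'$, $\chi'=\chi\vert_{X'}$, $\chi''=\chi\vert_{X''}$, and test whether Property~(a.2) holds for $(\chi',\chi'')$. If it does, look up the precomputed values $\hom_{\chi'}(G,\zH')$ and $\hom_{\chi''}(G,\zH'')$ — these are available by hypothesis, since $\chi'$ and $\chi''$ are functions from subsets of $V(G)$ to $[k]$ — and add the product $\hom_{\chi'}(G,\zH')\cdot\hom_{\chi''}(G,\zH'')$ to a running accumulator. When all subsets have been processed, the accumulator holds $\sum_{(\chi',\chi'')\in\All(\chi)}\hom_{\chi'}(G,\zH')\hom_{\chi''}(G,\zH'')$, which equals $\hom_{\chi}(G,\zH)$ by Lemma~\ref{lemma-join-operation}.

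For the running time, there are at most $2^{|V(G)|}$ iterations, each performing only a polynomial amount of work: forming $X''$, restricting $\chi$, testing~(a.2), two table look-ups, one multiplication, and one addition. Hence the total time is $O^*(2^{|V(G)|})$, as claimed.

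There is no real obstacle here: the one thing that needs to be argued carefully is the correspondence in the first paragraph, namely that $\All(\chi)$ is in bijection with the set of ordered bipartitions of $X$ satisfying~(a.2), and this follows directly from the constraints $\chi\vert_{X'}=\chi'$ and $\chi\vert_{X''}=\chi''$ built into Property~(a.1). Correctness of the summation is exactly Lemma~\ref{lemma-join-operation}, and the $2^{|X|}\le 2^{|V(G)|}$ enumeration is the naive one, so the bound on $|\All(\chi)|$ comes for free.
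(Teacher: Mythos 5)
Your proposal is correct and follows the same route as the paper, whose proof is simply the one-line observation that Lemma~\ref{lemma-join-operation} reduces $\hom_{\chi}(G,\zH)$ to a sum over $O^*(2^{|V(G)|})$ known values. You merely make explicit what the paper leaves implicit: that pairs in $\All(\chi)$ correspond to ordered bipartitions $(X',X'')$ of $X$ with $\chi'=\chi\vert_{X'}$, $\chi''=\chi\vert_{X''}$ satisfying (a.2), each checkable in polynomial time.
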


\begin{proof}
By Lemma~\ref{lemma-join-operation} $\hom_{\chi}(G, \zH)$ 
can be found by summing over $O^{*}(2^{|V(G)|})$ known 
values. 
\end{proof}

\subsection{Beta operators} \label{section-beta-operator}

In this part, we show how to make a recursive step in the case 
when the last operator of $\Phi'$ is a beta operator. 
Before explaining this step, we need several definitions.

A \emph{retraction} is a homomorphism $\psi$ from a graph $G_2$ 
to its subgraph $G_1$ such that $\psi(v) = v$ for each vertex $v$ of 
$G_1$. In this case the subgraph $G_1$ is called a 
\emph{retract} of $G_2$. A retraction from a $k$-labeled graph 
$\zG_2 = (V_2,E_2, \pi_2)$ to a $k$-labeled graph 
$\zG_1 = (V_1, E_1, \pi_1)$ is defined to be a retraction from 
$G_2 = (V_2, E_2)$ to $G_1 = (V_1,E_1)$ preserving 
the label function $\pi_2$, that is, $\pi_2(v) = \pi_1(\psi(v))$ 
for all $v \in V_2$.

It will be convenient for us to subdivide operator 
$\beta_{\vec{n}, \sigma, \cS}$ into two steps: the first one is 
expansion of the original graph using $\vec{n}$ and $\cS$, and the 
second is relabelling of some vertices of the resulting graph 
using $\sigma$. More specifically, let $\zH=(V,E,\pi)$ be a $k$-labeled 
graph, $\vec{n}\sse([k]\cup\{0\})^k$, $\cS\sse\cL(\vec{n})$, 
and $\sigma:[k]\to[k]$. Then 
$\zH' =(V',E',\pi')=\alpha_{\vec{n}, \cS}(\zH)$ is given by 
\begin{itemize}
\item 
$V' = \bigcup_{i = 1}^{k} C_i$, where 
$C_i = \{a_{j}| j \in \{0, ..., n_i\},\text{ $a\in V$ 
and } \pi_1(a) = i\}$. The vertices $a_0$, $a\in V$, are called 
original vertices of $\zH'=\alpha_{\vec{n}, \cS}(\zH)$ and are 
identified with their corresponding vertices from $V$;
\item 
$(a_{j}, b_{j'})\in E'$ if and only if
$(a, b)\in E$, and $(\pi(a), j,\pi(b) ,j') \in\cS$ or  
$j = j' = 0$;
\item 
$\pi'(a_j)=\pi(a)$.
\end{itemize}
Then $\zH''=(V'',E'',\pi'')= \beta_{\vec{n}, \sigma, \cS}(\zH)$,
that is, $V''=V'$, $E''=E'$, and 
\[
\pi''(a_{j}) = \begin{cases}
\pi(a), & \textit{if $j = 0$,}\\
\sigma(\pi(a)), &\text{otherwise.}
\end{cases}. 
\]
As is easily seen, $\zH$ is an induced subgraph of $\zH'$, and a 
retract. Indeed, the mapping $\mu$ that 
maps every $o_j \in V(\zH')$ to $o$ (recall that $o_j$ is a 'copy' 
of some $o \in V(\zH)$) is a retraction.  

The objective is to find a method to express 
the number of homomorphisms from induced subgraphs of $G$ to 
$\zH''$ given those from induced subgraphs of $G$ to $\zH$. 

\begin{lemma}\label{lemma-beta-operation}
Let $Y\sse V(G)$ and let $\gm$ be a function $Y \to [k]$. 
There is an algorithm that given $\hom_{\zeta}(G, \zH)$ for all 
functions $\zeta$ from a subset of $V(G)$ to $[k]$ as input, finds 
$\hom_{\gm}(G, \zH'')$ in time $O((2(k+1))^{2|V(G)|})$. 
\end{lemma}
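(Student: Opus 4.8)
The plan is to describe a homomorphism $\vf\colon G[Y]\to\zH''$ by its \emph{underlying} homomorphism into $\zH$ together with the \emph{layer} it uses at each vertex. Recall that the vertices of $\zH''$ are the symbols $a_j$ with $a\in V(\zH)$ and $0\le j\le n_{\pi(a)}$; call $j$ the layer of $a_j$, and recall the retraction $\mu$ sending $a_j$ to $a$ (the underlying graph of $\zH''$ coincides with that of $\zH'=\alpha_{\vec{n},\cS}(\zH)$, of which $\zH$ is a retract). For a homomorphism $\vf\colon G[Y]\to\zH''$ write $h=\mu\circ\vf$, and let $\ell\colon Y\to\{0,\dots,k\}$ be the layer function, so that $\vf(x)=(h(x))_{\ell(x)}$ for every $x\in Y$; put $\chi=\pi\circ h$, which is a label function $Y\to[k]$ with $\ell(x)\le n_{\chi(x)}$.

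The heart of the proof is the equivalence: \emph{$\vf$ is a homomorphism from $G[Y]$ to $\zH''$ consistent with $\gm$ if and only if (i) $h$ is a homomorphism from $G[Y]$ to $\zH$; (ii) for every $x\in Y$, either $\ell(x)=0$ and $\chi(x)=\gm(x)$, or $\ell(x)\ne0$ and $\sigma(\chi(x))=\gm(x)$; and (iii) for every edge $xy$ of $G[Y]$, either $\ell(x)=\ell(y)=0$ or $(\chi(x),\ell(x),\chi(y),\ell(y))\in\cS$.} This is read off the definition of $\zH''$: its edge relation on $(h(x))_{\ell(x)},(h(y))_{\ell(y)}$ holds iff $(h(x),h(y))\in E(\zH)$ and the $\cS$-condition of (iii) holds, and the first part, over all edges, says precisely that $h$ is a homomorphism into $\zH$; consistency with $\gm$ unwinds, via the label function $\pi''$ of $\zH''$, to (ii). Call a pair $(\chi,\ell)$ with $\chi\colon Y\to[k]$, $\ell\colon Y\to\{0,\dots,k\}$, $\ell(x)\le n_{\chi(x)}$, \emph{admissible} if it satisfies (ii) and (iii). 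The decisive point is that (ii) and (iii) involve $h$ only through $\chi=\pi\circ h$; hence, for a fixed admissible pair $(\chi,\ell)$, the homomorphisms $\vf$ that give rise to it are in bijection with the homomorphisms $h\colon G[Y]\to\zH$ with $\pi\circ h=\chi$ (each such $h$ together with $\ell$ reconstructs $\vf$), and there are $\hom_\chi(G,\zH)$ of them. Since the admissible pair produced by a given $\vf$ is unique, and distinct $\vf$ yield distinct $h$ once $(\chi,\ell)$ is fixed, grouping $\HOM_\gm(G,\zH'')$ by this pair yields
\begin{align*}
\hom_\gm(G,\zH'')\;=\;\sum_{(\chi,\ell)\ \mathrm{admissible}}\hom_\chi(G,\zH).
\end{align*}

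The algorithm is then immediate: enumerate all pairs $(\chi,\ell)$ with $\chi\colon Y\to[k]$, $\ell\colon Y\to\{0,\dots,k\}$ --- there are at most $k^{|Y|}(k+1)^{|Y|}\le (k(k+1))^{|V(G)|}$ of them --- and for each one test admissibility (conditions (ii) are per-vertex, conditions (iii) are per-edge of $G[Y]$, so the test runs in time polynomial in $|V(G)|$) and, if it holds, add the value $\hom_\chi(G,\zH)$, available in $O(1)$ time by hypothesis, to a running total. The running time is $(k(k+1))^{|V(G)|}$ times a polynomial in $|V(G)|$; since $k(k+1)\le(k+1)^2$ and $(4(k+1)^2/(k(k+1)))^{|V(G)|}\ge 4^{|V(G)|}$ absorbs the polynomial factor, this is $O((2(k+1))^{2|V(G)|})$.

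The step I expect to be the main obstacle is establishing the equivalence (i)--(iii) carefully, in particular the point that the edge and label constraints imposed by $\zH''$ decouple into ``$h$ is a homomorphism into $\zH$'' plus purely combinatorial constraints on the pair $(\chi,\ell)$ that involve no further information about $h$. It is exactly this decoupling that licenses the grouping above and the substitution of the already-computed quantities $\hom_\chi(G,\zH)$. Verifying it --- equivalently, exhibiting the bijection between $\HOM_\gm(G,\zH'')$ and the disjoint union, over admissible pairs $(\chi,\ell)$, of the sets of homomorphisms $h\colon G[Y]\to\zH$ with $\pi\circ h=\chi$ --- proceeds in the style of the proofs of Lemmas~\ref{relabel_lemma} and~\ref{lemma-join-operation} and is otherwise routine.
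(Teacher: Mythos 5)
Your proposal is correct and follows essentially the same route as the paper: your admissible pairs $(\chi,\ell)$ are exactly the paper's pairs $(\chi',\chi'')\in\cC(\gm)$ together with a layer function $\omega\in\cB(\chi',\chi'')$, and your bijection between homomorphisms into $\zH''$ with a fixed admissible pair and homomorphisms into $\zH$ consistent with $\chi$ is the paper's map $\tau\colon\vf\mapsto\tau_\vf$, $a\mapsto\vf(a)_{\omega(a)}$. The only difference is cosmetic: you collapse the paper's two steps (the intermediate graph $\zH'=\alpha_{\vec n,\cS}(\zH)$ with Lemmas~\ref{lemma-H'-H-sum} and~\ref{lemma-H''-H'-sum}) into a single grouping, which yields the same sum and the same time bound.
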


We break this down into two steps. 
The main result of Step~I, which is summarized in 
Lemma~\ref{lemma-H'-H-sum}, finds an equality for the number 
of homomorphisms from $G$ to $\zH'$. Then, the main result of 
Step~II, Lemma~\ref{lemma-H''-H'-sum}, 
finds the number of homomorphisms from induced subgraphs 
of $G$ to $\zH''$ given those for $G$ and $\zH'$.

\subsubsection*{Step 1}

Let $\zH' = (V', E', \pi')=\alpha_{\vec{n}, \cS}(\zH)$ and 
$\zH = (V, E, \pi)$. Let  $ Y$ be a subset of  $V(G)$ and 
$\gm$ a function $Y \to [k]$. Also, set
\[
\cW(\gm) = \{\omega| \text{ $\omega:Y \to [k] \cup \{0\}$ 
and $\forall a \in Y$, } 0 \leq  \omega(a) \leq n_{\gm(a)}  \}.
\]
For $\omega \in \cW(\gm)$, let 
\begin{align*}
\HOM_{\gm}(G, \zH', \omega) = &
\{\vf\mid \text{ $\vf \in\HOM_{\gm}(G, \zH')$ and $\forall a\in Y, 
\exists o \in V(\zH)$}\\
& \quad\text{such that $\vf(a) = o_{\omega(a)}$}\},
\end{align*}
and 
\begin{align*}
\MAP_{\gm}(G, \zH', \omega) = & \{\vf\mid 
\text{ $\vf \in\MAP_{\gm}(G, \zH')$ and $\forall a\in Y, 
\exists o \in V(\zH)$}\\
& \quad\text{ such that $\vf(a) = o_{\omega(a)}$}\}.
\end{align*}

For the rest of Step~I, let $X'$ and $X''$ be two disjoint subsets of 
$V(G)$ and let $\chi':X' \to [k]$ and $\chi'':X'' \to [k]$ be arbitrary 
functions. Also  let $X = X' \cup X''$ and let $\chi=\chi'\uplus\chi''$.

Let $\HOM_{\chi', \chi''}(G, \zH')$ denote the set of all 
elements of $\HOM_{\chi}(G, \zH')$ that map a vertex $a$ from $V(G)$ 
to an original vertex of $\zH'$ (recall that any vertex of $\zH$ is 
called an original vertex of $\zH'$) if and only if $a\in X'$. The set 
$\HOM_{\chi', \chi''}(G, \zH') $ 
is an intermediate set that is computed in Step~I. 
In Step~2, this set is used to calculate another value. 

For any $\vf\in\HOM_{\chi', \chi''}(G, \zH')$, there is a unique 
$\omega \in \cW(\chi)$ such that $\vf$ is also an element of 
$\HOM_{\chi}(G, \zH', \omega)$. Let us call $\omega$ the 
\textit{consistent function} of $\vf$. Similar to the method used in 
Section~\ref{section-connect}, one can partition 
$\HOM_{\chi', \chi''}(G, \zH')$ into smaller subsets and count the 
elements in each smaller subsets. We partition 
$\HOM_{\chi', \chi''}(G, \zH')$ into sets of homomorphisms 
that all share the same consistent function $\omega$  for some 
$\omega \in \cW(\chi)$. As is easily seen, 
$\HOM_{\chi}(G, \zH', \omega) \cap\HOM_{\chi', \chi''}(G, \zH')$ 
is such a subset.

Let $\cB(\chi', \chi'')$ be the set of all $\omega \in \cW(\chi)$ 
such that $\omega$ satisfies the following properties: 
\begin{itemize}\setlength{\itemindent}{.5cm}
\item[(b.1)] 
$\omega \in \cW(\chi)$ and  $\omega(x) = 0$ iff $x \in X'$. 
\item[(b.2)] 
For every $a, b \in X$ such that at least one of them is not an 
element of $X'$, and $ab\in E(G)$ it holds that 
$(\chi(a), \omega(a), \chi(b), \omega(b)) \in \cS$. 
\end{itemize}
We show that, the number of elements in $HOM_{\chi}(G, \zH')$ 
such that $\omega$ is their consistent function is the same  for any 
$\omega\in\cB(\chi', \chi'')$ and it is zero otherwise. 

\begin{lemma}\label{lemma-H'-H-sum}
Let $G$, $\zH$, $\zH'$, $\chi'$, and $\chi''$ be defined as above, 
then 
\[
|\HOM_{\chi', \chi''}( G,\zH')| = |\cB(\chi', \chi'')| \times
\hom_{ \chi}( G, \zH).
\]
\end{lemma}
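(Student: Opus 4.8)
The plan is to prove the identity by partitioning $\HOM_{\chi',\chi''}(G,\zH')$ according to the consistent function $\omega\in\cW(\chi)$ attached to each of its elements, to show that the only functions arising this way are those in $\cB(\chi',\chi'')$, and to exhibit, for each such $\omega$, a bijection between the corresponding block and $\HOM_{\chi}(G,\zH)$. Summing the sizes of the blocks then gives $|\HOM_{\chi',\chi''}(G,\zH')|=|\cB(\chi',\chi'')|\cdot\hom_{\chi}(G,\zH)$.

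First I would check that the consistent function $\omega$ of any $\vf\in\HOM_{\chi',\chi''}(G,\zH')$ lies in $\cB(\chi',\chi'')$. Property (b.1) is immediate from the definition of $\HOM_{\chi',\chi''}$: a vertex $a\in X$ is mapped to an original vertex $o_0$ exactly when $a\in X'$, i.e.\ $\omega(a)=0$ iff $a\in X'$. For (b.2), take an edge $ab\in E(G)$ with $a,b\in X$ and, say, $\omega(a)\ne 0$, and write $\vf(a)=o_{\omega(a)}$, $\vf(b)=p_{\omega(b)}$. Since $\vf$ is a homomorphism, $(o_{\omega(a)},p_{\omega(b)})\in E'$, and because the two copy-indices are not both $0$ the definition of $\alpha_{\vec n,\cS}$ forces $(o,p)\in E$ together with $(\pi(o),\omega(a),\pi(p),\omega(b))\in\cS$. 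Consistency of $\vf$ with $\chi$ gives $\pi(o)=\pi'(\vf(a))=\chi(a)$ and $\pi(p)=\chi(b)$, so $(\chi(a),\omega(a),\chi(b),\omega(b))\in\cS$, which is (b.2). Hence every $\omega\notin\cB(\chi',\chi'')$ contributes an empty block.

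Next I would fix $\omega\in\cB(\chi',\chi'')$ and construct a bijection between the block $\HOM_{\chi}(G,\zH',\omega)\cap\HOM_{\chi',\chi''}(G,\zH')$ and $\HOM_{\chi}(G,\zH)$. In one direction, send $\vf$ to $\mu\circ\vf$, where $\mu\colon o_j\mapsto o$ is the retraction of $\zH'$ onto $\zH$ noted above; this is a homomorphism because every edge of $\zH'$ projects to an edge of $\zH$, and it is consistent with $\chi$ since $\pi(\mu(\vf(a)))=\pi'(\vf(a))=\chi(a)$, so its image lies in $\HOM_{\chi}(G,\zH)$. In the other direction, send $\psi\in\HOM_{\chi}(G,\zH)$ to the map $a\mapsto\psi(a)_{\omega(a)}$; this is a well-defined vertex of $\zH'$ because $0\le\omega(a)\le n_{\chi(a)}=n_{\pi(\psi(a))}$, it is consistent with $\chi$, it maps $a$ to an original vertex iff $\omega(a)=0$ iff $a\in X'$, and its consistent function is $\omega$. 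The one nontrivial verification is that it is a homomorphism to $\zH'$: for an edge $ab$ of $G[X]$ we have $(\psi(a),\psi(b))\in E$, and then either both $a,b\in X'$, in which case $\omega(a)=\omega(b)=0$ and the $j=j'=0$ clause of $\alpha_{\vec n,\cS}$ applies, or at least one of $a,b$ lies outside $X'$, in which case (b.2) supplies $(\pi(\psi(a)),\omega(a),\pi(\psi(b)),\omega(b))\in\cS$; either way $(\psi(a)_{\omega(a)},\psi(b)_{\omega(b)})\in E'$. A short computation shows the two maps are mutually inverse ($\mu$ undoes the re-indexing and the re-indexing undoes $\mu$), so the block has exactly $\hom_{\chi}(G,\zH)$ elements.

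Finally I would assemble the count: $\HOM_{\chi',\chi''}(G,\zH')$ is the disjoint union over $\omega\in\cW(\chi)$ of its blocks, each block is empty unless $\omega\in\cB(\chi',\chi'')$, and each nonempty block has size $\hom_{\chi}(G,\zH)$, which yields the stated equality. I expect the main obstacle to be organizational rather than conceptual: carefully carrying out the homomorphism check for the re-expansion map $\psi\mapsto(a\mapsto\psi(a)_{\omega(a)})$, keeping straight that (b.1) determines which coordinates of $\omega$ equal $0$ while (b.2) delivers exactly the $\cS$-memberships needed for edges incident to a non-original vertex, and making sure the $j=j'=0$ clause of $\alpha_{\vec n,\cS}$ accounts for the edges lying entirely inside $X'$.
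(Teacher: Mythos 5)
Your proposal is correct and follows essentially the same route as the paper: partition $\HOM_{\chi',\chi''}(G,\zH')$ by the consistent function $\omega$, show blocks with $\omega\notin\cB(\chi',\chi'')$ are empty via the definition of $\alpha_{\vec n,\cS}$, and for $\omega\in\cB(\chi',\chi'')$ exhibit the bijection $\psi\mapsto(a\mapsto\psi(a)_{\omega(a)})$ with inverse given by the retraction $\mu$, exactly as in the paper's proof. The only cosmetic difference is that the paper proves injectivity and surjectivity of the re-indexing map (using $\mu$ only for surjectivity), whereas you present the two maps as mutual inverses.
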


\begin{proof}
One can easily observe that if $\omega \in \cW(\chi)$ does not 
satisfy Property~(b.1), then the number of 
$\vf\in\HOM_{\chi', \chi''}(G, \zH')$ that have $\omega$ as their 
consistent function is zero. Otherwise, if $\omega \in \cW(\chi)$ 
satisfies Property~(b.1), then $\HOM_{\chi}(G, \zH', \omega)$ is 
the set of all $\vf\in\HOM_{\chi', \chi''}(G, \zH)$ that have 
$\omega$ as their consistent function. So, 
\begin{align*}
\HOM_{X', X''}(G, \zH') = \bigcup_{\substack{\omega \in \cW(\chi)\\ 
\text{$\omega(x) = 0$ iff $x \in X'$}}}
\HOM_{\chi}(G, \zH', \omega).
\end{align*}

Based on this, to prove the Lemma~\ref{lemma-H'-H-sum}, it suffices 
to prove that if $\omega \in \cW(\chi)$ and it satisfies Property~(b.1) 
but it does not satisfy Property~(b.2), then the size of 
$\HOM_{\chi}(G, \zH', \omega)$ is zero, and otherwise, if  
$\omega \in \cW(\chi)$ satisfies Properties~(b.1) and~(b.2), then 
the size of $\HOM_{\chi}(G, \zH', \omega)$ is equal to the size of 
$\HOM_{\chi}(G, \zH)$.

Let $\omega$ be a function that satisfies Property~(b.1) but 
$\omega$ does not satisfy Property~(b.2), i.e., $\omega$ is not 
an element of $\cB(\chi', \chi'')$, and let $\vf$ be an element of 
$\HOM_{\chi}(G, \zH', \omega)$. Since $\omega$  does not satisfy 
Property~(b.2), there is $a, b \in \zX$ such that at least one of them 
is not in $X'$ and $ab\in E(G)$ and 
$(\chi(a),\omega(a),\chi(b),\omega(b)) = (\pi'(\vf(a)),\omega(a), 
\pi'(\vf(b)), \omega(b) ) \not \in \cS$. So, $(\vf(a),\vf(b))$ is not an 
edge in $\zH'$ and $\vf$ is not a homomorphism. Thus 
$|\HOM_{\chi}(G, \zH', \omega)| = 0$.

Now, for a fixed $\omega \in \cW(\chi)$ such that $\omega$ 
satisfies Properties~(b.1) and~(b.2), i.e., 
$\omega \in \cB(\chi', \chi'')$, one can define a bijection between 
the set $\HOM_{\chi}(G, \zH', \omega)$ and the set 
$\HOM_{\chi}(G, \zH)$, as follows. Suppose that  $\vf$ is an element 
of $HOM_{\chi}(G, \zH)$. The mapping 
$\tau_\vf: a \mapsto \vf(a)_{\omega(a)}$ is an element of 
$\MAP_{\chi}(G, \zH, \omega)$. We claim that mapping 
$\tau: \vf \mapsto \tau_{\vf}$ is a bijection between 
$\HOM_{\chi}(G, \zH)$ and $\HOM_{\chi}(G, \zH', \omega)$. 
In so doing, one should prove that first: $\tau_\vf$ is in 
$\HOM_{\chi}(G, \zH', \omega)$; and second, 
$\tau:\HOM_{\chi}(G, \zH) \to\HOM_{\chi}(G, \zH', \omega)$ is 
injective and surjective.

For any $a,b\in X'$ and $ab\in E(G)$, as $\omega$ satisfies 
Property~(b.1), $\omega(a) = \omega(b) = 0$. Thus, 
$\tau_\vf(a) = \vf(a)_{\omega(a)} = \vf(a)_0 = \vf(a)$ and 
$\tau_\vf(b) = \vf(b)_{\omega(b)}= \vf(b)_0 = \vf(b)$. 
Since $\vf$ is a homomorphism, $\tau_\vf(a)\tau_\vf(b)=  
\vf(a)\vf(b)\in E(\zH) \subseteq E(\zH')$. Thus the image 
of $ab$ under $\tau_\vf$ is an edge.

For any $a,b \in X$, such that at least one of $a$ and $b$ is not an 
element of $X'$, and $ab\in E(G)$, by definition 
$\tau_\vf(a)=\vf(a)_{\omega(a)}$ and 
$\tau_\vf(b)=\vf(b)_{\omega(b)}$.
Moreover, since $\pi(\vf(a)) = \chi(a)$ and $\pi(\vf(b)) = \chi(b)$,  
as $\omega$ satisfies Property~(b.2),  $(\chi(a), \omega(a), \chi(b), 
\omega(b) ) = (\pi(\vf(a)), \omega(a), \pi(\vf(b)), \omega(b))\in\cS$. 
Consequently, by the definition of operator 
$\alpha_{\vec{n},\cS}$, as $\vf(a)\vf(b)\in E(\zH)$, then 
$\vf(a)_{\omega(a)}\vf(b)_{\omega(b)}$ is an edge of $\zH'$. 
Thus for any $\vf\in\HOM_{\chi}(G, \zH)$, $\tau_\vf$ is a 
homomorphism. Since $\tau_\vf\in\MAP_{\chi}(G, \zH', \omega)$ for 
every $\vf\in\HOM_{\chi}(G, \zH)$,  and $\tau_\vf$ is a 
homomorphism, $\tau_\vf$ is an element of the set 
$\HOM_{\chi}(G, \zH', \omega)$.

Second step is to prove that $\tau$ is injective. Consider $\vf\neq\psi$ 
two different elements of $\HOM_{\chi}(G, \zH)$. There is an 
element $a$ in their domain such that $\vf(a) \neq\psi(a)$. Since 
$\tau_\vf(a) = \vf(a)_{\omega(a)} \neq \tau_\psi(a) = 
\psi(a)_{\omega(a)}$ we obtain $\tau_\vf\neq\tau_\psi$. 

Finally, we should prove that $\tau$ is surjective. For any 
homomorphism $\vf'\in\HOM_{\chi}(G, \zH', \omega)$, 
$\mu(\vf')$ is an element of $\HOM_{\chi}(G, \zH)$ (recall that $\mu$ 
is the natural retraction from $\zH'$ to $\zH$.). As is easy to 
see, $\tau_{\mu(\vf')} = \vf'$. Thus $\tau$ is surjective. 

The bijection $\vf$ between $\HOM_{\chi}(G, \zH)$ and 
$\HOM_{\chi}(G, \zH', \omega)$ proves that the cardinalities 
of these two sets are equal. 
\end{proof}

\subsubsection*{Step II}

Recall that $\zH'' = \beta_{\vec{n}, \sigma, \cS}(\zH)$ and 
$\zH' = \alpha_{\vec{n}, \zS}(\zH)$. Let $X$ be a subset of $V(G)$ 
and let $\chi$ be a $k$-label function from $X$ to $[k]$. Also, let 
$\vf\in\HOM_{\chi}(G, \zH'')$, and let $X'$ and $X''$ be the 
preimages of $V(\zH)$ and $V(\zH'') \setminus V(\zH)$ under $\vf$, 
respectively. Then, there is a pair of functions $\chi':X' \to [k]$ and 
$\chi'':X'' \to [k]$ such that $\vf\in\HOM_{\chi', \chi''}(G, \zH')$. 
Let us call $(\chi', \chi'')$ the \textit{consistent partition} of $\vf$. 
To count the elements of the set $\HOM_{\chi}(G, \zH'')$, we 
partition this set into smaller sets of homomorphisms that share the 
same  consistent partition, and count the number of elements in the 
smaller sets.

Let $\cC(\chi)$ denote the set of all pairs of functions 
$(\chi', \chi'')$, $\chi':X' \to [k], \chi'':X'' \to [k]$, $X',X''$ disjoint, 
that satisfy the following conditions:
\begin{itemize}\setlength{\itemindent}{.5cm}
\item[(c.1)] 
$X' \cup X'' =X$, and
\item[(c.2)] 
$\chi \vert_{X'} = \chi'$ and $\chi \vert_{X''} = \sigma(\chi'')$.  
\end{itemize}

We show that $\cC(\chi)$ is the set of consistent partitions of the 
homomorphisms in $\HOM_{\chi}(G, \zH'')$. 

\begin{lemma}\label{lemma-H''-H'-sum}
Let $\chi$, $G$, $\zH'$, and $\zH''$ be defined as above. Then, 
\begin{align}
\hom_{\chi}( G, \zH'') = \sum\limits_{(\chi', \chi'') \in \cC(\chi)} 
|\HOM_{\chi', \chi''}( G, \zH')|.
\label{recoloring-alpha-product}
\end{align}
\end{lemma}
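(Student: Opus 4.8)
The plan is to exploit the fact that $\zH'$ and $\zH''$ are literally the same graph: by construction $V(\zH')=V(\zH'')$ and $E(\zH')=E(\zH'')$, and they differ only in their label functions, with $\pi''(a_j)=\pi'(a_j)=\pi(a)$ when $j=0$ and $\pi''(a_j)=\sigma(\pi(a))$, $\pi'(a_j)=\pi(a)$ when $j\neq 0$. Hence a map $\vf\colon X\to V(\zH')$ is a graph homomorphism from $G[X]$ to $\zH'$ if and only if it is one from $G[X]$ to $\zH''$; only the consistency constraints with respect to the labels change. So proving \eqref{recoloring-alpha-product} reduces to establishing the set identity
\[
\HOM_{\chi}(G,\zH'') \;=\;
\biguplus_{(\chi',\chi'')\in\cC(\chi)} \HOM_{\chi',\chi''}(G,\zH')
\]
(a \emph{disjoint} union) and then passing to cardinalities.

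First I would prove the inclusion $\supseteq$. Fix $(\chi',\chi'')\in\cC(\chi)$ and $\vf\in\HOM_{\chi',\chi''}(G,\zH')$. By the observation above $\vf$ is a graph homomorphism into $\zH''$, and by the definition of $\HOM_{\chi',\chi''}$ it sends $x$ to an original vertex of $\zH'$ precisely when $x\in X'$, with $\pi'(\vf(x))=\chi'(x)$ for $x\in X'$ and $\pi'(\vf(x))=\chi''(x)$ for $x\in X''$. For $x\in X'$ the vertex $\vf(x)$ is original, so $\pi''(\vf(x))=\pi'(\vf(x))=\chi'(x)=\chi(x)$ by (c.2); for $x\in X''$ we have $\vf(x)=a_j$ with $j\neq 0$, hence $\pi''(\vf(x))=\sigma(\pi(a))=\sigma(\pi'(\vf(x)))=\sigma(\chi''(x))=\chi(x)$, again by (c.2). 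Thus $\vf\in\HOM_{\chi}(G,\zH'')$. For the reverse inclusion, take $\vf\in\HOM_{\chi}(G,\zH'')$, let $X'$ be the set of $x\in X$ with $\vf(x)$ an original vertex, $X''=X\setminus X'$, and set $\chi'=\chi\vert_{X'}$ and $\chi''=\pi'\circ\vf\vert_{X''}$. Since original vertices carry the same label in $\zH'$ and $\zH''$, $\vf$ is consistent with $\chi'\uplus\chi''$ as a map into $\zH'$ and maps to original vertices exactly on $X'$, so $\vf\in\HOM_{\chi',\chi''}(G,\zH')$; and $(\chi',\chi'')\in\cC(\chi)$ because (c.1) is immediate and (c.2) holds since $\chi\vert_{X'}=\chi'$ by definition and, for $x\in X''$ with $\vf(x)=a_j$, $j\neq 0$, $\chi(x)=\pi''(a_j)=\sigma(\pi(a))=\sigma(\chi''(x))$.

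The step I expect to require the most care is disjointness of the union, because $\sigma$ need not be injective: distinct pairs $(\chi',\chi''),(\chi',\widetilde\chi'')$ with $\sigma\circ\chi''=\sigma\circ\widetilde\chi''=\chi\vert_{X''}$ both lie in $\cC(\chi)$, and one must rule out a homomorphism belonging to both $\HOM_{\chi',\chi''}(G,\zH')$ and $\HOM_{\chi',\widetilde\chi''}(G,\zH')$ (and, more generally, to sets indexed by pairs inducing different partitions of $X$). This is resolved by observing that for any $\vf$ in the union the indexing pair is recovered from $\vf$ alone and not from $\chi$: the set $X'$ must be exactly the preimage of the original vertices under $\vf$, and then $\chi'=\pi'\circ\vf\vert_{X'}$ and $\chi''=\pi'\circ\vf\vert_{X''}$ are determined, so $\vf$ lies in exactly one of the sets $\HOM_{\chi',\chi''}(G,\zH')$. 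Once disjointness is in hand, summing cardinalities over $(\chi',\chi'')\in\cC(\chi)$ gives \eqref{recoloring-alpha-product}; the two inclusions themselves are routine label-chasing through the definitions of $\alpha_{\vec{n},\cS}$ and $\beta_{\vec{n},\sigma,\cS}$.
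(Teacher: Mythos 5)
Your proof is correct and takes essentially the same route as the paper's: both decompose $\HOM_{\chi}(G,\zH'')$ according to the pair $(\chi',\chi'')$ determined by which vertices map to original vertices of $\zH'$, verify via the label-chase through $\pi'$, $\pi''$ and $\sigma$ that these pairs are exactly the elements of $\cC(\chi)$, and sum cardinalities. The disjointness issue you flag (non-injectivity of $\sigma$) is handled in the paper only implicitly, through the uniqueness of the ``consistent partition'' of each homomorphism, which is precisely the recoverability argument you give.
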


\begin{proof}
Let $\vf\in\HOM_{\chi}(G, \zH'')$, and let $X'$ and $X''$ be the 
preimages of $V(\zH)$ and $V(\zH'') \setminus V(\zH)$ under $\vf$, 
respectively. Also let $(\chi',\chi'')$, 
$\chi: X' \to [k], \chi'': X'' \to [k]$, be the consistent partition of 
$\vf$. As is easily seen, $(\chi', \chi'')$ satisfies 
Properties~(c.1) and~(c.2).

Now, let $(\chi', \chi'') \in \cC(\chi)$, then the set 
$\HOM_{\chi', \chi''}(G, \zH')$ is the set of all elements of 
$\HOM_{\chi}(G, \zH'')$ that have $(\chi', \chi'')$ as their consistent 
partition. Thus
\begin{align}
\HOM_{\chi}(G, \zH'') = \bigcup_{(\chi', \chi'') \in \cC(\chi)} 
\HOM_{\chi', \chi''}(G, \zH').
\label{recoloring_set}
\end{align}

Equation~(\ref{recoloring-alpha-product}) follows from 
(\ref{recoloring_set}).
\end{proof}

\subsubsection*{Putting things together}

We are now in a position to prove 
Lemma \ref{lemma-beta-operation} and 
Theorem \ref{main-theorem}.

\begin{proof}[Proof of Lemma \ref{lemma-beta-operation}]
By Lemmas~\ref{lemma-H'-H-sum} and~\ref{lemma-H''-H'-sum}, 
\begin{align*}
\hom_{\gm}(G, \zH'') = & \sum\limits_{(\gm', \gm'') \in \cC(\gm)}|
\HOM_{\gm', \gm''}(G, \zH')| \\
= & \sum\limits_{(\gm', \gm'') \in 
\cC(\gm)} |\cB(\gm', \gm'')| \times \hom_{\gm'\uplus \gm''}(G, \zH),
\end{align*}
where, the first equality follows from Lemma~\ref{lemma-H''-H'-sum} 
and the second equality follows from Lemma~\ref{lemma-H'-H-sum}.

The cardinality of $\cC(\gm)$ is bounded by  $2^{|V(G)|}$. Also, for any 
$(\gm', \gm'') \in \cC(\gm)$ the cardinality of $\cB(\gm', \gm'')$ is 
bounded by $(2k)^{V(G)}$ and thus that cardinality is computable in 
time  $O((2k)^{V(G)})$ by brute force. Therefore, the time 
complexity of calculating 
\[
\sum\limits_{(\gm', \gm'') \in \cC(\gm)} 
|\cB(\gm', \gm'')| \times \hom_{\gm'\uplus \gm''}(G, \zH)
\] 
is $O^*((4k)^{|V(G)|})$ and so is the time complexity of computing 
$\hom_{\gm}(G, \zH'')$.  
\end{proof}  

\begin{proposition}\label{homo-consistent-with-partition}
Let $G$ be a graph, $\zH$ a $k$-labeled graph, and let $\Phi$ 
be an extended $k$-expression for $\zH$. Then, there is an 
algorithm that calculates $\hom_{\chi}(G, \zH)$ for all 
$k$-label functions 
$\chi: X \to [k]$, where $X \subseteq V(G)$, in total time 
\[
\size(\Phi)\times (2(k+1))^{ 2|V(G)|},
\]
given numbers
$\hom_{\chi'}(G, \zH')$ for all label functions $\chi'$ and 
$k$-labeled graphs represented by subexpressions of $\Phi$.
\end{proposition}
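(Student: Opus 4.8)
The plan is to induct on the structure of $\Phi$ and simply assemble the three recursive steps already established. The base case is the singleton operator $\cdot_i$: for each $X\sse V(G)$ and each $\chi\colon X\to[k]$, we have $\hom_\chi(\zG[X],\zH')=1$ if $\chi\equiv i$ on $X$ and $G[X]$ has no edges, and $0$ otherwise; this is computed in time $O^*(2^{|V(G)|})$ over all $X$ and $\chi$, which is absorbed into the claimed bound. For the inductive step, look at the outermost operator of $\Phi$. If it is a (maximal block of) relabelling operators, apply Lemma~\ref{relabel_lemma} and its corollary: for each of the at most $(k+1)^{|V(G)|}$ pairs $(X,\chi)$ we sum $|\HOM_{\chi'}(G,\zH')|$ over $\chi'\in\cD(\chi)$, and $|\cD(\chi)|\le k^{|V(G)|}$, so this costs $O^*\!\bigl((k+1)^{|V(G)|}k^{|V(G)|}\bigr)=O^*\!\bigl((k(k+1))^{|V(G)|}\bigr)$, well inside $(2(k+1))^{2|V(G)|}$. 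If it is a connect operator $\eta_\cT$, apply Lemma~\ref{lemma-join-operation}: for each $(X,\chi)$ we sum products $\hom_{\chi'}(G,\zH')\hom_{\chi''}(G,\zH'')$ over $(\chi',\chi'')\in\All(\chi)$, and the number of such pairs is bounded by the number of ways to split $X$ into $X',X''$ together with the at most $k^{|X'|+|X''|}\le k^{|V(G)|}$ label choices, i.e.\ at most $2^{|V(G)|}k^{|V(G)|}$, giving time $O^*\!\bigl((k+1)^{|V(G)|}\cdot 2^{|V(G)|}k^{|V(G)|}\bigr)$, again within the bound. If it is a beta operator $\beta_{\vec n,\sigma,\cS}$, apply Lemma~\ref{lemma-beta-operation} directly: the cost per entry is $O((2(k+1))^{2|V(G)|})$ as proved there, and multiplying by the $(k+1)^{|V(G)|}$ values of $(X,\chi)$ still gives a bound of the form $(2(k+1))^{2|V(G)|}$ up to polynomial factors.

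Next I would account for the bookkeeping that turns these per-operator costs into the stated total. Each node of $\Phi$ — counted by $\size(\Phi)$, which lumps together operands, connect operators, beta operators, and maximal runs of relabellings — is processed exactly once, and when we process it we already have, by the induction hypothesis, the numbers $\hom_{\chi'}(G,\zH')$ (and $\hom_{\chi''}(G,\zH'')$ in the $\eta_\cT$ case) for the immediate subexpression(s). The cost at that node is at most $O^*\!\bigl((2(k+1))^{2|V(G)|}\bigr)$ by the case analysis above — the beta case is the dominant one and is exactly what forces the exponent $2|V(G)|$ and the base $2(k+1)$. Summing over all $\size(\Phi)$ nodes yields the total bound $\size(\Phi)\times(2(k+1))^{2|V(G)|}$, absorbing the polynomial-in-$|V(G)|,|V(H)|$ factors hidden in $O^*$ into the statement's implicit tolerance (or, if one wants to be scrupulous, into the already-generous gap between, say, $(k(k+1))^{|V(G)|}$ and $(2(k+1))^{2|V(G)|}$).

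The one genuine subtlety — and the step I expect to be the main obstacle — is making sure the induction is set up so that at every subexpression $\Phi'$ we really do have \emph{all} the values $\hom_{\chi'}(\zG[Y],\zH'')$ for \emph{all} $Y\sse V(G)$ and \emph{all} label functions $\chi'$ on $Y$, for every proper subexpression $\zH''$ of $\zH'$, and not merely for $Y=V(G)$. Each of the three lemmas quietly requires this: Lemma~\ref{lemma-join-operation} evaluates $\hom_{\chi'}(G,\zH')$ on the \emph{induced subgraph} $G[X']$ for varying $X'\sse X$, and Lemma~\ref{lemma-beta-operation} likewise needs $\hom_\zeta(G,\zH)$ for $\zeta$ ranging over all functions from subsets of $V(G)$. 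So the correct inductive invariant to carry is: after processing a subexpression $\zH'$, the table contains $\hom_\chi(\zG[X],\zH')$ for every $\zG$ a $k$-labelling of $G$, every $X\sse V(G)$, and every $\chi\colon X\to[k]$. Once this invariant is stated precisely, verifying it is preserved by each of the three operator types is exactly Lemmas~\ref{relabel_lemma}, \ref{lemma-join-operation}, and~\ref{lemma-beta-operation} (noting that each is phrased so as to produce the value for an arbitrary $X$ and $\chi$, and that $G[X]$ plays the role of ``$G$'' throughout). The base case clearly satisfies the invariant, so the proposition follows; combined with Observation~\ref{exhaustive-observation} and the base-case evaluation for $\cdot_i$, this will in turn yield Theorem~\ref{main-theorem}.
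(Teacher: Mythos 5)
Your proposal follows essentially the same route as the paper: structural induction on $\Phi$, a three-way case analysis on the outermost operator handled by Lemma~\ref{relabel_lemma}, Lemma~\ref{lemma-join-operation} and Lemma~\ref{lemma-beta-operation} respectively, a per-entry cost multiplied by the $(k+1)^{|V(G)|}$ pairs $(X,\chi)$, and a sum over the $\size(\Phi)$ nodes; your explicitly stated invariant (the table holds $\hom_\chi(\zG[X],\zH')$ for \emph{all} $X\sse V(G)$ and all $\chi$ at every subexpression) is exactly the bookkeeping the paper's proof relies on.

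One quantitative step is off as written, in precisely the dominant case. You take the per-entry cost in the beta case to be $O\bigl((2(k+1))^{2|V(G)|}\bigr)$ (the bound in the statement of Lemma~\ref{lemma-beta-operation}) and then multiply by the $(k+1)^{|V(G)|}$ entries, claiming the product is ``still of the form $(2(k+1))^{2|V(G)|}$ up to polynomial factors.'' It is not: the extra factor $(k+1)^{|V(G)|}$ is exponential, so this accounting does not give the bound claimed in the proposition. The repair is to use the sharper per-entry cost actually established inside the proof of Lemma~\ref{lemma-beta-operation}, namely $O^*\bigl((4k)^{|V(G)|}\bigr)$ (this is what the paper's own proof implicitly invokes when it quotes a per-entry cost of roughly $(2k)^{|V(G)|}$); then $(k+1)^{|V(G)|}\cdot(4k)^{|V(G)|}\le\bigl(4(k+1)^2\bigr)^{|V(G)|}=(2(k+1))^{2|V(G)|}$ and the node-by-node summation over $\size(\Phi)$ goes through exactly as you describe. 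With that substitution your argument is complete and coincides with the paper's.
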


\begin{proof} 
The base case of induction is when $\size(\Phi) = 1$ which takes 
place when $\zH$ is a labeled vertex. In this case, there is a 
brute force algorithm for computing $\hom_{\chi}(G, \zH)$, with 
running time of $O(|V(G)|)\times (k + 1)^{|V(G)|}$ that is less than 
$\size(\Phi) (2(k+1))^{2|V(G)|}$.

If $\size(\Phi) > 1$, as $\Phi$ is an extended $k$-expression, 
there are three possibilities:
\begin{itemize}
\item 
$\Phi = \beta_{\vec{n}, \sigma, \cS}(\Phi')$, where  $\Phi'$ 
is another extended $k$-expression;
\item 
$\Phi =  \eta_{\cT}(\Phi', \Phi'')$, where  $\Phi'$ and $\Phi''$ are 
two extended $k$-expressions;
\item 
$\Phi = \Psi(\Phi')$, where $\lambda$ is a sequence of 
relabelling operators and $\Phi'$ is another extended $k$-expression 
that does not begin with a relabelling operator. 
\end{itemize}
 
The $k$-labeled graphs represented by $\Phi',\Phi''$ are denoted 
$\zH',\zH''$. Let start with the first case. Let 
$\zH = \beta_{\vec{n}, \sigma,\cS}(\zH')$. Since  $\size(\Phi')$ is 
less than $\size(\Phi)$, by induction hypothesis, there is an algorithm 
that calculates $\hom_{\chi}(G, \zH')$ for all $k$-label functions 
$\chi: X \to [k]$, where $X \subseteq V(G)$, in total time 
$\size(\Phi') \times (2(k+1))^{ 2|V(G)|}$.

Suppose that for all $k$-label functions $\chi: X \to [k]$ such that 
$X \subseteq V(G)$ the value of $\hom_{\chi}(G, \zH')$ is known. 
Then by Lemma~\ref{lemma-beta-operation}, for any $k$-label 
function $\chi:X \to [k]$, where $X \subseteq V(G)$, the value  
$\hom_{\chi}(G, \zH)$ can be found in time $(2k)^{|V(G)|}$ 
and, since there are $(k + 1)^{V(G)}$ of those functions,  
$\hom_{\chi}(G, \zH')$ can also be found for all $k$-label 
functions $\chi: X \to [k]$, where $X \subseteq V(G)$, in time 
$(k + 1)^{|V(G)|} \times (2(k+1)^{|V(G)|})\le (2(k+1))^{2|V(G)|}$. 

So, $\hom_{\chi}(G, \zH)$ for all $k$-label functions 
$\chi:X \to [k]$, where $X \subseteq V(G)$, can be found  in total 
time  $O((\size(\Phi') + 1) \times (2(k+1))^{|V(G)|})$, that is, 
bounded by $\size(\Phi) \times (2(k+1))^{2|V(G)|}$.

The proof for the second and third cases is similar with 
Lemma~\ref{lemma-join-operation} and Lemma~\ref{relabel_lemma}
in place of Lemma~\ref{lemma-beta-operation}.
\end{proof}

\begin{proof}[Proof of Theorem \ref{main-theorem}]
By Observation~\ref{exhaustive-observation}, $\hom(G, H)$  
equals the sum of $\hom_{\chi}(G, \zH)$ over all $k$-label 
functions $\chi: V(G) \to [k]$. 
By Proposition~\ref{homo-consistent-with-partition}, the time 
required to compute  $\hom_{\chi}(G, \zH)$ for all $k$-label functions 
$\chi: V(G) \to [k]$ is $O(\size(\Phi) \times (2(k+1))^{2|V(G)|})$. 
Also, the time for summing up over $k^{V(G)}$ numbers is 
$O(k^{V(G)})$. Thus the total running time can be bounded by 
\begin{align*}
O(\size(\Phi) \times (2(k+1))^{2|V(G)|}) + O(k^{V(G)}) = 
O(\size(\Phi) \times (2(k+1))^{2|V(G)|}).
\end{align*}
\end{proof}

\section{Other examples of plain exponential 
classes}\label{sec:examples}

In this section we provide two 
plain exponential classes of graphs. We start with subdivisions of 
cliques. 

\subsection{Subdivided Cliques } \label{sec:subdivided}

The \emph{subdivision} of an edge $uv$ by a graph $H$ is a 
graph with vertex set $V(H) \cup \{u, v\}$ and edge set 
$E(H) \cup \bigcup\limits_{t \in V(H)} \{ ut,vt\}$. 
The subdivision of a graph $G$ by a graph $H$ is the graph 
obtained by replacing every edge $uv$ of $G$ with its subdivision 
by a copy of $H$ (a disjoint copy for each edge).
Let $\cK(\cH)$ denote the class of subdivisions 
of cliques by graphs from a class $\cH$.

The following theorem is the main result of this section.

\begin{theorem}\label{main-theorem-subdivided}
Let $\cH$ be a plain exponential class of graphs. Then $\cK(\cH)$ 
is also plain exponential. 

More precisely, if {\sc \#GraphHom$(-,\cH)$} can be solved in time 
$O^*(c^{|V(G)|+|V(H)|})$, $c$ constant, for any given graphs 
$G$ and $H\in\cH$, then {\sc \#GraphHom$(-,\cK(\cH))$} can 
be solved in time $O^*(c_1^{2(|V(G)|+|V(H)|})$, where 
$c_1=\max(c,2)$. 
\end{theorem}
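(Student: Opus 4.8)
The strategy is to reduce counting homomorphisms into a subdivided clique $H^* \in \cK(\cH)$ to a bounded number of instances of {\sc \#GraphHom} into graphs from $\cH$ (together with a bookkeeping layer that handles the "clique skeleton"). Write $H^*$ as the graph on the clique vertices $\{v_1,\dots,v_n\}$ plus, for each pair $i\ne j$, a private copy $H_{ij}$ of some fixed $H\in\cH$, with $v_i,v_j$ joined to all of $V(H_{ij})$. A homomorphism $\vf\colon G\to H^*$ is determined by (i) the restriction $\vf$ sends each vertex of $G$ either to a clique vertex $v_i$ or into one of the blocks $H_{ij}$, and (ii) the actual images. The first move is to fix the "type" of each vertex of $G$: partition $V(G)$ into the set $S$ of vertices mapped to clique vertices and the set $T$ of vertices mapped into blocks. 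For a vertex $u\in T$ mapped into $H_{ij}$, every neighbour of $u$ in $G$ must land in $V(H_{ij})\cup\{v_i,v_j\}$; hence the block index $\{i,j\}$ is constant on each connected component of $G[T]$, and moreover for such a component the two "anchor" clique vertices $v_i,v_j$ are forced to be the images of all of $S$-neighbours of that component. So the combinatorial core is: enumerate the map $S\to\{v_1,\dots,v_n\}$ and, for each component $C$ of $G[T]$, enumerate the unordered anchor pair, then count homomorphisms of an auxiliary graph into $H$.

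Making this efficient requires care so the bound comes out as $O^*(c_1^{2(|V(G)|+|V(H)|)})$. I would proceed as follows. First, guess the set $S\subseteq V(G)$ of clique-mapped vertices: $2^{|V(G)|}$ choices. Given $S$, a homomorphism restricted to $S$ is a homomorphism from $G[S]$ into the clique $K_n$ — i.e.\ a proper $n$-colouring of $G[S]$ — and the number of these, together with the induced colour on each vertex, can be organized by the standard $O^*(3^{|V(G)|})$ (or $O^*(2^{|V(G)|})$) colouring dynamic program of Example~\ref{exa:plain-expo-cw}; but we actually need, for each way of colouring $S$, to know the pair of colours sitting on the anchor set of each $G[T]$-component, so instead I would iterate over the colouring more coarsely: for each component $C$ of $G[T]$, its set $\partial C$ of $S$-neighbours must be monochromatically bi-coloured using exactly the two anchor colours, and each anchor color class across all of $S$ must be an independent set. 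The clean way: enumerate, for each component $C$, an unordered pair $e_C=\{i_C,j_C\}$ of clique indices (the block hosting $C$); this forces $\partial C$ to be 2-coloured by $\{i_C,j_C\}$ consistently with the edges of $G$; then the number of completions on the rest of $S$ is again a clique-colouring count. Crucially, once $e_C$ is fixed the block $H_{i_Cj_C}$ is a copy of $H$ and the number of ways to map $C$ into $V(H_{i_Cj_C})\cup\{v_{i_C},v_{j_C}\}$ compatibly with the already-fixed colours on $\partial C$ is exactly a homomorphism count from a graph $G_C^+$ (namely $C$ together with two apex vertices representing $v_{i_C},v_{j_C}$, with the $\partial C$-adjacency encoded) into the graph $H^+$ obtained from $H$ by adding two universal vertices; and $H^+$ (or a small family of its vertex-deleted variants) still belongs to a plain-exponential class, so {\sc \#GraphHom} into it runs in time $O^*(c^{|V(C)|+2+|V(H)|+2})$.

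Assembling the pieces: for a fixed $S$ and a fixed assignment $C\mapsto e_C$ of host pairs to components, the total count factors as (number of valid clique-colourings of $G[S]$ extending the forced colours on the $\partial C$'s) $\times \prod_C (\text{hom-count of } G_C^+ \text{ into a variant of } H^+)$. The number of host-pair assignments is at most $n^{2}$ per component, hence $n^{2\cdot(\#\text{components})} = \poly(|V(H^*)|)$ overall since the number of components of $G[T]$ is at most $|V(G)|$ and $n\le|V(H^*)|$ — wait, this needs the components to be few or the pairs to be shared, so more carefully: anchor pairs with the same $\{i,j\}$ can host many components, and the count over colourings of $S$ ties them together; I would handle this by a single pass of the colouring DP over $G[S]$ in which, whenever the frontier meets $\partial C$, we branch on the two anchor colours — this multiplies the DP cost by at most $k^{|V(G)|}$-type factors that are absorbed into $O^*(2^{|V(G)|})$ since the total number of distinct colour-pattern constraints is bounded by $|V(G)|$. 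The homomorphism counts $\hom(G_C^+, H^+\text{-variant})$ are computed once per component and per host-pair in time $O^*(c^{|V(C)|+|V(H)|})$, and $\sum_C(|V(C)|) \le |V(G)|$, so $\prod_C$ of these is at most $O^*(c^{|V(G)|+|V(H^*)|})$ crudely (using $|V(H)|\le|V(H^*)|$). The extra squaring in the exponent and the $c_1=\max(c,2)$ come precisely from multiplying the $2^{|V(G)|}$ guess of $S$, the colouring DP at base $2$ or $3$, and the per-block homomorphism-count base $c$, and from the fact that $|V(H^*)| = \binom n2|V(H)| + n$ so $|V(H)|+\text{(clique part)} \le |V(H^*)|$ but $n$ itself can be as large as $\sqrt{|V(H^*)|}$, giving an honest $O^*(c_1^{2(|V(G)|+|V(H)|)})$ after collecting the $2^{|V(G)|}\cdot (\text{base-}2 \text{ or }3)^{|V(G)|}$ colouring factor and the $c^{|V(G)|+|V(H)|}$ block factor and squaring to cover the product over components plus the $n$-vs-$|V(H^*)|$ slack.

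\textbf{Main obstacle.} The delicate point is the interaction between the global clique-colouring of $S$ and the local per-component block counts: one must verify that a homomorphism $G\to H^*$ is recovered \emph{exactly once} from the data $(S,\text{colouring of }S,\{e_C\},\{\text{hom of }G_C^+\})$, that the forced two-colouring of $\partial C$ is consistent (no component can straddle two different host pairs, which follows from $G[T]$-connectedness, and the anchor vertices really must be the images of \emph{all} $S$-neighbours, not just some, because in $H^*$ the only clique vertices adjacent to $V(H_{ij})$ are $v_i,v_j$), and that degenerate cases — components with empty $\partial C$, isolated vertices of $G[T]$, vertices of $G$ mapped to a $v_i$ with an edge forcing a neighbour into some $H_{ij}$ — are all accounted for. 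Getting this bookkeeping airtight, and confirming the exponent really collapses to $2(|V(G)|+|V(H)|)$ rather than something larger once the $\prod_C$ and the $n^2$-per-component factors are summed, is where the real work lies; the homomorphism-counting into $H^+$ itself is a black-box application of the hypothesis on $\cH$ (after observing adding $O(1)$ universal vertices keeps a plain-exponential class plain exponential, with the constant unchanged up to the $\max(\cdot,2)$).
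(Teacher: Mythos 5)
Your overall decomposition is the same as the paper's: guess the set $S$ (the paper's $A$) of vertices of $G$ that map to clique vertices, observe that each connected component $\bC$ of $G[V(G)\setminus S]$ must map into a single block $U_{ij}$ and that the images of its $S$-neighbours are confined to the two anchors $v_i,v_j$, and then combine a clique-colouring count on $S$ with per-component factors $\hom(\bC,U)$. The genuine gap is in the step you yourself stumble over mid-argument: how to evaluate, in plain exponential time, the \emph{coupled} sum over proper $n$-colourings of $G[S]$ and over the anchor-pair choices for the components, when $n$ is not bounded by a constant (it can be of order $\sqrt{|V(H^*)|}$). Your first estimate, that enumerating a host pair per component costs $n^{2\cdot\#\mathrm{components}}=\poly(|V(H^*)|)$, is false: the number of components can be $\Theta(|V(G)|)$, so this is $n^{\Theta(|V(G)|)}$, i.e.\ exactly the $|V(H)|^{|V(G)|}$-type blow-up that plain exponentiality forbids. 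Your proposed repair --- run the standard colouring DP of Example~\ref{exa:plain-expo-cw} and ``branch on the two anchor colours when the frontier meets $\partial C$,'' with the extra factors ``absorbed into $O^*(2^{|V(G)|})$'' --- is asserted, not designed: the standard DP works only because colour classes are anonymous and interchangeable, whereas your branching must remember, per component, which (or at least how many) classes have already touched its boundary, and you give no state space, no transition, and no accounting showing the base stays constant. This coupled evaluation is precisely the heart of the theorem. The paper resolves it with a specific device you do not have: it doubles each component into two tokens $\bC^0,\bC^1$, forms the ground set $S\cup\bigcup_{\bC}\{\bC^0,\bC^1\}$ of size at most $2|V(G)|$, encodes both the colouring of $S$ and the anchor choices (including the $(n-1)$ multiplicity when the boundary is monochromatic) into a weight function $f$ on subsets, proves the product formula of Lemma~\ref{homo-partition-relation}, and then evaluates the resulting weighted sum over ordered $n$-partitions in time $O^*(2^{2|V(G)|})$ via Koivisto's partition-sum theorem (Theorem~\ref{sum-over-partitions}). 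Some rigorous substitute for that step (Koivisto-style inclusion--exclusion, or a fully specified DP whose state tracks per-component boundary information with constant base) is indispensable and is missing from your write-up.

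A secondary problem is your per-component ``black box.'' You reduce the component count to $\hom(G_C^+,H^+)$ where $H^+$ is $H$ with two universal apex vertices, claiming that adding universal vertices keeps the class plain exponential with the same constant. This is both unjustified by the hypothesis (which only covers graphs in $\cH$, not $H^+$) and not the right quantity: an unrestricted homomorphism $G_C^+\to H^+$ may send component vertices to the apexes or the apexes of $G_C^+$ into $V(H)$, violating the fixed partition $(S,T)$ and breaking your exactly-once recovery. The detour is also unnecessary: once $S$ is fixed and $\partial C$ is mapped to the anchors, the anchors are adjacent to the entire block, so the edges from $\partial C$ into $C$ impose no further constraint and the needed factor is simply $\hom(\bC,U)$, which the hypothesis on $\cH$ computes directly; this is what the paper uses.
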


The following theorem  from \cite{Koivisto06:algorithm} is 
our main technical tool.
 This theorem gives an upper bound on time complexity of the 
generic problem of summing over the partitions of $n$ elements 
into a given number of weighted subsets. We translate our counting 
problem as summation of this kind and use 
Theorem~\ref{sum-over-partitions} to prove an upper bound for 
time complexity of our counting problem. 

\begin{theorem}[\cite{Koivisto06:algorithm}]\label{sum-over-partitions}
Let $M$ be a $n$-element set and $f$ a function from the power 
set of $M$ to the set of real numbers. Also, suppose that for any 
subset $X\sse M$, the value of $f(X)$ can be obtained in time $O(1)$. 
Let 
\begin{align}
\parr(f) = \sum\limits_{P} \prod\limits_{i = 1}^{n}f(P_i)
\end{align}
where, $P = (P_1, ..., P_n)$ runs through all ordered partitions of 
$M$ into $k$ disjoint subsets of $M$ ($k$ fixed). There is an 
algorithm that calculate $\parr(f)$ in time $O^*(2^{|M|})$.  
\end{theorem}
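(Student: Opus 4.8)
The plan is to evaluate $\parr(f)$ by a fast zeta--M\"obius transform after grading subsets by cardinality. An ordered partition of $M$ into $k$ parts is precisely a tuple $(X_1,\dots,X_k)$ of pairwise disjoint subsets with $\bigcup_i X_i=M$ (empty parts allowed). The starting point is the identity
\[
\parr(f)=\sum_{T\sse M}(-1)^{|M|-|T|}\sum_{\substack{(X_1,\dots,X_k)\\ X_i\sse T,\ \sum_i|X_i|=|M|}}\ \prod_{i=1}^{k}f(X_i).
\]
To see it, expand the right-hand side, swap the order of summation, and use the elementary fact that $\sum_{T\,:\,U\sse T\sse M}(-1)^{|M|-|T|}$ equals $1$ if $U=M$ and $0$ otherwise, taken with $U=\bigcup_i X_i$. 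This leaves $\prod_i f(X_i)$ summed over the tuples with $\bigcup_i X_i=M$ and $\sum_i|X_i|=|M|$; and since $\sum_i|X_i|=|M|=|\bigcup_i X_i|$ forces the $X_i$ to be pairwise disjoint, the right-hand side collapses to $\parr(f)$.

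Next I would rewrite the inner sum through a ranked zeta transform. For $T\sse M$ and $0\le j\le|M|$ set $\hat f(T,j)=\sum_{X\sse T,\ |X|=j}f(X)$, the zeta transform of $f$ restricted to subsets of size $j$. Grouping the inner tuple-sum according to the sizes $j_i=|X_i|$ turns it into $\sum_{j_1+\dots+j_k=|M|}\prod_i\hat f(T,j_i)$, which is exactly the coefficient of $x^{|M|}$ in $\bigl(\sum_{j\ge0}\hat f(T,j)\,x^{j}\bigr)^{k}$. Thus $\parr(f)$ equals $\sum_{T\sse M}(-1)^{|M|-|T|}$ times that coefficient.

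The algorithm then runs as follows. For each $j=0,\dots,|M|$ compute the whole table $\{\hat f(T,j):T\sse M\}$ by Yates's subset-sum procedure applied to the function that equals $f(X)$ on sets $X$ of size $j$ and $0$ elsewhere; since each value $f(X)$ is available in $O(1)$ time this is $O^*(2^{|M|})$ per $j$, hence $O^*(2^{|M|})$ altogether. Then, for each of the $2^{|M|}$ sets $T$, form the polynomial $\sum_{j}\hat f(T,j)\,x^{j}$ of degree at most $|M|$ and extract the $x^{|M|}$-coefficient of its $k$-th power truncated at degree $|M|$, by repeated squaring; this costs time polynomial in $|M|$ and $\log k$. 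Summing these $2^{|M|}$ numbers with the signs $(-1)^{|M|-|T|}$ yields $\parr(f)$ in $O^*(2^{|M|})$ total.

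The only step needing genuine care is the cardinality grading. The naive recipe --- form the ordinary zeta transform $\hat f$, raise it to the $k$-th power pointwise, apply M\"obius inversion --- would instead compute the covering variant, where the $X_i$ are merely required to union to $M$ and may overlap; this is a different, generally larger, quantity. Restricting $\hat f(T,\cdot)$ to each fixed cardinality and keeping only the total-degree-$|M|$ term of the $k$-fold product is exactly what kills the overlapping contributions, and this is the crux of the argument (it is the ``ranked zeta transform'' behind fast subset convolution). With that identity in hand, verifying the M\"obius step and accounting for the $O^*(2^{|M|})$ cost of the transforms is routine.
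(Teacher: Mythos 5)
Your proof is correct: the inclusion--exclusion identity over $T\sse M$ combined with the cardinality-graded (ranked) zeta transform and extraction of the degree-$|M|$ coefficient is exactly the mechanism behind the cited result, and your complexity accounting ($O^*(2^{|M|})$ for the transforms plus polynomial work per $T$) is sound. Note that the paper does not prove this theorem at all---it imports it from \cite{Koivisto06:algorithm}---so your argument simply reconstructs, correctly, the standard inclusion--exclusion-with-cardinality-grading proof from that reference, including the key observation that the degree restriction is what eliminates overlapping (covering) tuples.
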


With specific choices of function $f$ this class of partition problems 
expresses various counting problems. For instance, to compute the 
number of $n$-colourings of a given graph $G$ we let $f(S) = 1$ if the 
vertex subset $S$ is an independent set and $0$ otherwise. As is 
easily seen, for any $n$-partition $P$ of $V(G)$ such that for each 
$i \in [n]$, $P_i$ is an independent set, $P$ can be associated with 
a proper $n$-colouring of $G$ and vice versa. Also $\parr(f)$ is equal to 
the number of $n$-partitions of $V(G)$ that  $P_i$ is an independent 
set for each $i \in [n]$. Thus,  the number of $n$-colouring of $G$  
equals $\parr(f)$. 

Here, we count the number of $H$-colorings of a graph $G$ by 
expressing this number as $\parr(f)$ for some function $f$.
Without loss of generality, we can assume that $G$ is a connected 
graph as otherwise, the number of homomorphisms from $G$ to 
$H$ is the product of the numbers of homomorphisms from the 
connected components of $G$ to $H$. 

Let $H$ be the subdivision of $K_n$ by a graph $U\in\cH$. We 
denote the set of all vertices in $H$ that are originally from 
$V(K_n)$ by $H_A = \{ v_1, ..., v_n\}$ and the set of the remaining  
vertices of $H$ by $H_B = \bigcup_{i,j\in [n], i \neq j } 
\{V(U_{ij})\}$, 
where $U_{ij}$ is the copy of $U$ subdividing the edge $v_iv_j$. 
Let $(A, B)$ be a $2$-partition of $V(G)$. A homomorphism $\vf$ 
from $G$ to $H$ is said to be \emph{consistent} with $(A, B)$ if the 
image of $A$ under $\vf$ is a subset of $H_A$ and the image of 
$B$ under $\vf$ is a subset of $H_B$. 

Let $C_B$ be the set of all connected components of $G[B]$.  
Also let $\bC\in C_B$ and let $N(\bC)$ be the set of all elements 
of $A$ that are adjacent to at least one vertex from $\bC$.  
Unless $A=\emptyset$, the set $N(\bC)$ is not empty as otherwise 
it contradicts the 
assumption that $G$ is connected. Then we fix an arbitrary 
vertex of $N(\bC)$ and refer to it as $s_\bC$. 

Let $\vf$ be an $H$-colouring of $G$. Then it is easy to see 
that vertices from $G$ that are mapped by $\vf$ to vertices of 
$H_A$ should be an independent set. Thus we use brute 
force to consider all 2-partitions $(A, B)$ of $V(G)$, where 
$A$ is an independent set and we  count the number of 
homomorphisms from $G$ to $H$ that are consistent with 
each $(A, B)$ and then sum those numbers up to get the 
total number of homomorphisms from $G$ to $H$. 

Let $\bC \in C_B$. Since any homomorphism $\vf$ consistent 
with $(A, B)$ should map  $\bC$ to $H_B$ and the image of 
$\bC$ under $\vf$ should be connected, the image of $\bC$ under 
$\vf$ is a subset of one of the copies of $U$. 
Also, the image under $\vf$ of all vertices in $A$ adjacent to $\bC$ is 
a set of at most two elements as otherwise it is not possible 
for image of $\bC$ to be adjacent to the image of  
all elements of $N(\bC)$. Thus, one can associate $\vf$ with 
a relation denoted $\vf'\sse (A\cup C_B)\times V(H_A)$ that 
contains pairs $(a,\vf(a))$, $a\in A$, and $(\bC,b)$, where
$\bC \in C_{B}$ and $b\in\vf(N(\bC))$.
%
%
Relation $\vf'$ defined this way may not be a function 
because it may relate a component $\bC$ to more than one 
element of $V(H_A)$. To fix this, we introduce two 
copies of $\bC$  denoted $\bC^0$ and $\bC^1$. 
Since the image of $N(\bC)$ under $\vf$ contains at most 2 
elements, $\vf'$ relates each copy of $\bC$ to one of those 
elements of the image of $N(\bC)$ under $\vf$.  
Now, let $S =A\cup\bigcup_{\bC \in C_B}\{\bC^0, \bC^1\}$, 
 i.e.,  $S$ includes the vertices of $A$ along with two copies of 
 elements of $C_B$. Thus we can associate homomorphisms 
 from $G$ to $H$ that are consistent with $(A, B)$ with 
a relation from $S$ to $V(H_A)$.


In order to convert $\vf'$ into a mapping, for any $\vf$  we 
require the associated $\vf'$ to associate $\bC^1$ and $s_{\bC}$ 
with the same element of $V(H_A)$ and $\bC^0$ with the other 
element if the image of $N(\bC)$ contains two vertices. Restricted
this way $\vf'$ is a mapping $\vf'\colon S\to V(H_A)$.

As any mapping from $S$ to $V(H_A)$ naturally corresponds 
to a $n$-partition of $S$, we can associate  homomorphisms 
from $V(G)$ to $V(H)$ that are consistent with $(A, B)$ with 
$n$-partitions of $S$.
For an $n$-partition $P=(\vc Pn)$ of $S$, we define 
\textit{mappings compatible} with $P$ as follows. 
A mapping $\vf:V(G)\to V(H)$ is compatible with $P$ if it 
satisfies the following properties: 
\begin{itemize}\setlength{\itemindent}{.3cm}
\item[(A)]  
$\vf$ maps elements of $P_i \cap A$ to $v_i$ for each $i \in [n]$.
\item[(B)] 
For every $\bC \in C_B$, such that  $\bC^0 \in P_i$ and 
$\bC^1 \in P_j$, for $i \neq j$, the mapping $\vf\vert_\bC$ is a 
homomorphism that maps $\bC$ to $U_{ij}$.
\item[(C)] 
For every $\bC \in C_B$ such that $\bC^0, \bC^1 \in P_i$, 
$\vf\vert_\bC$ is a homomorphism that maps  $\bC$ to $U_{ij}$, 
where $j \in [n] - \{i\}$ is arbitrary.
\end{itemize}

We define function $f$ from the power set of $S$ to natural 
numbers as follows.
First, $f$ is set to $0$ for subsets $X\sse S$ 
such that if $X$ is a class of a partition $P$, say, $X=P_i$,
then there are no homomorphisms compatible with $P$.
So suppose  $P$  has at least one homomorphism compatible
with it. Then for any $\bC \in C_B$, the class of $P$ that 
contains $a \in N(\bC)$ also contains one of $\bC^0$ 
or $\bC^1$. So, for $X \subseteq S$,
\begin{quote}
$f(X) = 0$ if there is $a \in X$ and $ \bC \in C_B$ such that 
$a \in N(\bC)$  but neither of  $\bC^0$ or $\bC^1$ is in the 
set  $X$.
\end{quote}

Also, if $P$ contains at least one of $\bC^0$ 
or $\bC^1$ in one of its classes for some $\bC \in C_B$, 
then there should be at least one of vertex from 
$N(\bC)$ in the same subset as otherwise $P$ would 
have no compatible homomorphisms. Thus for  $X\subseteq S$, 
\begin{quote}
 $f(X) = 0$ if for some $\bC \in C_B$,  $\bC^1  \in X$ 
 and/or $\bC^0 \in X$ but none of the elements of $N(\bC)$ is in $X$. 
\end{quote}
Recall that the $n$-partition compatible with a homomorphism 
is supposed to have $\bC^1$ and $s_{\bC}$ in the same class 
of $P$. Therefore, for $X \subseteq S$,
\begin{quote}
 $f(X) = 0$ if for some $\bC \in C_B$, $\bC^1 \in X$ and 
 $s_{\bC} \not \in X$. 
\end{quote}
Now we need to set the value of $f$ for subsets that 
can be a class in a partition with compatible homomorphisms.
For $X\subseteq S$, 
\begin{itemize}
\item 
if none of the above is true for $X$, $f(X) = x\cdot y$, where 
\begin{align*}
x &= \prod\limits_{\bC: \bC^0, \bC^1 \in X} (n - 1) 
\hom(\bC, U) \textit{, and} \\
y &= \prod\limits_{\bC : \bC^1\in X, \bC^0 \not \in X }\hom(\bC, U).
\end{align*}
\end{itemize}
We defer an explanation of this equality to the proof of the following 

\begin{lemma}\label{homo-partition-relation}
The number of homomorphisms from $G$ to $H$ that 
are consistent with $(A, B)$ equals
\begin{align}
\sum_P \prod_{i = 1}^{n} f(P_i),
\end{align}
where $P = (P_1, ..., P_n)$ runs through all $n$-partitions of $S$. 
\end{lemma}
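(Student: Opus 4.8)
The plan is to exhibit a bijection between the set of homomorphisms from $G$ to $H$ consistent with $(A,B)$ and a set of pairs $(P,\vf)$, where $P=(P_1,\dots,P_n)$ is an $n$-partition of $S$ and $\vf$ is a homomorphism compatible with $P$ (as defined by Properties (A)--(C) above), then show that for a fixed $P$ the number of such $\vf$ is exactly $\prod_{i=1}^n f(P_i)$. Given these two facts the claimed identity follows immediately, since summing over all $P$ counts every consistent homomorphism exactly once. So the proof breaks into three parts: (i) the correspondence between consistent homomorphisms and partitions of $S$; (ii) the correctness of the vanishing conditions for $f$; and (iii) the product formula $f(X)=x\cdot y$ for the surviving subsets.

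First I would make part (i) precise. Given an $H$-colouring $\vf$ of $G$ consistent with $(A,B)$, I form the relation $\vf'\subseteq S\times V(H_A)$ exactly as in the discussion preceding the lemma: each $a\in A$ maps to $\vf(a)\in H_A$, and each component $\bC\in C_B$ contributes its two copies $\bC^0,\bC^1$, which are assigned to the (at most two) elements of $\vf(N(\bC))\subseteq V(H_A)$ according to the stated convention ($\bC^1$ and $s_\bC$ go to the same vertex $v_i$, and $\bC^0$ to the other vertex $v_j$, or to $v_i$ again if $\vf(N(\bC))$ is a single vertex). One checks $\vf'$ is a well-defined function $S\to V(H_A)$: the key point, which I would spell out, is that because $\vf$ maps $\bC$ into a single connected copy $U_{ij}$ of $U$ and every vertex of $U_{ij}$ is adjacent to exactly $v_i$ and $v_j$ in $H$, the image $\vf(N(\bC))$ has at most two elements and they are precisely the two ``endpoints'' $v_i,v_j$ of that copy — so the partition class of $a\in N(\bC)$ indeed contains $\bC^0$ or $\bC^1$. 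The partition $P$ is then the fibre partition of $\vf'$. Conversely, Properties (A)--(C) describe exactly which mappings $\vf$ arise from a given $P$, so the fibre $\tau^{-1}(P)$ of the map $\tau:\vf\mapsto P$ is the set of homomorphisms compatible with $P$; this shows the consistent homomorphisms are partitioned by the fibres of $\tau$.

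For parts (ii) and (iii) I would count $|\tau^{-1}(P)|$ directly and match it with $\prod_i f(P_i)$. Fix $P$. The action of a compatible $\vf$ on $A$ is forced by (A), so it contributes a factor $1$. For each $\bC\in C_B$ there are two cases. If $\bC^0\in P_i$ and $\bC^1\in P_j$ with $i\neq j$, then by (B) $\vf\vert_\bC$ must be a homomorphism $\bC\to U_{ij}$, of which there are $\hom(\bC,U)$ (the copy $U_{ij}$ is isomorphic to $U$); moreover the two endpoints $v_i,v_j$ of $U_{ij}$ are determined by the classes, so there is no extra choice — this is the ``$y$'' factor, one per $\bC$ with $\bC^1\in X$, $\bC^0\notin X$ when we localize to a class $X=P_j$ (here one must be a little careful about which class $\bC$ is ``charged'' to; the convention in the definition charges it to the class containing $\bC^1$, which by the vanishing condition also contains $s_\bC$, hence lies in $N(\bC)$, hence is nonempty and well-defined). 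If instead $\bC^0,\bC^1\in P_i$, then by (C) we must pick an arbitrary $j\in[n]\setminus\{i\}$ and a homomorphism $\bC\to U_{ij}$, giving $(n-1)\hom(\bC,U)$ choices — the ``$x$'' factor. Multiplying over all $\bC$ and all classes, and noting the vanishing conditions precisely rule out partitions admitting no compatible homomorphism, yields $|\tau^{-1}(P)|=\prod_{i=1}^n f(P_i)$.

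The main obstacle is part (i), specifically verifying that the convention tying $\bC^1$ to $s_\bC$ makes $\tau$ genuinely well-defined and that the reconstruction from $P$ via (A)--(C) is consistent — i.e.\ that no two components ``disagree'' about which copy $U_{ij}$ a shared endpoint sits in, and that every $P$ surviving the vanishing tests really does admit the claimed number of homomorphisms with none of the choices conflicting. Once the geometry of the subdivided clique (each $U_{ij}$ attached only to $v_i,v_j$; components of $G[B]$ mapped into single copies; $A$ independent in any consistent $\vf$) is nailed down, the bookkeeping for (ii) and (iii) is routine multiplication.
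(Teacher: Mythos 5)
Your proposal is correct and follows essentially the same route as the paper: it groups the homomorphisms consistent with $(A,B)$ by their unique compatible $n$-partition of $S$ (using the $s_\bC$/$\bC^1$ convention for well-definedness), and then evaluates each group as a product of per-component factors $\hom(\bC,U)$ or $(n-1)\hom(\bC,U)$, with the vanishing conditions on $f$ killing exactly the partitions with no compatible homomorphism. The paper additionally spells out the splicing argument showing restrictions to different components can be combined independently, which your ``routine multiplication'' step implicitly uses, but this is a minor expository difference rather than a different method.
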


\begin{proof}
To prove this lemma,  we prove that the $n$-partitions of $S$ 
induce a partition on the set of homomorphisms from 
$G$ to $H$ that are consistent with $(A, B)$, by grouping all 
homomorphism according to their compatible $n$-partitions. 
Then, we show that for each $n$-partition 
$P$ of $S$, the number of homomorphisms compatible with 
$P$ equals $\prod_{i = 1}^{n} f(P_i)$. These two claims prove 
the lemma.

Let  $P$ be a fixed $n$-partition of $S$ such 
that $\prod_{i = 1}^{n} f(P_i) \neq 0$, and let $\vf$ 
be a mapping compatible with $P$. 
As is easily seen, $\vf$ is a mapping from $G$ to $H$ that is 
consistent with $(A, B)$. Since $A$ is an independent set, 
$\vf\vert_{A}$ is a homomorphism for any mapping $\vf$. By 
Properties~(B) and~(C), for any connected component 
$\bC \in C_B$, $\vf\vert_{V(\bC)}$ is a homomorphism 
hence $\vf\vert_{B}$ is a homomorphism as well. Now 
let $a \in A$ and $b\in V(\bC)$, where $\bC\in C_B$. Since 
$\prod_{ i = 1}^{n} f(P_i) \neq 0$
the possible cases are: (1) $a, \bC^0, \bC^1 \in P_i$ for 
some $i \in [n]$; (2) $a, \bC^1 \in P_i$ and  $\bC^0 \in P_j$ 
such that $i \neq j$, if $a=s_\bC$, and $\bC^1 \in P_i$ and  
$a,\bC^0 \in P_j$ otherwise. In case (1) 
by Property 3 $\vf(\bC)$ is 
in graph $U_{ij}$ for some $0 \leq j \leq n$, and by 
Property~(A),  $\vf(a)$ is $v_i$, and since there is an edge 
between any vertex of $U_{ij}$ and $v_i$, $\vf(a)\vf(b)$ 
is an edge of $H$. The argument in case (2) goes in 
the similar way. So, any mapping compatible with partition 
$P$ is a homomorphism that is consistent with $(A, B)$. 

Now, we prove that for any homomorphism $\vf$ from $G$ to $H$ 
that is consistent with $(A, B)$, there is exactly one 
$n$-partition $P$ of $S$ such that $\vf$ is compatible with $P$. 
Let us fix a homomorphism $\vf$ from $G$ to $H$ that is 
consistent with $(A, B)$. By Properties(A)--(C) of the 
compatible homomorphism, an $n$-partition $P$ with 
which $\vf$ is compatible should satisfy the following properties 
\begin{itemize}
\item[(a)] 
For any $a \in A$, $a \in P_i$ if and only if $\vf(a) = v_i$.
\item[(b)] 
For any $\bC \in C_B$ such that the image of $N(\bC)$ 
under $\vf$ is $\{ v_i, v_j\}$ for $1 \leq i \neq j \leq n$, and 
$\vf(s_{\bC}) = v_i$ it holds $\bC^0 \in P_j$ and 
$\bC^1 \in P_i$. Note that $\bC^0$ and $\bC^1$ are 
not interchangeable because $s_{\bC}$ and $\bC^1$ 
should be included in the same class of $P$.   
\item[(c)] 
For any $\bC \in C_B$ such that the image of 
$N(\bC)$ under $\vf$ is $\{v_i\}$, it holds 
that $\bC^0,\bC^1\in P_i$.  
\end{itemize}

As is easily seen, there is at least one $n$-partition 
that satisfies all of the above properties. On the other hand, 
no two $n$-partitions of $S$ can satisfy all the above properties 
because the properties specify for each of elements of 
$S$  which subset of the compatible partition to go, 
uniquely. Thus, there is exactly one $n$-partition of $S$ 
that is compatible with $\vf$.

Let $P$ be an $n$-partition of $S$ and $\hat{P}$ be the 
set of all homomorphisms compatible with $P$. 
We show that $|\hat{P}| = \prod\limits_{i = 1}^{n} f(P_i)$. 
For any connected component $\bC\in C_B$, such that 
$\bC^0 \in P_i$ and $\bC^1 , s_{\bC} \in P_j$ for 
$i \neq j$, $\hom(\bC, U)$ is a contributing factor to $f(P_j)$. 
Note that $\hom(\bC, U)$ does not contribute anything 
to $f(P_i)$. Also, for any connected component $\bC\in C_B$, 
such that $\bC^0,\bC^1 \in P_i$ for some 
$i \in [n]$, the value $f(P_i)$ contains a factor 
$(n - 1)\hom(\bC, U)$.  Thus, 
\begin{align}
\prod\limits_{i = 1}^{n} f(P_i) = 
\prod\limits_{\substack{\bC: 
\bC^0, \bC^1 \in P_i \\ \text{for some $i \in [n]$} }} 
(n - 1)\hom(\bC, U)
\times \prod\limits_{\substack{\bC : 
\bC^1\in P_i, \bC^0 \not \in P_i \\ 
\text{for some $i \in [n]$}} }\hom(\bC, U).
\label{multiplication}
\end{align}

On the other hand by Property~(a), $\vf\vert_{A}$ is the same 
for every $\vf\in \hat{P}$.  Let $\bC \in C_B$.  Let 
$\vf_1, \vf_2 \in \hat{P}$ be two  homomorphisms  such that 
$\vf_1 \vert_{V(\bC)}$ and $\vf_2 \vert_{V(\bC)}$ are not 
identical. Then, let  $\vf: V(G) \to V(H)$ be defined as follows:
\begin{align*}
\vf(v) = 
\begin{cases}
\vf_1(v), & \textit{if $v \in V(G) \setminus V(\bC)$,}\\
\vf_2(v), & \textit{if $v \in V(\bC)$.}
\end{cases}
\end{align*}
Mapping $\vf$ is a homomorphism as $\vf_1 \vert_A$ and 
$\vf_2 \vert_A$ are the same and also $\bC$ is disconnected from the 
$V(G) \setminus (V(\bC) \cup A)$.  Thus, the number of compatible 
homomorphisms of partition $P$ can be obtained by multiplying 
the number of all different mappings we can get by restricting 
those homomorphisms to a component $\bC \in C_B$, over all  
$\bC \in C_B$. Therefore, 
\begin{align*}
|\hat{P}| &= \prod_{\bC \in C_B } \quad\Bigm\lvert 
\{\vf \vert_{V(\bC)}:\ \ \vf \in \hat{P}\}\Bigm\lvert \\
&= \prod_{\substack{\bC: \bC^0, \bC^1 \in P_i \\ 
\text{for some $i \in [n]$} }} \quad\Bigm\lvert 
\{\vf \vert_{V(\bC)}:\ \ \vf \in \hat{P}\}\Bigm\lvert
\times \prod_{\substack{\bC : \bC^1\in P_i, \bC^0\not\in P_i \\ 
\text{for some $i \in [n]$}}} \quad\Bigm\lvert 
\{\vf \vert_{V(\bC)}:\ \ \vf \in \hat{P}\}\Bigm\lvert.
\end{align*}

Let again $\bC \in C_B$, such that $\bC^0 \in P_i$ and 
$\bC^1 \in P_j$ 
for $i \neq j$. By Property~(a) every $\vf\in \hat{P}$ maps 
$P_i \cap A$ to $v_i$ and since $\vf$ is homomorphism, it maps 
$C$ to $U_{ij}$. Therefore, 
\[
\Bigm\lvert\{\vf \vert_{V(\bC)}:\ \ \vf\in \hat{P}\}\Bigm\lvert=
\hom(\bC, U).
\] 
Now let $\bC \in C_B$ such that $\bC^0 , \bC^1 \in P_i$. 
By Property~(C), the image of $\bC$ under $\vf$ can be any of 
$U_{ij}$ where $j \in [n] - \{i\}$. Thus, 
\[
\Bigm\lvert \{\vf\vert_{V(\bC)}:\ \ \vf\in \hat{P}\} 
\Bigm\lvert = (n - 1)\hom(\bC, U).
\] 
Therefore,
\begin{align}
|\hat{P}| = \prod_{\substack{\bC: \bC^0, \bC^1 \in P_i \\ 
\text{for some $i \in [n]$} }} (n - 1)\hom(\bC, U)
\times \prod_{\substack{\bC : \bC^1\in p, \bC^0 \not \in P_i \\ 
\text{for some $i \in [n]$}}}\hom(\bC, U)
\label{sizee-of-B}
\end{align}

By (\ref{sizee-of-B}) and~(\ref{multiplication}), 
$|\hat{P}| = \prod\limits_{i= 1}^{n} f(P_i) $.

Hence, the number of homomorphisms from $G$ to $H$ 
that are consistent with $(A, B)$ is equal to 
$\sum\limits_{P}\prod\limits_{i = 1}^{n}f(P_i)$, where 
$P = (P_1, ..., P_n)$ runs through all $n$-partitions of $S$.
\end{proof}

Now, we are in a position to prove 
Theorem~\ref{main-theorem-subdivided}

\begin{proof}[ Proof of Theorem \ref{main-theorem-subdivided}]
We brute force over all partitions of set $V(G)$ into two sets $A$ 
and $B$, and for each partition $(A, B)$ of $V(G)$, we calculate 
function $f$ with respect to $(A,B)$ and then we find the $\parr(f)$. 
Let us fix a partition $(A, B)$ of $V(G)$.  The time needed to 
calculate function $f$ with respect to $(A, B)$, over all subsets 
of $S$ is bounded by $O^*(c^{|V(G)|})$ as the time to calculate 
$\hom(\bC, U)$ is bounded by $O(c^{|V(C)|})$ and there are at 
most $|V(G)|$ of them. By Lemma~\ref{homo-partition-relation} 
the number of homomorphisms from $G$ to $H$ that are 
consistent with $(A, B)$ can be represented as $\parr(f)$ for 
the corresponding $f$.  
By Theorem~\ref{sum-over-partitions}, the total time to calculate 
$\parr(f)$ is bounded by  $O^*( 2^{|V(G)|})$. Thus, the time 
complexity to calculate the number of homomorphisms from 
$G$ to $H$ that are consistent with $(A, B)$ is bounded by 
$O^*(c_1^{|V(G)|})$, $c_1=\max(c,2)$. 

The value $\hom(G, H)$ equals the sum of numbers of 
homomorphisms from $G$ to $H$ that are consistent with $(A, B)$ 
over all $2$-partitions $(A, B)$ of $V(G)$. Since there are $2^{V(G)}$ 
different partitions $(A, B)$ and for any such partition the time 
complexity of counting the number of homomorphisms from 
$G$ to $H$ is bounded by $O^*(c_1^{|V(G)|})$, there is a algorithm 
that computes $\hom(G, H)$ in time $O^*(c_1^{2|V(G)|})$. 
\end{proof}

By Theorem 5.4 from \cite{Makowsky99:clique} it follows that 
$\cK(\cH)$ does not always have bounded clique width;
for instance when $\cH$ contains only one graph that is a single vertex.

\subsection{Kneser Graphs}\label{sec:kneser}

Kneser graphs give another example of a plain exponential class 
of graphs.

The \textit{Kneser graph} $\KG_{n, k}$ is the graph whose vertex 
set is the set of $k$-element subsets of a set of $n$ elements, and 
two vertices are adjacent if and only if the two corresponding sets 
are disjoint. By $\KGG_k$ we denote the class of all Kneser 
graphs for a fixed $k$. 
The main result of this section is the following 

\begin{theorem}\label{the:kneser}
For any $k$, $\KGG_k$ is plain exponential.

More precisely, there is an algorithm that given a graph $G$ and 
$n\in\mathbb N$ finds $\hom(G,\KG_{n, k})$ in time 
$O^*(2^{k|V(G)|})$. 
\end{theorem}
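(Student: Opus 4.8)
The plan is to reduce counting homomorphisms into $\KG_{n,k}$ to a manageable combinatorial sum over colourings by exploiting the very rigid structure of Kneser graphs. The key observation is that a vertex of $\KG_{n,k}$ is a $k$-subset of $[n]$, and two vertices are adjacent exactly when the subsets are disjoint. So a homomorphism $\vf\colon G\to\KG_{n,k}$ assigns to each vertex $v\in V(G)$ a $k$-set $\vf(v)\sse[n]$ so that adjacent vertices receive disjoint sets. Equivalently, think of each element $i\in[n]$ as a ``colour'', and each vertex of $G$ receives a $k$-element set of colours; the constraint is that the colour-sets of adjacent vertices are disjoint. I would first fix a vertex $v_0\in V(G)$ (assuming $G$ connected, as the paper does) and observe that once we fix $\vf(v_0)$ — there are $\binom nk$ choices, but these are all symmetric under $S_n$, so only the \emph{sizes} of overlaps with later sets matter — we can track, for each $v$, how its $k$-set meets the ``used'' colours.

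The cleaner route, and the one I expect the authors take given the target bound $O^*(2^{k|V(G)|})$, is via inclusion–exclusion on colours combined with a per-vertex encoding. For a fixed assignment of a $k$-subset $T_v\sse[k|V(G)|]$-worth of ``slots'' — more precisely: assign to each vertex $v$ an ordered $k$-tuple, or better a function $c_v\colon[k]\to(\text{colour classes})$ — one reduces the disjointness constraint to a proper-colouring-type constraint on an auxiliary structure. Concretely, I would build an auxiliary graph $G^\times$ whose vertex set is $V(G)\times[k]$ and count colourings of $G^\times$ with $n$ colours subject to: (i) the $k$ copies $(v,1),\dots,(v,k)$ of each vertex get \emph{distinct} colours (so $\{c_v(1),\dots,c_v(k)\}$ is a genuine $k$-set), and (ii) for each edge $uv\in E(G)$ and all $s,t\in[k]$, $(u,s)$ and $(v,t)$ get distinct colours (encoding disjointness). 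Then $\hom(G,\KG_{n,k})$ equals $\frac{1}{(k!)^{|V(G)|}}$ times the number of such constrained $n$-colourings of $G^\times$, since each unordered $k$-set corresponds to $k!$ ordered tuples. Counting $n$-colourings of an $N$-vertex graph subject to a fixed collection of ``distinctness'' constraints (i.e.\ counting proper colourings of a graph on $N=k|V(G)|$ vertices) can be done in time $O^*(2^N)=O^*(2^{k|V(G)|})$ by the dynamic-programming / subset-convolution method referenced in the paper (Example~\ref{exa:plain-expo-cw} and \cite{Koivisto06:algorithm}), since (i) and (ii) together just say we want proper colourings of a specific graph on $V(G)\times[k]$.

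The main steps in order: (1) define the auxiliary graph $G^\times$ on $V(G)\times[k]$ with edges from (i) cliques on each fibre $\{v\}\times[k]$ and (ii) complete bipartite graphs between fibres of adjacent vertices of $G$; (2) establish the bijection between $\HOM(G,\KG_{n,k})$ with ordered representatives and proper $n$-colourings of $G^\times$, and hence the identity $\hom(G,\KG_{n,k})=\frac{1}{(k!)^{|V(G)|}}\cdot(\text{number of proper }n\text{-colourings of }G^\times)$; (3) invoke the $O^*(2^{|V(G^\times)|})$ algorithm for counting proper colourings, noting $|V(G^\times)|=k|V(G)|$, to obtain the claimed running time $O^*(2^{k|V(G)|})$; (4) handle the disconnected case by multiplying over components, and note $\binom nk$ and $n$ fit in the polynomial $O^*$ factor.

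The step I expect to be the main obstacle is step (2): verifying that the correspondence is exactly $k!$-to-one per vertex and that no colourings are lost or double-counted, including the edge cases where $G$ has loops or where $n<k$ (in which case $\KG_{n,k}$ is edgeless, or even empty, and the count degenerates). One subtlety is that distinct vertices $u,v$ of $G$ may receive the \emph{same} $k$-set when $uv\notin E(G)$ — this is allowed and is correctly counted by the colouring model, but one must be careful that the division by $(k!)^{|V(G)|}$ is applied per vertex and not globally. A secondary concern is ensuring the colouring-counting algorithm genuinely runs in $O^*(2^{k|V(G)|})$ on $G^\times$ rather than incurring an extra factor polynomial in $n$; but since the number of colours enters such algorithms only polynomially, this is fine. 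Everything else — the graph construction and the final summation — is routine.
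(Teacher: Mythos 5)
Your proposal is correct and follows essentially the same route as the paper: your auxiliary graph $G^\times$ on $V(G)\times[k]$ (cliques on fibres, complete bipartite graphs between fibres of adjacent vertices) is exactly the paper's blown-up graph $G^{(k)}$, the $(k!)^{|V(G)|}$-to-one correspondence between proper $n$-colourings of $G^\times$ (i.e.\ homomorphisms to $K_n$) and homomorphisms $G\to\KG_{n,k}$ is the paper's key counting identity, and both arguments conclude by invoking the $O^*(2^{k|V(G)|})$ colouring-counting algorithm of Koivisto.
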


\begin{remark}
The running time of our algorithm is not plain exponential if 
$k$ is not a constant. Thus, Theorem~\ref{the:kneser} does not 
guarantee that the class of all Kneser graphs irrespective of the 
parameter $k$ is plain exponential. Proving or disproving it remains
an open problem. 
\end{remark}

We use the following fact.

\begin{theorem}[\cite{Koivisto06:algorithm}]\label{lem:k-colouring}
There is an algorithm that given $k$ and a graph $G$ counts the 
number of $k$-colouring of graph $G$ in time $O^*(2^{|V(G)|})$.
\end{theorem}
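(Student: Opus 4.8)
The plan is to obtain Theorem~\ref{lem:k-colouring} as an immediate specialization of Theorem~\ref{sum-over-partitions}, which is available to us. I would set $M=V(G)$ and define $f\colon 2^{V(G)}\to\{0,1\}$ by $f(X)=1$ if $X$ is an independent set of $G$ and $f(X)=0$ otherwise, taking $f(\emptyset)=1$. The first point to dispatch is that $f$ is computable in $O(1)$ time per query in the sense required by Theorem~\ref{sum-over-partitions}: after a polynomial preprocessing step that records the adjacency relation of $G$, testing whether a given $X$ is independent takes polynomial time, and this polynomial overhead is absorbed by the $O^*$ notation, so it does not affect the claimed bound.

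The core step is to identify the proper $k$-colourings of $G$ with the ordered partitions over which $\parr(f)$ ranges. A proper $k$-colouring is a map $c\colon V(G)\to[k]$ whose colour classes $c^{-1}(1),\dots,c^{-1}(k)$ are all independent. I would observe that $c\mapsto(c^{-1}(1),\dots,c^{-1}(k))$ is a bijection between all functions $V(G)\to[k]$ and all ordered tuples $(P_1,\dots,P_k)$ of pairwise disjoint subsets with $\bigcup_i P_i=V(G)$, where empty parts are permitted; this is exactly the convention under which the discussion following Theorem~\ref{sum-over-partitions} computes the number of $n$-colourings. Under this bijection $\prod_i f(P_i)=1$ precisely when every class $P_i$ is independent, that is, precisely when $c$ is a proper colouring, and $\prod_i f(P_i)=0$ otherwise. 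Hence each proper $k$-colouring contributes $1$ to $\parr(f)$ and every other colour function contributes $0$, so $\parr(f)$ equals the number of proper $k$-colourings of $G$.

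It then remains only to apply Theorem~\ref{sum-over-partitions} with this $f$ and with the number of parts equal to $k$. It yields an algorithm that computes $\parr(f)$ in time $O^*(2^{|M|})=O^*(2^{|V(G)|})$, which is exactly the asserted running time. This completes the reduction, and I would present it as the full proof.

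The genuine algorithmic content sits inside Theorem~\ref{sum-over-partitions}, so there is no real obstacle at this level; the only points requiring care are the empty-parts bijection above and the constant-time evaluability of $f$, both of which are routine. Were Theorem~\ref{sum-over-partitions} unavailable, the same count would be produced directly by noting that the number of proper $k$-colourings is the $k$-fold disjoint (subset) convolution of the independent-set indicator $f$ evaluated at $V(G)$, computable in $O^*(2^{|V(G)|})$ via the rank-indexed zeta and M\"obius transforms over the subset lattice; the delicate step there, and the place where the $O^*(2^{n})$ bound rather than a naive $O^*(3^{n})$ is earned, is the rank-indexed transform that enforces disjointness of the colour classes. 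Since we may rely on Theorem~\ref{sum-over-partitions}, that machinery need not be reproduced.
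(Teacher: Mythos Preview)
Your proposal is correct and matches the paper's treatment: the paper does not give a formal proof of Theorem~\ref{lem:k-colouring} (it is simply cited from \cite{Koivisto06:algorithm}), but the very reduction you describe---taking $f(X)$ to be the indicator of $X$ being independent and identifying proper colourings with ordered partitions into independent sets---is exactly the example the paper spells out in the paragraph following Theorem~\ref{sum-over-partitions}. Your handling of the $O(1)$-per-query condition and the empty-parts bijection is fine and slightly more careful than the paper's sketch.
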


\begin{proof}[Proof of Theorem~\ref{the:kneser}]
Let $G^{(k)}$ denote the graph obtained by replacing  each 
of its vertices with a clique of size $k$ and replacing each of 
its edges with a complete bipartite graph on $k + k$ vertices. 
For $a \in V(G)$ let $\psi(a)$ denote the set of vertices of the 
clique replacing $v$ in $G^{(k)}$. 

First, we introduce a many to one correspondence between 
elements of $\HOM(G^{(k)}, K_n)$ and $\HOM(G,\KG_{n, k})$. 
Let $\tau:\HOM(G^{(k)}, K_n)\to\HOM(G,\KG_{n, k})$ be defined
by setting $\tau(\vf): V(G) \to\KG_{n, k}$ to be the mapping
$v \mapsto \{\vf(u) | u \in \psi(v) \}$.
Notice that the cardinality of $\{\vf(u) | u \in \psi(v) \}$ equals 
$k$ because $G^{(k)}[\psi(v)]$ is a $k$-clique and $\vf$ is a 
homomorphism from $G^{(k)}$ to $K_n$. Therefore $\tau(\vf)(v)$
is always a vertex of $\KG_{n, k}$.

Next, we prove that for $\vf\in\HOM(G^{(k)}, K_n)$, 
$\vf^*=\tau(\vf)$ is  a homomorphism. Let $ab\in E(G)$ then 
$\vf^*(a) =  \{\vf(u) | u \in \psi(a) \}$ and 
$\vf^*(b)= \{\vf(u)\mid u\in \psi(b)\}$. Since 
$G^{(k)}[\psi(a) \cup \psi(b)]$ is a clique, and $\vf$ is a 
homomorphism $\vf^*(a)$ and $\vf^*(b)$ are two disjoint set 
thus $\vf^*(a)\vf^*(b))$ is an edge of $\KG_{n, k}$. Therefore 
$\vf^*\in\HOM(G,\KG_{n, k})$. 

Next, we prove that for any element $\sg$ of $\HOM(G,\KG_{n,k})$,  
$\tau(\vf) =\sg$ for exactly $(k!)^{|V(G)|}$ homomorphisms 
$\vf\in\HOM(G^{(k)}, K_n)$. Take the set of all   
$\vf\in\MAP(G^{(k)}, K_n)$ 
such that for any $v \in V(G)$ the image of $\psi(v)$ under $\vf$ 
equals $\sg(v)$. First, there are $(k!)^{|V(G)|}$ of them as for 
a fixed $\sg$ and a fixed $v \in V(G)$, the image of any $G[\psi(v)] $ 
under such a mapping, has $k!$ different possibilities.
Second, observe that for any $\vf\in\MAP(G^{(k)}, K_n)$ such 
that for any $v \in V(G)$ the image of $\psi(v)$ under $\vf$  
equals $\sg(v)$, $\vf$ is a homomorphism, we have 
$\tau(\vf)=\sg$. 
Hence, $\vf$ is a surjective $(k!)^{|V(G)|}$ to $1$  mapping from 
$\HOM(G^{(k)}, K_{n})$ to $\HOM(G,\KG_{n, k})$. Therefore 
\[
|\HOM(G,\KG_{n, k})| = \frac{\HOM(G^{(k)}, K_n)}{(k!)^{|V(G)|}}.
\] 
Since there is an algorithm that computes $\hom(G^{(k)}, K_n)$ in 
time $O^*(2^{k|V(G)|})$, there is an algorithm that computes 
$\hom(G,\KG_{n, k})$ in the same time. 
\end{proof}

Next we show that $\KGG_k$ does not have bounded clique 
width. 

\begin{theorem}\label{lem:kneser-unbounded}
The class $\KGG_2$ of Kneser graphs does not have bounded 
clique width. 
\end{theorem}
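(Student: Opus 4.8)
The plan is to reduce the statement to the (standard) fact that the family of $m\times m$ \emph{rook's graphs} $R_{m,m}$ --- vertex set $[m]\times[m]$, two cells adjacent iff they agree in exactly one coordinate --- has unbounded clique width, and to supply the reduction by an explicit induced-subgraph embedding.

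\textbf{Step 1: embedding rook's complements into Kneser graphs.} Fix $m$ and split a $2m$-element ground set as $X\cup Y$ with $X=\{x_1,\dots,x_m\}$ and $Y=\{y_1,\dots,y_m\}$ disjoint. In $\KG_{2m,2}$ consider the $m^2$ vertices $\{x_i,y_j\}$, $i,j\in[m]$. Two of them are disjoint --- hence adjacent in $\KG_{2m,2}$ --- exactly when $i\neq i'$ \emph{and} $j\neq j'$, because an $x$ never equals a $y$. Identifying $\{x_i,y_j\}$ with the cell $(i,j)$, the subgraph of $\KG_{2m,2}$ induced by these vertices is precisely $\overline{R_{m,m}}$. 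Thus $\overline{R_{m,m}}$ is an induced subgraph of $\KG_{2m,2}$ for every $m$. Since clique width cannot increase when one passes to an induced subgraph, and since it is well known that $\mathrm{cwd}(\overline{G})\le 2\,\mathrm{cwd}(G)$ for every graph $G$, a uniform bound $k$ on the clique width of the whole class $\KGG_2$ would force $\mathrm{cwd}(R_{m,m})\le 2k$ for all $m$. Hence it suffices to prove that $\{R_{m,m}\mid m\in\mathbb N\}$ has unbounded clique width.

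\textbf{Step 2: rook's graphs have unbounded clique width.} This may either be quoted (rook's graphs are the line graphs of complete bipartite graphs, which have unbounded tree width, and line graphs of graphs of unbounded tree width have unbounded clique width; it also follows from the unbounded clique width of grids), or proved in the spirit of Lemma~\ref{lem:hcubes}. For the self-contained route, assume $R_{m,m}$ has a $k$-expression and, exactly as in that proof, descend into the larger side of each disjoint union to obtain a subexpression whose vertex set $W$ satisfies $m^2/3<|W|\le 2m^2/3$. Vertices of $W$ carrying the same label at that subexpression are handled identically by every subsequent operator, so they have the same neighbourhood outside $W$; therefore $\bigl|\{\,N(v)\setminus W : v\in W\,\}\bigr|\le k$. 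Now look at the partition $(W,\overline W)$ of the $m\times m$ board and call a row or column \emph{mixed} if it meets both $W$ and $\overline W$. Using $m^2/3<|W|\le 2m^2/3$ one checks that fewer than $m/3$ mixed rows forces the existence of a purely-$W$ row and fewer than $m/3$ mixed columns forces a purely-$\overline W$ column; as these two cannot coexist (their common cell would lie in both $W$ and $\overline W$), at least $m/3$ rows are mixed or at least $m/3$ columns are mixed --- say the former, over a set $I$ of rows with $|I|\ge m/3$. For $i\in I$ pick a cell $v_i=(i,a_i)\in W$; then $N(v_i)\setminus W=(\overline W\cap\text{row }i)\cup(\overline W\cap\text{col }a_i)$ contains a nonempty portion of row $i$, from which the index $i$ can typically be read off, so the $v_i$ realize about $|I|$ pairwise distinct neighbourhood types. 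This yields $k\ge m/3-O(1)$, contradicting that $k$ is a constant once $m$ is large.

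\textbf{Main obstacle.} The sensitive point in the self-contained argument is the last implication of Step~2: deducing ``$\Omega(m)$ distinct neighbourhood types'' from ``$\Omega(m)$ mixed rows''. Recovering the row index from $N(v_i)\setminus W$ fails in degenerate configurations --- for instance when each mixed row contains a single $\overline W$-cell and these singletons all sit in one common column --- so one must either choose the representatives $v_i$ more carefully, or split into a few cases according to how small $\overline W\cap\text{row }i$ is, or also exploit the mixed columns; alternatively one sidesteps the issue entirely by citing the unbounded clique width of rook's graphs (equivalently, of line graphs of complete bipartite graphs), after which Step~1 closes the proof.
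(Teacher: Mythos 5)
Your route is correct (modulo standard citations) but genuinely different from the paper's. The paper also passes through complementation, but in the opposite direction: it invokes Theorem~\ref{inverse-bounded} (closure of bounded clique width under complementation) and then proves directly, in Proposition~\ref{inverse-not-bounded}, that the complements of $\KG_{n,2}$ admit no $k$-expression, by descending into the larger operand of each disjoint union until a subexpression of intermediate size is reached, applying a counting/pigeonhole argument on the ``counts'' $c(v)$ of ground elements to produce three same-labelled vertices $\{v_1,v_2\},\{v_1,v_3\},\{v_1,v_4\}$ and a missing vertex $\{v_2,v_5\}$, and deriving a contradiction because the $\eta_{ij}$ operator that later adds the edge to $\{v_1,v_2\}$ would also add a non-edge to $\{v_1,v_3\}$. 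You instead observe that on the vertices $\{x_i,y_j\}$ the graph $\KG_{2m,2}$ induces exactly $\overline{R_{m,m}}$, and then transfer unboundedness from rook's graphs via heredity of clique width under induced subgraphs and the bound $\mathrm{cwd}(\overline G)\le 2\,\mathrm{cwd}(G)$; unbounded clique width of $R_{m,m}=L(K_{m,m})$ is indeed a known fact (e.g.\ via results on line graphs of graphs of unbounded tree width), so the citation-based version of your Step~2 closes the proof. What your approach buys is modularity and brevity: the Kneser-specific content is reduced to a two-line embedding, and the hard work is outsourced to the literature. What the paper's approach buys is self-containedness (only the complementation closure theorem is quoted) and an argument in the same style as its hypercube lemma (Lemma~\ref{lem:hcubes}), which is why the authors chose it.

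Two caveats. First, your parenthetical ``it also follows from the unbounded clique width of grids'' is not immediate, since the grid is not an induced subgraph of the rook's graph; drop it or justify it, and rely on the line-graph citation. Second, the gap you flag in the self-contained Step~2 is real but repairable: if $v_i=(i,a_i)$ and $v_{i'}=(i',a_{i'})$ with $i\ne i'$ have equal neighbourhoods outside $W$, then the nonempty set $\overline W\cap\mathrm{row}\ i$ must lie inside column $a_{i'}$, hence be the single cell $(i,a_{i'})$, and symmetrically for $i'$; a three-element class would force all the $a_i$ in it to coincide with that common column and put $v_i$ itself in $\overline W$, a contradiction, so each neighbourhood class among the chosen representatives has size at most $2$ and you still get at least $m/6$ types, i.e.\ $k\ge m/6$. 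With either that patch or the citation, your proof is complete.
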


The next theorem is the main ingredient.

\begin{theorem}[\cite{Blanche17:cliquewidth}]\label{inverse-bounded}
Let $\cH$ be a class of graphs with bounded clique width. Then, 
the class of complements of graphs from $\cH$ is also of bounded clique 
width.
\end{theorem}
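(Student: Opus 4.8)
The plan is to route the argument through \emph{rank-width}, a graph width measure that is mutually bounded with clique-width but, unlike clique-width, transforms transparently under complementation. Write $\mathrm{cw}(G)$ and $\mathrm{rw}(G)$ for the clique-width and rank-width of $G$. I would use two standard facts: (i) the sandwich inequality of Oum and Seymour, $\mathrm{rw}(G)\le \mathrm{cw}(G)\le 2^{\,\mathrm{rw}(G)+1}-1$, which says that a class has bounded clique-width if and only if it has bounded rank-width; and (ii) the definition of $\mathrm{rw}(G)$ as the least width of a branch-decomposition of $V(G)$, the width of each tree-edge being the $\mathrm{GF}(2)$-rank of the bipartite adjacency matrix of the cut it induces.

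The heart of the argument is that complementation perturbs every cut-rank by at most one. Fix a vertex bipartition $(X,\overline X)$ of $G$ and let $M$ be the $|X|\times|\overline X|$ $0/1$ matrix recording which pairs are edges of $G$. The corresponding matrix for $\overline G$ is $J-M$, where $J$ is the all-ones matrix; over $\mathrm{GF}(2)$ this is $M+J$, and since $J$ has rank at most $1$ we get $\bigl|\mathrm{rank}(M+J)-\mathrm{rank}(M)\bigr|\le 1$. Crucially, a branch-decomposition is a subcubic tree whose leaves are the vertices of the graph, so it depends only on $V(G)$ and not on $E(G)$; thus an optimal branch-decomposition for $G$ is simultaneously a branch-decomposition for $\overline G$, and by the cut-rank estimate its width as a decomposition of $\overline G$ exceeds its width for $G$ by at most $1$. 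Hence $\mathrm{rw}(\overline G)\le \mathrm{rw}(G)+1$.

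Combining the two ingredients finishes the proof. Suppose every graph in $\cH$ has clique-width at most $k$. By (i), every such graph has $\mathrm{rw}\le k$; by the cut-rank estimate, each complement has $\mathrm{rw}\le k+1$; and by (i) again, each complement has clique-width at most $2^{\,k+2}-1$. Since this bound is a constant depending only on $k$, the class of complements of graphs from $\cH$ has bounded clique-width, as required.

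The step I expect to be the main obstacle is the translation between the two measures: the Oum--Seymour inequality (i) is the only nontrivial imported result, and its exponential upper branch forces the final bound to be exponential in $k$ rather than linear. If one insists on staying entirely within the $k$-expression formalism of Section~\ref{sec:clique-width} and on a linear bound, the alternative is the direct construction of Courcelle and Olariu giving $\mathrm{cw}(\overline G)\le 2\,\mathrm{cw}(G)$: one mirrors a $k$-expression for $G$ into a $2k$-expression for $\overline G$, using $\overline{G[U]}=\overline G[U]$ to make ``complement of a subexpression'' well defined. There the genuine difficulty is the edge-adding operator $\eta_{ij}$, which corresponds to edge \emph{deletion} in the complement -- an operation clique-width lacks; this is resolved by doubling the label set and deferring the joins created at disjoint-union nodes, so that deletions are never actually performed. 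Either route establishes the theorem, but I would present the rank-width argument as the cleaner one.
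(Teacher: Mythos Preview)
The paper does not prove this statement at all: Theorem~\ref{inverse-bounded} is quoted from \cite{Blanche17:cliquewidth} and used as a black box in the proof of Theorem~\ref{lem:kneser-unbounded}, so there is no ``paper's own proof'' to compare against.

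Your rank-width argument is correct and is one of the standard ways to see this fact. The two ingredients you invoke are both genuine: the Oum--Seymour sandwich $\mathrm{rw}(G)\le\mathrm{cw}(G)\le 2^{\mathrm{rw}(G)+1}-1$, and the observation that the cut-rank over $\mathrm{GF}(2)$ changes by at most one under complementation because the bipartite adjacency matrix across any cut becomes $M+J$ with $\mathrm{rank}(J)\le 1$. Reusing the same branch-decomposition tree for $G$ and $\overline G$ is legitimate since only the leaf bijection with $V(G)$ matters, and this yields $\mathrm{rw}(\overline G)\le\mathrm{rw}(G)+1$, hence the stated bound on clique-width of the complement class. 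Your remark about the alternative Courcelle--Olariu construction giving the sharper $\mathrm{cw}(\overline G)\le 2\,\mathrm{cw}(G)$ is also accurate; that is in fact the original source of the result, and the cited paper \cite{Blanche17:cliquewidth} is a later reference rather than the first proof. Either route suffices for the purely qualitative statement the paper needs.
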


\begin{proposition}\label{inverse-not-bounded}
The class of complements of graphs of $\KGG_2$ does not have 
bounded clique width. 
\end{proposition}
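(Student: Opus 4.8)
The plan is to use the identification of the complement of $\KG_{n,2}$ with the triangular graph $L(K_n)$ (vertices: the $2$-element subsets of $[n]$; two adjacent iff they intersect), to exhibit arbitrarily large induced rook's graphs $R_m = K_m\,\square\,K_m$ inside these complements, to prove that $\{R_m : m\ge 1\}$ has unbounded clique width, and then to conclude via the well-known fact that clique width does not increase on induced subgraphs — so a family of induced subgraphs of complements of $\KGG_2$ with unbounded clique width forces the complements of $\KGG_2$ to have unbounded clique width. First I would record the embedding, which is immediate: restricting the complement of $\KG_{2m,2}$ to the $m^2$ vertices $\{i,m+j\}$ with $i,j\in[m]$, two such vertices $\{i,m+j\}$ and $\{i',m+j'\}$ are adjacent exactly when $i=i'$ or $j=j'$ (but not both), i.e.\ exactly as $(i,j)$ and $(i',j')$ are adjacent in $R_m$.

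For the clique-width lower bound on $R_m$ I would argue just as in the proof of Lemma~\ref{lem:hcubes}. Assume $\mathrm{cwd}(R_m)\le k$, fix a $k$-expression $\Phi_0$ for $R_m$, and run the same descent — repeatedly replace a subexpression $\Psi(\Phi_\ell\oplus\Phi_r)$ by whichever of $\Phi_\ell,\Phi_r$ represents more vertices — until reaching a subexpression whose vertex set $A$ satisfies $m^2/3\le|A|\le 2m^2/3$; set $B=V(R_m)\setminus A$, so $|A|,|B|\ge m^2/3$. The only clique-width fact I need is that two vertices of $A$ carrying the same label in that subexpression keep equal labels throughout the rest of $\Phi_0$ (relabelling is uniform), hence acquire exactly the same neighbours in $B$; so $A$ splits into at most $k$ classes $A^{(1)},\dots,A^{(r)}$ ($r\le k$) that behave monolithically across the cut. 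This gives simultaneously that there are at most $k$ distinct sets $N_{R_m}(a)\cap B$ ($a\in A$) and at most $2^k$ distinct sets $N_{R_m}(b)\cap A$ ($b\in B$), each of the latter being a union of some of these classes.

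The combinatorial heart is to turn these two bounds into a contradiction using the grid structure of $R_m$. For $(i,j)\in A$ one has $N_{R_m}((i,j))\cap B=(\mathrm{Row}_i\cap B)\cup(\mathrm{Col}_j\cap B)$; since distinct columns are disjoint and $\mathrm{Col}_j\cap\mathrm{Row}_i=\{(i,j)\}\subseteq A$, two vertices $(i,j),(i,j')\in A$ in the same row with equal $B$-trace must have $\mathrm{Col}_j$ and $\mathrm{Col}_{j'}$ entirely contained in $A$. Hence, writing $J_A$ for the set of columns contained in $A$, each row contributes at most $k$ vertices of $A$ lying outside columns of $J_A$, so $m^2/3\le|A|\le|J_A|\,m+k\,m$, giving $|J_A|\ge m/3-k$. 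The transposed statement for $B$ — now using the $2^k$ bound, with rows in place of columns — produces a set $I_B$ of rows contained entirely in $B$ with $|I_B|\ge m/3-2^k$. Once $m\ge 3\cdot 2^k+3$, both $J_A$ and $I_B$ are nonempty; picking $j^*\in J_A$ and $i^*\in I_B$, the vertex $(i^*,j^*)$ lies in $A$ (its column is contained in $A$) and in $B$ (its row is contained in $B$), which is impossible. Therefore $\mathrm{cwd}(R_m)\to\infty$ with $m$, and so the complements of the graphs in $\KGG_2$ have unbounded clique width.

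I expect the main obstacle to be the clean two-sided ``few cut-traces'' bound. One must be careful that the descent only delivers a subexpression whose internal edge structure is not controlled, so every statement has to speak about the edges of $R_m$ across the cut $(A,B)$ — which $\Phi_0$ does realise exactly — and never about edges inside $A$ or inside $B$. One also has to accept the asymmetric $k$ versus $2^k$ bound on the number of distinct cut-traces on the two sides; this is harmless here, since all that is needed is that both sides are bounded by some function of $k$, but it means the final inequality reads $m\ge 3\cdot 2^k+3$ rather than something linear in $k$.
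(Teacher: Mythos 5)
Your proposal is correct, but it takes a genuinely different route from the paper. You reduce the statement to a family of auxiliary graphs: you observe that $\overline{\KG_{2m,2}}$ contains the rook's graph $K_m\,\square\,K_m$ as an induced subgraph, invoke the (standard, but not used anywhere in the paper) monotonicity of clique width under induced subgraphs, and then prove unboundedness for rook's graphs by a balanced-cut argument — descend to a subexpression holding between one and two thirds of the vertices, bound the number of distinct cut-traces on the two sides by $k$ and $2^k$ respectively, and use the row/column structure to force a vertex lying in both sides of the cut; your trace-disjointness claim and the counting $|A|\le m(|J_A|+k)$, $|B|\le m(|I_B|+2^k)$ all check out. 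The paper instead works directly on $\overline{\KG_{n,2}}$: it runs the same larger-subexpression descent but with a stopping rule based on how many elements of the ground set still have ``count'' $n-1$, and then, mirroring the hypercube argument of Lemma~\ref{lem:hcubes}, exhibits two same-labeled vertices $\{v_1,v_2\},\{v_1,v_3\}$ inside the subexpression and an outside vertex $\{v_2,v_5\}$ that must later be connected to the first; the uniform action of $\eta_{ij}$ then forces the non-edge $\{v_2,v_5\}\{v_1,v_3\}$ (disjoint pairs are non-adjacent in the complement), a contradiction. Your approach buys modularity and an explicit quantitative bound (clique width exceeds $k$ already on an $m\times m$ grid with $m\ge 3\cdot 2^k+3$, hence grows at least logarithmically in $n$), at the price of importing the induced-subgraph fact and the detour through rook's graphs; the paper's argument stays entirely within its own toolkit and keeps this proof structurally parallel to its other unboundedness proof, but is less transparent quantitatively.
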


\begin{proof}
Let $k$ be a constant, let $n$ be a sufficiently large number, 
let $V$ be a set of cardinality $n$, and, for the sake of contradiction, 
let  $\Phi_0$ be a $k$-expression for the complement of a Kneser graph 
$\KG_{n,2}$ whose vertices are 2-element subsets of $V$. 
In particular, this means that each vertex of $\KG_{n,2}$ receives 
one of the $k$ labels. We 
define a finite sequence of $k$-expressions $\{\Phi_i\}$  
which starts with $k$-expression $\Phi_0$.

For $i \ge 0$, the $( i + 1)$-th element of the sequence is defined 
from the $i $-th element of the sequence as follows: if $\Phi_i$ 
represents a graph with more than 1 vertices, it has the form 
$\Psi_i(\Phi_{i,l} \bigoplus \Phi_{i,r})$, where 
$\Psi_i$ is a sequence of recolouring and connecting operators 
and $\Phi_{i,l}$ and $\Phi_{i,r}$ are two $k$-expressions. Then, 
let $\Phi_{i + 1}$ be the $k$-expression from $\{\Phi_{i,l},\Phi_{i,r}\}$
that represents the graph with greatest number of vertices.
Thus, if $G_i$ is the graph represented by $\Phi_i$ and $G_{i+1}$
the graph represented by $\Phi_{i+1}$, then 
$|V(G_i)| \geq |V(G_{i - 1})|/2$. 

Let graph $G$ be a subgraph of $G_0$. For $v \in V$, let 
\textit{count} of $v$ in graph $G$, denoted by $c(v)$, be the 
number of vertices from $G$ of which $v$ is an element. 
Note that each vertex of $G_0$ is a $2$-elements set of elements 
in $V$. Let $G_i$ be the first graph in the sequence such that the 
number of vertices whose count in $G_i$ is $n - 1$ is less 
than $c_0 = (n - 1)/ k $. There are two cases: first, in $G_i$ the
count of every vertex is less than $n - 1$, and second, there is 
at least one vertex in $V$ whose count in $G_i$ is $n - 1$. Also 
note that the count of any $v \in V$ in $G_0$ or any of its subgraphs, 
is less than $n$.

 Let $A$ denote the set of all elements of $V$ whose count  
 is $n - 1$ in $G_{i - 1}$. Then, $|A| \geq c_0$. 
For each element of $A$ there 
are $n - 1$ different vertices of $G_{i - 1}$ which contain it. Thus 
 the sum of $c(v)$ over all $v \in A$ is $(n - 1) |A| $ which is number 
 of all vertices of $G_{i - 1}$ that have one of their elements in $A$ and 
 the other element in $V \setminus A $, plus twice  
 the number of vertices of $G_{ i - 1}$ that have both elements in 
 $A$. Thus the number of vertices of $G_{i - 1}$ that have at least 
 one element in $A$ equals $(n - 1)|A| - \binom{|A|}{2}$. 
Observe that $|V(G_{i - 1})|$ is greater than the number of 
 those vertices of $G_{i - 1}$ that contain an element from $A$. 
 As $\frac{n - 1}{k} \leq |A| \leq n$,
\begin{align*}
|V(G_{i - 1})| \geq (n - 1)|A| - \binom{|A|}{2} = 
|A|\left(n - 1 - \frac{|A|  - 1}{2}\right) \geq \frac{(n - 1)^2}{2k}.
\end{align*}

Therefore 
\begin{align*}
|V(G_{i})| \geq  |V(G_{i - 1})| \setminus 2 \geq \frac{ (n - 1)^2}{4k}.
\end{align*}

Let $v_1$ be an element of $V$ whose count in $G_i$ is the 
largest among all elements of $V$. By the Pigeonhole principle, 
\begin{align*}
c(v_1) \geq \frac{|V(G_i)|}{n}   \geq \frac{2 (n - 1)^2}{4nk}.
\end{align*}

Let $G_i(v)$ be the set of all vertices of $G_i$ that have $v_1$ as 
an element. Again, by the Pigeonhole principle, there are at least 
$c(v_1)/k $ vertices of $G_i(v_1)$ that are labeled with 
the same label. So, for sufficiently large $n$ there are at least $3$ 
vertices in $G_i$ that all have $v_1$ as an element and are labeled 
with the same label in $G_i$. Now, let $\{v_1,v_2\}$, 
$\{v_1, v_3\}$, and $\{v_1, v_4\}$ be different elements of 
$G_i(v_1)$  that are labeled with the same label. 

Now, let us consider the case when the count of each 
vertex of $V$ is less than $n - 1$ in $G_i$. In this case there is 
$v_5 \in V$ such that $\{v_2, v_5\}\not \in V(G_i)$. Clearly, 
$v_5\ne v_3$ or $v_5\ne v_4$. Without loss of generality, 
assume $v_5\ne v_3$. 

In the case when there are vertices with count $n-1$ in $G_i$, 
let $v_1 \in V$ be an element whose count 
in $G_i$ is $n - 1$. Then, by the Pigeonhole principle, there are 
$(n - 1)/k$ elements of $G_i(v_1)$ that have the 
same label. Since there are at least $(n - 1)/k$ elements 
of $G_i(v_1)$ with the same label, and there are at most $c_0$ 
elements of $V$ with count $n - 1$ in $G_i$, there are 
$\{v_1, v_2\}, \{v_1, v_3\}, \{v_1, v_4\} \in V(G_i)$, such that 
$\{v_1, v_2\}$, $\{v_1, v_3\}$, and $\{v_1, v_4\}$ are labeled the 
same label in $G_i$ and also count of $v_2$ is less than $n - 1$ in 
$G_i$. Therefore, there is $v_5 \in V$ such that  
$\{v_2, v_5\} \not \in V(G_i)$ and (say) $v_5 \neq v_3$.

Thus, in both cases, there are $v_1, v_2, v_3,v_5 \in V$ such 
that $\{v_1, v_2\}$ and  $\{v_1, v_3\}$ are from  $V(G_i)$, and also 
are labeled with the same label, and $\{v_2, v_5\} \not \in V(G_i)$, 
and $v_5 \neq v_3$. Since $G_i$ is represented by a subexpression 
of $\Phi_0$ and $\{v_2, v_5\}\not \in G_i$, the vertex $\{v_2, v_5\}$ 
should connect to $\{v_1, v_2\}$ at some later point. Since 
$\{v_1, v_2\}$ and $\{v_1, v_3\}$ both have the same label, 
any operator $\eta_{ij}$ that connects $\{v_2, v_5\}$ to 
$\{v_1, v_2\}$, also connects 
$\{v_2, v_5\}$ to $\{v_1, v_3\}$ that is not an edge of $G_0$. 
A contradiction. 
\end{proof}

\bibliographystyle{plain}

\end{document}